\documentclass[times, 11pt]{shrinkarticle}
\usepackage{latex8}
\usepackage{times,paralist}
\usepackage{gastex}

\sloppy

\usepackage{amsmath,amssymb,amsfonts,amsthm}

\usepackage{multicol}

\newtheorem{definition}{Definition}
\newtheorem{theorem}{Theorem}
\newtheorem{proposition}{Proposition}
\newtheorem{corollary}{Corollary}
\newtheorem{problem}{Problem}
\newtheorem{lemma}{Lemma}
\newtheorem{example}{Example}
\newtheorem{remark}{Remark}
\newtheorem{algorithm}{Algorithm}

\usepackage[all]{xy}

\newcommand{\Inf}{\mathrm{Inf}}
\newcommand{\Reach}{{\text{\textrm{Reach}}}}
\newcommand{\Parity}{{\mathrm{Parity}}}
\newcommand{\CRec}{{\mathrm{CRec}}}

\newcommand{\Supp}{\textrm{Supp}}
\newcommand{\Buchi}{\textrm{B\"uchi}}
\newcommand{\Safe}{\textrm{Safe}}
\newcommand{\coBuchi}{\textrm{coB\"uchi}}
\newcommand{\set}[1]{\{\: #1 \:\}}



\makeatletter

\begingroup \catcode `|=0 \catcode `[= 1
\catcode`]=2 \catcode `\{=12 \catcode `\}=12
\catcode`\\=12 |gdef|@xcomment#1\end{comment}[|end[comment]]
|endgroup

\def\@comment{\let\do\@makeother \dospecials\catcode`\^^M=10\def\par{}}

\def\begincomment{\@comment\@xcomment}

\makeatother

\newenvironment{comment}{\begincomment}{}

\title{Decidable Problems for Probabilistic Automata on Infinite Words\\}

\author{Krishnendu Chatterjee (IST Austria) \and Mathieu Tracol (IST Austria) 
}

\begin{document}

\maketitle
\begin{center}

 
\end{center}
 

\begin{abstract}

We consider probabilistic automata on infinite words with 
acceptance defined by parity conditions.
We consider three qualitative decision problems: 
(i)~the \emph{positive} decision problem asks whether there is a word that 
is accepted with positive probability; 
(ii)~the \emph{almost} decision problem asks whether there is a word that 
is accepted with probability~1; and 
(iii)~the \emph{limit} decision problem asks whether for every $\epsilon>0$ 
there is a word that is accepted with probability at least $1-\epsilon$.
We unify and generalize several decidability results for probabilistic automata 
over infinite words, and identify a \emph{robust} (closed under 
union and intersection) subclass of probabilistic automata 
for which all the qualitative decision problems are decidable for parity 
conditions. 
We also show that if the input words are restricted to lasso shape (regular) words, 
then the positive and almost problems are decidable for all probabilistic 
automata with parity conditions.
For most decidable problems we show an optimal PSPACE-complete complexity
bound.
\end{abstract}

\emph{Keywords:} Probabilistic automata; Parity conditions; Positive, Almost and Limit 
Decision problems.

\newpage

\setcounter{page}{1}
 
\section{Introduction}

\noindent{\bf Probabilistic automata.}
The class of probabilistic automata for finite words was introduced in the 
seminal work of Rabin~\cite{RabinProb63} as an extension of classical 
finite automata.
Probabilistic automata on finite words  have been extensively studied 
(see the book~\cite{paz1971introduction} on probabilistic automata and the survey 
of~\cite{Bukharaev}).
Probabilistic automata on infinite words have been studied recently 
in the context of verification and analysis of reactive 
systems~\cite{baier2005recognizing,baier2008decision,chadha2009power,CSV09}.
We consider probabilistic automata on infinite words with 
acceptance defined by safety, reachability, B\"uchi, 
coB\"uchi, and parity conditions, as they can express all commonly 
used specifications (like safety, liveness, fairness) of verification.

\smallskip\noindent{\bf Qualitative decision problems.}
We consider three \emph{qualitative} decision 
problems for probabilistic automata on infinite words~\cite{baier2008decision,gimbert2010probabilistic}:
given a probabilistic automaton with an acceptance condition, 
(i)~the \emph{positive} decision problem asks whether there is a word that 
is accepted with positive probability (probability $>0$); 
(ii)~the \emph{almost} decision problem asks whether there is a word that 
is accepted almost-surely (with probability~1); and 
(iii)~the \emph{limit} decision problem asks whether for every $\epsilon>0$ 
there is a word that is accepted with probability at least $1-\epsilon$.
The qualitative decision problems for probabilistic automata are the 
generalization of the emptiness and universality problems for deterministic 
automata.

\smallskip\noindent{\bf Decidability and undecidability results.} 
The decision problems for probabilistic automata on finite words have been 
extensively studied~\cite{paz1971introduction,Bukharaev}, and the main results establish the
undecidability of the quantitative version of the decision problems 
(where the thresholds are a rational $0<\lambda<1$, rather than $0$ and $1$).
The undecidability results for the qualitative decision problems for 
probabilistic automata on infinite words are quite recent. 
The results of~\cite{baier2008decision} show that the positive (resp. almost) decision 
problem is undecidable for probabilistic automata with B\"uchi (resp. 
coB\"uchi) acceptance condition, and as a corollary both the positive and 
almost decision problems are undecidable for  parity acceptance conditions 
(as both B\"uchi and coB\"uchi conditions are special cases of parity 
conditions). 
The results of~\cite{baier2008decision} also show that the positive (resp. almost) decision 
problem is decidable for probabilistic automata with coB\"uchi (resp. 
B\"uchi) acceptance condition, and these results have been extended to 
the more general case of stochastic games with imperfect information 
in~\cite{bertrand2009qualitative} and~\cite{gripon2009qualitative}. 
The positive and almost problems are decidable for safety and reachability 
conditions, and also for probabilistic automata over finite words.
For all the decidable almost and positive problems for probabilistic automata 
PSPACE-complete bounds were established in~\cite{CSV09,chadha2009power}.
It was shown in~\cite{gimbert2010probabilistic} that the limit decision problem is undecidable 
even for probabilistic finite automata, and the proof can be easily adapted 
to show that the limit decision problem is undecidable for reachability,
B\"uchi, coB\"uchi and parity conditions (see~\cite{chatterjee2010} for details).

\smallskip\noindent{\bf Decidable subclasses.} The root cause of the 
undecidability results is that for arbitrary probabilistic automata and 
arbitrary input words the resulting probabilistic process is 
complicated. 
As a consequence several researchers have focused on identifying subclasses of 
probabilistic automata where the qualitative decision problems are decidable. 
The work of~\cite{chadha2009power} presents a subclass of probabilistic automata, namely 
hierarchical probabilistic automata (HPA), and show that the positive and 
almost problems are decidable for B\"uchi and coB\"uchi conditions on HPAs. 
The work of~\cite{gimbert2010probabilistic} presents a subclass of probabilistic automata, namely 
$\#$-acyclic automata, and show that the limit reachability problem is decidable for 
this class of automata over finite words. 
The two subclasses HPA and $\#$-acyclic automata are incomparable in 
expressive power.

\smallskip\noindent{\bf Our contributions.} In this work we unify and 
generalize several decidability results for probabilistic automata over 
infinite words, and identify a robust subclass of probabilistic automata 
for which all the qualitative decision problems are decidable for parity 
acceptance conditions. 
For the first time, we study the problem of restricting the structure of 
input words, as compared to the probabilistic automata, and show that 
if the input words are restricted to \emph{lasso shape} words, then the positive
and almost problems are decidable for all probabilistic automata with 
parity acceptance conditions. 
The details of our contributions are as follows.

\begin{compactenum}

\item We first present a very general result that would be the basic 
foundation of the decidability results.
We introduce a notion of \emph{simple} probabilistic process: 
the non-homogeneous Markov chain induced on the state space of a 
probabilistic automaton by an infinite word is simple if the tail 
$\sigma$-field of the process has a particular structure. 
The structure of the tail $\sigma$-field is derived from Blackwell-Freedman-Cohn-Sonin
\emph{decomposition-separation theorem}~\cite{blackwell1964tail,cohn1989products,sonin1996asymptotic} 
on finite non-homogeneous 
Markov chains which generalizes the classical results on homogeneous 
Markov chains.

\item We then show that if we restrict the input words of a probabilistic 
automaton to those which induce simple processes, then the positive and almost 
decision problems are decidable for parity conditions. 
We establish that these problems are PSPACE-complete.

\item We study for the first time the effect of restricting the 
structure of input words for probabilistic automata, rather than 
restricting the structure of probabilistic automata. 
We show that for all ultimately periodic (regular or lasso shape) words and for all 
probabilistic automata, the probabilistic  process induced is a simple one. 
Hence as a corollary of our first result, we obtain that if we restrict to 
lasso shape words, then the positive and almost decision problems are 
decidable (PSPACE-complete) for all probabilistic automata with parity conditions.
However, the limit decision problem for the reachability condition 
is still undecidable for lasso shape 
words, as well as for the
B\"uchi and coB\"uchi conditions.

\item We then introduce the class of \emph{simple probabilistic automata} 
(for short simple automata): a probabilistic 
automaton is simple if every input infinite words induce simple processes on its 
state space. 
This semantic definition of simple automata uses the decomposition-separation theorem.
We present a \emph{structural (or syntactic)} subclass of the class of 
simple automata, called \emph{structurally simple automata}, which relies on the structure 
of the \emph{support graph} of the automata (the support graph is obtained via
subset constructions of the automata). 
We show that the class of structurally simple automata generalizes both the models of HPA and $\#$-acyclic 
automata.
Since HPA generalizes deterministic automata, it follows that structurally simple 
automata with parity conditions strictly generalizes $\omega$-regular languages.  
We show that for structurally simple automata with parity conditions, the positive and almost 
problems are PSPACE-complete, and the limit problem can be decided in EXPSPACE.
Thus our results both unify and generalize two different results for 
decidability of subclasses of probabilistic automata.
Moreover, we show that structurally simple automata are \emph{robust}, i.e., closed under 
union and intersection.
Thus we are able to identify a robust subclass of probabilistic automata 
for which all the qualitative decision problems are decidable for parity 
conditions.
From our structural characterization it also follows that given a probabilistic automaton, it can be
decided in EXPSPACE whether the automaton is structurally simple.

\end{compactenum}
In this paper we use deep results from probability theory to establish 
general results about the decidability of problems on probabilistic automata. 
We present a sufficient structural condition to ensure semantic notions (of an 
induced probabilistic process being simple) coming from probability theory in the 
context of probabilistic automata. The proofs omitted due to lack of space 
are given in appendix.

\section{Preliminaries}


\noindent{\bf Distributions.} Given a finite set $Q$, we denote by $\Delta(Q)$ the set of probability distributions on $Q$. 
Given $\alpha\in\Delta(Q)$, we denote by $\mathrm{Supp}(\alpha)$ the support of $\alpha$, i.e. $\mathrm{Supp}(\alpha)=\lbrace q\in Q\ |\ \alpha(q)>0\rbrace$. 

\smallskip\noindent{\bf Words and prefixes.}
Let $\Sigma$ be a finite \emph{alphabet} of letters. 
A \emph{word} $w$ is a finite or infinite sequence of letters from 
$\Sigma$, i.e., $w \in\Sigma^*$ or $w \in \Sigma^\omega$.
Given a word $w=a_1,a_2...\in\Sigma^\omega$ and $i\in\mathbb{N}$, we define $w(i)=a_i$, 
and we denote by $w[1..i]=a_1,...,a_i$ the prefix of length $i$ of $w$. 
Given $j\geq i$, we denote by $w[i..j]=a_i,...,a_j$ the subword of $w$ 
from index $i$ to $j$.
An infinite word $w\in\Sigma^\omega$ is a \emph{lasso shape word} if there exist 
two finite words $\rho_1$ and $\rho_2$ in $\Sigma^*$ such that $w=\rho_1\cdot\rho_2^\omega$.

\begin{definition}[Finite Probabilistic Table (see \cite{paz1971introduction})]
A \emph{Finite Probabilistic Table} (FPT) is a tuple $\mathcal{T}=(Q,\Sigma,\lbrace M_a\rbrace_{a\in\Sigma},\alpha)$ 
where $Q$ is a finite set of states, $\Sigma$ is a finite alphabet, $\alpha$ is an initial distribution on $Q$, 
and the $M_a$, for $a\in\Sigma$, are Markov matrices of size $|Q|$, 
i.e., for all $q, q' \in Q$ we have $M_a(q,q') \geq 0$
and for all $q \in Q$ we have $\sum_{q' \in Q} M_a(q,q') =1$.
\end{definition}

\noindent{\bf Distribution generated by words.}
For a letter $a \in \Sigma$, let $\delta(q,a)(q')=M_a(q,q')$ 
denote the transition probability from $q$ to $q'$ given the 
input letter $a$.
Given $\beta\in\Delta(Q)$, $q\in Q$ and $\rho\in\Sigma^*$, let $\delta(\beta,\rho)(q)$ be the probability, 
starting from a state sampled accordingly to $\beta$ and reading the input word $\rho$, to go to state $q$. 
Formally, given $\rho=a_1,...,a_n\in\Sigma^*$, let $M_\rho=M_{a_1}\cdot M_{a_2}\cdot...\cdot M_{a_n}$. 
Then $\delta(\beta,\rho)(q)=\sum_{q'\in Q}\beta(q')\cdot M_{\rho}(q',q)$.
We often write $\delta(\beta,\rho)$ instead of $\mathrm{Supp}(\delta(\beta,\rho))$, 
for simplicity: $\delta(\beta,\rho)$ is the set of states reachable with positive probability when starting 
from distribution $\beta$ and reading $\rho$. As well, given $H\subseteq Q$, we write $\delta(H,\rho)$ 
for the the set of states reachable with positive probability when starting from a state in 
$H$ sampled uniformly at random, and reading $\rho$.

\smallskip\noindent{\bf Homogeneous and non-homogeneous Markov chains.} A Markov chain is a sequence of random variables $X_0, X_1, X_2,...$, taking values in a (finite) set $Q$, with the Markov property: $\mathbb{P}(X_{n+1}=x|X_1=x_1, X_2=x_2, \ldots, X_n=x_n) = \mathbb{P}(X_{n+1}=x|X_n=x_n)$. Given $n\in\mathbb{N}$, the matrix $M_n$ of size $|Q|$ such that for all $q,q'\in Q$ we have $M_n(q,q')=\mathbb{P}(X_{n+1}=q'|X_n=q)$ is the \emph{transition matrix at time $n$} of the chain. The Markov chain is \emph{homogeneous} if $M_n$ does not depend on $n$. In the general case, we call the chain \emph{non-homogeneous}.

\smallskip\noindent{\bf Induced Markov chains.}
Given a FPT with state space $Q$, given $G\subseteq Q$ and $\rho=a_0,...,a_{m-1}\in\Sigma^*$ such that $\delta(G,\rho)\subseteq G$, we define the Markov chain $\lbrace X_n\rbrace_{n\in\mathbb{N}}$ induced by $(G,\rho)$ as follows: the initial distribution, i.e. the distribution of $X_0$, is uniform on $G$; given $i\in\mathbb{N}$, $X_{i+1}$ is distributed according to $\delta(X_i,a_{i\ \mathrm{mod}\ m})(-)$. Intuitively, $\lbrace X_n\rbrace_{n\in\mathbb{N}}$ is the Markov chain induced on the FPT when reading the word $\rho^\omega$.

\smallskip\noindent{\bf Probability space and $\sigma$-field.}
A word $w\in\Sigma^\omega$ induces a probability space $(\Omega,\mathcal{F},\mathbb{P}^w)$: $\Omega=Q^\omega$ 
is the set of \emph{runs}, $\mathcal{F}$ is the $\sigma$-field generated by cones of the type 
$C_\rho=\lbrace r\in Q^\omega\ |\ r[1..|\rho|]=\rho\rbrace$ where $\rho\in Q^*$, and $\mathbb{P}^{w}$ is the associated probability 
distribution on $\Omega$. See \cite{vardi1985avp} for the standard results on this topic. 
We write $\lbrace X_n^w\rbrace_{n\in\mathbb{N}}$ for the \emph{non-homogeneous} Markov chain induced 
on $Q$ by $w$, and given $n\in\mathbb{N}$ let $\mu_n^w$ be the distribution of $X_n^w$ on $Q$:
\[\mathrm{Given}\ q\in Q,\ \  \mu_n^w(q)=\mathbb{P}^{w}[\lbrace r\in\Omega\ |\ r(n)=q\rbrace]\]
The $\sigma$-field $\mathcal{F}$ is also the smallest $\sigma$-field on $\Omega$ with respect to which 
all the $X_n^w,\ n\in\mathbb{N}$, are measurable. For all $n\in\mathbb{N}$, 
let $\mathcal{F}_n=\mathcal{B}(X_n^w,X_{n+1}^w,...)$ be the smallest $\sigma$-field on $\Omega$ with 
respect to which all the $X_i^w,\ i\geq n$, are measurable. 
We define $\mathcal{F}_\infty=\bigcap_{n\in\mathbb{N}}\mathcal{F}_n$, called the \emph{tail $\sigma$-field of 
$\lbrace X_n^w\rbrace$}. 
Intuitively, an event $\Gamma$ is in $\mathcal{F}_\infty$ if changing a finite number of states of a 
run $r$ does not affect the occurrence of the run $r$ in $\Gamma$.

\smallskip\noindent{\bf Atomic events.} 
Given a probability space $(\Omega,\mathcal{F},\mathbb{P})$ 
and $\Gamma\in\mathcal{F}$, we say that $\Gamma$ is 
\emph{$\mathcal{F}$-atomic} if $\mathbb{P}(\Gamma)>0$, and for all $\Gamma'\in\mathcal{F}$ such that 
$\Gamma'\subseteq\Gamma$ we have either $\mathbb{P}(\Gamma')=0$ or $\mathbb{P}(\Gamma')=\mathbb{P}(\Gamma)$. 
In this paper we will use atomic events in relation to the tail $\sigma$-field of Markov chains.

\smallskip\noindent{\bf Acceptance conditions.}
Given a FPT, let $F \subseteq Q$ be a 
set of accepting (or target) states. 
Given a run $r$, we denote by $\Inf(r)$ the set of 
states that appear infinitely often in $r$.
We consider the following acceptance conditions.
\begin{compactenum}

\item \emph{Safety condition.} 
The safety condition $\Safe(F)$ defines the set of paths 
that only visit states in $F$; i.e., 
$\Safe(F) =\set{(q_0,q_1,\ldots) \mid \forall i \geq 0. \ q_i \in F}$.

\item \emph{Reachability condition.}
The reachability condition $\Reach(F)$ defines the set of paths 
that visit states in $F$ at least once; i.e., 
$\Reach(F) =\set{(q_0,q_1,\ldots) \mid \exists i \geq 0. \ q_i \in F}$.

\item \emph{B\"uchi condition.} 
The B\"uchi condition $\Buchi(F)$ defines the set of paths 
that  visit states in $F$ infinitely often; i.e., 
$\Buchi(F) =\set{r \mid \Inf(r) \cap F \neq \emptyset}$.

\item \emph{coB\"uchi condition.}
The coB\"uchi condition $\coBuchi(F)$ defines the set of paths 
that  visit states outside $F$ finitely often; i.e., 
$\coBuchi(F) =\set{r \mid \Inf(r) \subseteq F}$.

\item \emph{Parity condition.} 
The parity condition consists of a priority function 
$p: Q \to \mathbb{N}$ and defines the set of paths 
such that the minimum priority visited infinitely often 
is even, i.e.,
$\Parity(p) =\set{r \mid \min(p(\Inf(r))) \text{ is even}}$.
B\"uchi and coB\"uchi conditions are special cases of parity conditions 
with two priorities (priority set $\set{0,1}$ for B\"uchi and $\set{1,2}$ 
for coB\"uchi).

\end{compactenum}

\smallskip\noindent{\bf Probabilistic automata.} 
A \emph{Probabilistic Automaton} (PA) is a tuple $\mathcal{A}=(\mathcal{T},\Phi)$ 
where $\mathcal{T}$ is a FPT and $\Phi$ is an acceptance condition.

\smallskip\noindent{\bf Decision problems.} Let $\mathcal{A}$ be a PA with 
acceptance condition $\Phi:\Omega\rightarrow \set{0,1}$.
We consider the following decision problems.
\begin{compactenum}
 \item \emph{Almost problem:} Whether there exists $w\in\Sigma^\omega$ such that $\mathbb{P}^w_\mathcal{A}(\Phi)=1$?
 \item \emph{Positive problem:} Whether there exists $w\in\Sigma^\omega$ such that $\mathbb{P}^w_\mathcal{A}(\Phi)>0$?
 \item \emph{Limit problem:} Whether for all $\epsilon>0$, there exists $w\in\Sigma^\omega$ such that $\mathbb{P}^w_\mathcal{A}(\Phi)>1-\epsilon$?
\end{compactenum}


\noindent Proposition \ref{p-summ known} summarizes the known results from~\cite{baier2008decision,chatterjee2010,gimbert2010probabilistic,CSV09,chadha2009power}.

\begin{proposition}
\label{p-summ known} 
Given a PA with an acceptance condition $\Phi$, the following assertions hold:
\begin{compactenum}
\item  The almost problem is decidable (PSPACE-complete) for $\Phi=$ safety, reachability, B\"uchi, and undecidable for $\Phi=$ co-B\"uchi and parity.
 \item The positive problem is decidable (PSPACE-complete) for $\Phi=$ safety, reachability, co-B\"uchi, and undecidable for $\Phi=$ B\"uchi and parity.
 \item The limit problem is decidable (PSPACE-complete) for $\Phi=$ safety, and undecidable for $\Phi=$ reachability, B\"uchi, co-B\"uchi, and parity.
\end{compactenum}
\end{proposition}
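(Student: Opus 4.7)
The plan is to verify each of the three items by citing the relevant prior work and briefly indicating the technique used; no new idea is needed since \ref{p-summ known} is explicitly stated as a summary of results from \cite{baier2008decision,chatterjee2010,gimbert2010probabilistic,CSV09,chadha2009power}.

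For the decidable entries, the common strategy is a \emph{support (subset) construction}: since the three modes ask only about qualitative probabilities, it suffices to track the support $\Supp(\mu_n^w) \subseteq Q$ of the induced distribution after each prefix rather than its numerical values. This reduces positive-safety, almost-safety and limit-safety to a reachability/avoidance question on the exponential-size support automaton; it reduces positive-reachability and almost-reachability to reachability questions on the same object; and it reduces almost-B\"uchi and positive-co-B\"uchi to a repeated-reachability/recurrence analysis of the supports, since in those modes the event reduces to forcing visits to, respectively, escaping from, certain ``recurrent'' support classes. These constructions were carried out in \cite{baier2008decision} for the B\"uchi/co-B\"uchi cases; the matching PSPACE upper bounds are obtained in \cite{CSV09,chadha2009power} by guessing lassos in the support automaton on the fly using only polynomial space, and the matching PSPACE lower bounds are inherited from the universality problem for NFAs.

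For the undecidable entries I would invoke the reductions from the emptiness (cut-point) problem for probabilistic finite automata, which is undecidable by Rabin/Paz (see \cite{paz1971introduction}). The construction in \cite{baier2008decision} embeds a PFA with cut-point $\lambda$ into a probabilistic B\"uchi automaton so that some infinite word is accepted with positive probability if and only if some finite word is accepted above the cut-point; the dual construction yields undecidability of the almost problem for co-B\"uchi. Parity inherits undecidability in both modes, since B\"uchi and co-B\"uchi are parity conditions with priority sets $\{0,1\}$ and $\{1,2\}$. For the limit problem, \cite{gimbert2010probabilistic} proves undecidability of the value-$1$ problem for PFAs on finite words, and a small adaptation, detailed in \cite{chatterjee2010}, transfers this to limit-reachability and hence to limit-B\"uchi, limit-co-B\"uchi, and limit-parity.

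The main obstacle is not any single proof but bookkeeping: one must carefully check that the PSPACE bound survives the dualities that switch between positive and almost modes (the support construction is not literally self-dual), and that the limit-safety algorithm runs in polynomial space despite the non-homogeneous dynamics described in the preliminaries. All of this is already done in the cited references, so assembling the statement reduces to matching each of the nine entries with the corresponding result there.
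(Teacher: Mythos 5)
Your proposal is correct and matches the paper's treatment: the paper offers no proof of Proposition~\ref{p-summ known} at all, presenting it purely as a summary of the cited results from \cite{baier2008decision,chatterjee2010,gimbert2010probabilistic,CSV09,chadha2009power}, and your attribution of each of the nine entries to the right source and technique (support constructions for the decidable cases, PFA cut-point and value-1 reductions for the undecidable ones, with parity inheriting undecidability from B\"uchi/coB\"uchi as special cases) is accurate. The only simplification you missed is that limit-safety needs no separate algorithm, since for safety conditions the limit and almost problems coincide, as the paper itself notes later.
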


\section{Simple Processes}\label{s-simple process}

In this section we first recall the decomposition-separation theorem, 
then use it to decompose the tail $\sigma$-field of stochastic processes into atomic events.
We then introduce the notion of simple processes, which are stochastic 
processes where the atomic events obtained using the decomposition-separation
theorem are \emph{non-communicating}.

\subsection{The Decomposition Separation Theorem and tail $\sigma$-fields}

The structure of the tail $\sigma$-field of a general non-homogeneous Markov chain 
has been deeply studied by mathematicians.
Blackwell and Freedman, in \cite{blackwell1964tail}, presented a generalization of the classical \emph{decomposition theorem} 
for homogeneous Markov chains, in the context of non-homogeneous Markov chains with finite state spaces. 
The work of Blackwell and Freedman has been deepened by Cohn \cite{cohn1989products} and Sonin \cite{sonin1996asymptotic}, 
who gave a more complete picture. 
We present the results of \cite{blackwell1964tail,cohn1989products,sonin1996asymptotic} in the framework of 
\emph{jet decompositions} presented in \cite{sonin1996asymptotic}.

\smallskip\noindent{\bf Jets and partition into jets.}
A \emph{jet} is a sequence $J=\{J_i\}_{i \in \mathbb{N}}$, where each $J_i \subseteq Q$.
A tuple of jets $(J^0,J^1,...,J^c)$ is called a \emph{partition of $Q^\omega$ into jets} if for 
every $n\in\mathbb{N}$, we have that $J^0_n,J^1_n,...,J^c_n$ is a partition of $Q$. 
The \emph{Decomposition-Separation Theorem}, in short \emph{DS-Theorem}, proved by Cohn~\cite{cohn1989products} and Sonin~\cite{sonin1996asymptotic} 
using results of \cite{blackwell1964tail}, 
is given in Theorem~\ref{t-decomp}.
We first define the notion of \emph{mixing} property of jets. 

\smallskip\noindent{\bf Mixing property of jets.} 
Given a FPT $\mathcal{A}$, a jet $J=\{J_i\}_{i\in \mathbb{N}}$ is \emph{mixing} for a word $w$
if: given $X^w_n,n\geq0$ the process induced on $Q$ by $w$, given $q,q'\in Q$, and a sequence of states $\lbrace q_i\rbrace_{i\in\mathbb{N}}$ such that for all 
$i \geq 0$ we have $q_i\in J_i$, given $m\in\mathbb{N}$, if $\mathrm{lim}_n\mathbb{P}^w[X^w_n=q_n\ |\ X^w_m=q]>0$ and $\mathrm{lim}_n\mathbb{P}^w[X^w_n=q_n\ |\ X^w_m=q']>0$,
then we have:
\[\mathrm{lim}_{n\rightarrow\infty}\dfrac{\mathbb{P}^w[X^w_n=q_n\ |\ X^w_n\in J^k_n\wedge X^w_m=q]}{\mathbb{P}^w[X^w_n=q_n\ |\ X^w_n\in J^k_n\wedge X^w_m=q']}=1\]

Intuitively, a jet is mixing if the probability distribution of a state of the process, conditioned to the fact that this state belongs to the jet, is ultimately independent of the initial state. This extends the notion of mixing process on homogeneous ergodic Markov chains, on which the distribution of a state of the process after a number of steps is close to the stationary distribution, irrespective of the initial state.

\begin{theorem}[The Decomposition-Separation (DS) Theorem \cite{blackwell1964tail,cohn1989products, sonin1996asymptotic}]\label{t-decomp}
Given a FPT $\mathcal{A}=(Q,\Sigma,\lbrace M_a\rbrace_{a\in\Sigma},\alpha)$, 
for all $w\in\Sigma^\omega$ there exists $c\in\set{1,2,\ldots,|Q|}$ and a partition $(J^0,J^1,...,J^c)$ of $Q^\omega$ into jets 
such that:
\begin{compactenum}
 \item With probability one, after a finite number of steps, a run $r\in\Omega$ enters into one of the jets $J^k$, $k\in\set{1,2,\ldots,c}$ and stays there forever.
 \item For all $k \in \set{1,2,\ldots,c}$ the jet $J^k$ is mixing. 
\end{compactenum}
\end{theorem}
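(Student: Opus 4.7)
The plan is to build the jets from the atoms of the tail $\sigma$-field $\mathcal{F}_\infty$ of the non-homogeneous chain $\{X^w_n\}$. The first step is to show that $\mathcal{F}_\infty$ has only finitely many atoms $A_1,\ldots,A_c$ with $c\leq |Q|$: for each atom $A_k$ and each $n$ the ``carrier'' $S^k_n=\{q:\mathbb{P}^w[A_k\cap\{X^w_n=q\}]>0\}$ is nonempty, and a pigeonhole argument (two distinct atoms cannot persistently share a carrier state without being merged via the Markov property) produces the bound. Given the atoms, I would set
\[
J^k_n = \{q \in Q : \mathbb{P}^w[A_k \mid X^w_n = q] = 1\}, \qquad k \geq 1,
\]
and $J^0_n = Q \setminus (J^1_n \cup \cdots \cup J^c_n)$; by construction these sets are pairwise disjoint, so $(J^0,J^1,\ldots,J^c)$ is a partition of $Q^\omega$ into jets.

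For clause~(1) I would invoke L\'evy's zero--one law: $\mathbb{P}^w[A_k \mid \mathcal{F}_n] \to \mathbf{1}_{A_k}$ almost surely, and since $A_k \in \mathcal{F}_\infty$ the Markov property makes this conditional probability $\sigma(X^w_n)$-measurable, so convergence happens at the level of states. Almost every run belongs to exactly one $A_k$, and on that event $\mathbb{P}^w[A_k \mid X^w_n] \to 1$. Because at each time only finitely many values of this conditional probability are realized (one per state of $Q$), the limit must in fact be \emph{attained} from some finite step on, i.e.\ $X^w_n \in J^k_n$ eventually and forever.

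For clause~(2), I would condition the chain on the atom $A_k$: since $A_k$ is an atom of $\mathcal{F}_\infty$, the conditioned process has trivial tail $\sigma$-field. For such a tail-trivial finite-state non-homogeneous Markov chain an asymptotic loss-of-memory property holds, and the ratio appearing in the definition of mixing is precisely the value at $q_n$ of the Radon--Nikodym derivative of the law of $X^w_n$ started from $q$ at time $m$ with respect to the law started from $q'$ at time $m$, restricted to $J^k_n$. A backwards-martingale argument on this derivative, combined with tail triviality on the conditioned chain, forces the ratio to converge to~$1$. The main obstacle is precisely this final step: bounding the atom count and proving clause~(1) reduce to bookkeeping once L\'evy's law and the finiteness of $Q$ are in hand, but establishing the mixing limit requires the delicate conditional-chain analysis of \cite{blackwell1964tail,cohn1989products,sonin1996asymptotic}, which is what rules out persistent disagreement between two asymptotically equivalent conditional laws started from different states.
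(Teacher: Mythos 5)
First, a point of reference: the paper does not prove Theorem~\ref{t-decomp} at all --- it is imported from the cited works of Blackwell--Freedman, Cohn and Sonin, and the only place the underlying machinery surfaces is in the appendix proof of Proposition~\ref{p-tau atomics}, where Cohn's Proposition~2.1 (every tail event is an almost-sure limit of events of the form $\lbrace r : r(n)\in L_n\rbrace$) is rederived from L\'evy's law and the Markov property. Your high-level plan --- take the atoms of $\mathcal{F}_\infty$, bound their number by $|Q|$, and carve the jets out of the conditional probabilities $\mathbb{P}^w[A_k\mid X^w_n=q]$ --- is the standard route and matches the structure of those cited proofs.

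There is, however, a concrete gap in your construction of the jets and hence in clause~(1). You set $J^k_n=\lbrace q:\mathbb{P}^w[A_k\mid X^w_n=q]=1\rbrace$ and then argue that, because only finitely many values of the conditional probability are realized at each time, the L\'evy limit $\mathbb{P}^w[A_k\mid X^w_n]\to 1$ on $A_k$ must be \emph{attained} after finitely many steps. That inference is false: having at most $|Q|$ realized values at each fixed $n$ says nothing about whether any of them equals $1$; the relevant value can be $1-2^{-n}$ at time $n$ for every $n$, in which case your $J^k_n$ is empty for all $n$ and clause~(1) fails outright. The correct construction (the one in Cohn's Proposition~2.1, reproduced in the paper's appendix) fixes a threshold $0<\lambda<1$ and takes $L_n=\lbrace q:\mathbb{P}^w[A_k\mid X^w_n=q]>\lambda\rbrace$, for which one only obtains $\lim_{n}\lbrace r: r(n)\in L_n\rbrace=A_k$ almost surely; upgrading that approximate statement to a genuine partition into eventually-absorbing jets (disjointifying across atoms, ruling out runs that oscillate in and out) is part of the real content of the DS theorem. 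Your treatment of clause~(2) correctly identifies it as the delicate step but is in the end a pointer back to \cite{blackwell1964tail,cohn1989products,sonin1996asymptotic}; since the paper itself defers the entire theorem to those references, the fair summary is that your proposal reproduces the known outline with one broken step rather than supplying a self-contained proof.
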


Theorem~\ref{t-decomp} holds even if $\Sigma$ is infinite: it is valid for any non-homogeneous Markov 
chain on a finite state space. 
In this paper we will focus on finite alphabets only.

\smallskip\noindent{\bf Remark.}
We note that for all $i\in \set{1,2,\ldots,c}$,
 either $\mu_n^w(J^i_n)\rightarrow_{n\rightarrow\infty}0$ or there exists $\lambda_i>0$ such that for $n$ 
large enough $\mu_n^w(J^i_n)>\lambda_i$. Indeed, if $\mu_n^w(J^i_n)\not\rightarrow_{n\rightarrow\infty}0$ but 
there exists a subsequence of $\lbrace \mu_n^w(J^i_n)\rbrace_{n\in\mathbb{N}}$ which goes to zero, 
then a non zero probability of runs enter $J^i_n$ and leave it afterward infinitely often, which contradicts
the first point of Theorem~\ref{t-decomp}.
Thus, we can always assume that there exists $\lambda>0$ such that for all $i\in\set{1,2,\ldots,c}$, 
for $n$ large enough, we have $\mu^w_n(J^i_n)>\lambda$. If this is not the case, we just merge 
the jets $J^i$ such that $\mu_n^w(J^i_n)\rightarrow_{n\rightarrow\infty}0$ with $J^0$, which does not 
invalidate the properties of the jet decomposition stated by Theorem \ref{t-decomp}.

For the following of the section, we fix $w\in\Sigma^\omega$ and a partition $J^0,J^1,...,J^c$ of $Q^\omega$ as in the DS Theorem. Given $i\in\set{1,2,\ldots,c}$ and $n\in\mathbb{N}$, let:
\[\tau^i_n=\lbrace r\in\Omega\ |\ r(i)\in J^i_n\rbrace,\ \ \mathrm{and}\ \ \tau_\infty^i=\cup_{N\in\mathbb{N}}\cap_{n\geq N}\tau^i_n\]
We now present a result directly from our formulation of the DS Theorem 
(the result can also be proved using more general results of \cite{cohn1989products}). 

\begin{proposition}\label{p-tau atomics}
For all $i\in\set{1,2,\ldots,c}$, the following assertions hold:
(1)~$\tau_\infty^i\in\mathcal{F}_\infty$, i.e., $\tau^\infty_i$ is a tail $\sigma$-field 
event;
(2)~$\tau_\infty^i$ is $\mathcal{F}_\infty$-atomic; i.e.,
$\tau^\infty_i$ is an atomic tail event; and
(3)~$\mathbb{P}^{w}(\bigcup_{i=1}^c\tau^i_\infty)=1$.
\end{proposition}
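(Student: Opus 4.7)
Claims (1) and (3) are short. For (1), each set $\bigcap_{n \geq N} \tau^i_n$ is measurable with respect to $\mathcal{F}_N = \sigma(X^w_N, X^w_{N+1}, \ldots)$, and the family $\{\bigcap_{n \geq N} \tau^i_n\}_N$ is monotone increasing in $N$. Consequently, for every $M$, $\tau^i_\infty = \bigcup_{N \geq M} \bigcap_{n \geq N} \tau^i_n$ lies in $\mathcal{F}_M$, and since $M$ is arbitrary, $\tau^i_\infty \in \bigcap_M \mathcal{F}_M = \mathcal{F}_\infty$. For (3), by construction $\tau^k_\infty$ is exactly the event ``the run eventually enters $J^k$ and remains there forever''; the first conclusion of Theorem~\ref{t-decomp} is precisely that these events exhaust $\Omega$ up to a null set. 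The main obstacle is (2), the atomicity, which requires the mixing property.

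For (2), I would proceed by contradiction: assume $\Gamma \in \mathcal{F}_\infty$ with $\Gamma \subseteq \tau^i_\infty$ and $0 < \mathbb{P}^w(\Gamma) < \mathbb{P}^w(\tau^i_\infty)$, so that $\Gamma^* := \tau^i_\infty \setminus \Gamma$ also has positive probability. Both $\Gamma$ and $\Gamma^*$ are tail events, and by the Markov property the conditional probabilities depend only on the current state: set $\phi_n(q) := \mathbb{P}^w(\Gamma \mid X^w_n = q)$ and $\phi^*_n(q) := \mathbb{P}^w(\Gamma^* \mid X^w_n = q)$ for $q$ with $\mu^w_n(q) > 0$. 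L\'evy's upward martingale convergence theorem applied to $\mathcal{G}_n = \sigma(X^w_0, \ldots, X^w_n)$ gives $\phi_n(X^w_n) \to \mathbf{1}_\Gamma$ and $\phi^*_n(X^w_n) \to \mathbf{1}_{\Gamma^*}$ almost surely. Since $\Gamma, \Gamma^* \subseteq \tau^i_\infty$, typical runs in each set eventually occupy states of $J^i_n$; hence for all sufficiently large $n$ there exist $q \in J^i_n$ with $\phi_n(q)$ near $1$ and $q' \in J^i_n$ with $\phi_n(q')$ near $0$, each reached with positive probability.

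To exploit mixing, I pick $m$ large and two conditioning states $q_1, q_2$ with $\mu^w_m(q_\ell) > 0$ and $\mathbb{P}^w(\tau^i_\infty \mid X^w_m = q_\ell) > 0$, chosen so that the conditional process from $q_1$ (resp. $q_2$) delivers trajectories in $\Gamma$ (resp. $\Gamma^*$) with positive probability; such states exist because $\mathbb{P}^w(\Gamma) = \sum_q \mu^w_m(q)\,\phi_m(q) > 0$ and analogously for $\Gamma^*$. The tower and Markov properties give
\[
\phi_m(q_\ell) = \sum_{q \in J^i_n} \phi_n(q)\,\mathbb{P}^w(X^w_n = q \mid X^w_m = q_\ell) + \varepsilon_n(q_\ell),
\]
where the error $\varepsilon_n(q_\ell)$ collects contributions from $q \notin J^i_n$ and vanishes as $n \to \infty$ by part~1 of Theorem~\ref{t-decomp} (the run almost surely stabilizes in its jet). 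By the Remark following Theorem~\ref{t-decomp}, $\mathbb{P}^w(X^w_n \in J^i_n \mid X^w_m = q_\ell)$ is bounded below by some positive constant for large $n$, so one may normalize and rewrite the sum using weights $\mathbb{P}^w(X^w_n = q \mid X^w_n \in J^i_n, X^w_m = q_\ell)$. Mixing forces the weights for $\ell = 1, 2$ to have ratios tending to $1$ along every summand, so the two weighted averages of $\phi_n$ must coincide in the limit; but our choice of $q_1, q_2$ pushes $\phi_m(q_1)$ close to $\mathbb{P}^w(\tau^i_\infty \mid X^w_m = q_1) > 0$ and $\phi_m(q_2)$ close to $0$, a contradiction.

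The hardest part will be bridging the pointwise ratio formulation of mixing in Theorem~\ref{t-decomp} and the uniform bound needed for the weighted averages above, together with the selection of $q_1, q_2$ that simultaneously have positive asymptotic mass in $J^i$ and separate $\Gamma$ from $\Gamma^*$; the Remark's uniform lower bound $\mu^w_n(J^i_n) > \lambda_i$ and a careful partition of $Q$ according to the limits of $\phi_n(X^w_n)$ are the tools I expect to rely on to close this gap.
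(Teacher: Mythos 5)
Your proposal follows essentially the same route as the paper's proof: parts (1) and (3) are argued identically (monotone union of $\mathcal{F}_N$-measurable sets, and the first conclusion of the DS-Theorem, respectively), and for part (2) you use the same three ingredients the paper does --- L\'evy's upward convergence combined with the Markov property to find, for large $n$, states whose conditional probability of $\Gamma$ is near $1$ and near $0$, followed by the mixing property of the jet to force a contradiction. The one step you flag as open (passing from the pointwise ratio form of mixing to a statement about the conditional distributions) is exactly where the paper inserts a pigeonhole argument: since $|Q|$ is finite and the conditional mass of $L_n'$ given $X_N=q'$ stays above $1/2$, one extracts a single sequence $\{q_n\}$ with $\lim_n\mathbb{P}[X_n=q_n\mid X_N=q']>\tfrac{1}{2|Q|}$, applies the mixing ratio to that one sequence, and concludes $\mathbb{P}[X_n=q_n\mid X_N=q]>\tfrac{1}{4|Q|}$ for large $n$, contradicting $\mathbb{P}[\Gamma\mid X_N=q]>1-\tfrac{1}{4|Q|^2}$; this avoids any need for uniformity over all summands in your weighted-average formulation.
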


The fact that the $\tau^i_\infty$ are atomic sets of $\mathcal{F}_\infty$ means that all the runs which belong to 
the same $\tau^i_\infty$ will satisfy the same \emph{tail} properties. 
Intuitively, a tail does not depend on finite prefixes.
Several important classes of properties are tail properties, as presented in \cite{de2009statistic}: in particular 
any parity condition is a tail property.


\subsection{Simple processes characterization with jets}

\begin{definition}
Let $\lbrace X^w_n\rbrace_{n\in\mathbb{N}}$ be a process induced on $Q$ by a word $w\in\Sigma^\omega$, and let $\mu^w_n$ be its probability distribution on $Q$ at time $n$. We say that $\lbrace \mu^w_n\rbrace_{n\in\mathbb{N}}$ is \emph{simple} if there exist $\lambda>0$ and two sequences $\lbrace A_n\rbrace_{n\in\mathbb{N}}$ and $\lbrace B_n\rbrace_{n\in\mathbb{N}}$ of subsets of $Q$ such that:
\begin{compactitem}
 \item $\forall n\in\mathbb{N}$, $A_n,B_n$ is a partition of $Q$
 \item $\forall n\in\mathbb{N},\ \forall q\in A_n,\ \ \mu^w_n(q)>\lambda$
 \item $\mu^w_n(B_n)\rightarrow_{n\rightarrow\infty}0$
\end{compactitem}
\end{definition}

The second point of the following proposition shows that the tail $\sigma$-field of a simple process can be decomposed as a set of ``non-communicating'' jets. Intuitively, a jet is non-communicating if there exists a bound $N\in\mathbb{N}$ such that after time $N$, if a run belongs to the jet, it will stay in it for ever with probability one. The following proposition is a reformulation of the notion of simple process in the framework of jets decomposition.

\begin{proposition}\label{p-separation simple jets}
Let $w\in\Sigma^\omega$, and suppose that the process $\lbrace \mu^w_n\rbrace_{n\in\mathbb{N}}$ induced on $Q$ is simple. Then there exists a decomposition of $Q^\omega$ into jets, $J^0,J^1,...,J^c$, and $N\in\mathbb{N}$, which satisfy the following properties:
\begin{compactenum}
 \item For all $n\geq N$, all $i\in\set{1,2,\ldots,c}$ and all $q\in J^i_n$, we have $\mu_n(q)>\lambda$.
 \item For all $i\in\set{1,2,\ldots,c}$ and all $n_2>n_1\geq N$ we have $\delta(J^i_{n_1},w^{n_2}_{n_1+1})\subseteq J^i_{n_2}$.
 \item $\mu^w_n(J^0_n)\rightarrow_{n\rightarrow\infty}0$.
 \item Each jet $J^i,\ i\in\set{1,2,\ldots,c}$ is mixing.
\end{compactenum}
\end{proposition}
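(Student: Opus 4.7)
\smallskip\noindent\textbf{Proof plan.} My plan is to start from the jet decomposition $(\tilde J^0,\tilde J^1,\ldots,\tilde J^c)$ supplied by the DS Theorem applied to $w$, and \emph{refine} it using the simplicity hypothesis. After invoking the Remark following Theorem~\ref{t-decomp} and merging into $\tilde J^0$ every jet whose mass vanishes, some $\lambda_0>0$ satisfies $\mu^w_n(\tilde J^i_n)>\lambda_0$ for all $i\geq 1$ and $n$ large. Writing $A_n,B_n$ for the simplicity partition (with $\mu^w_n(q)>\lambda$ on $A_n$ and $\mu^w_n(B_n)\to 0$), I will set $J^i_n:=A_n\cap\tilde J^i_n$ for $i\geq 1$ and $J^0_n:=Q\setminus\bigcup_{i\geq 1}J^i_n$ from some index $N$ onward (extended arbitrarily below), and verify the four properties.

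Two preliminary facts would come first. Let $p_{\min}>0$ denote the smallest positive entry across the matrices $M_a$. \textbf{(a)} For $n$ large, $\tilde J^0_n\subseteq B_n$: otherwise some $q\in A_n\cap\tilde J^0_n$ would force $\mu^w_n(\tilde J^0_n)\geq\mu^w_n(q)>\lambda$, contradicting $\mu^w_n(\tilde J^0_n)\to 0$. \textbf{(b)} For $n$ large, $\delta(A_n,w(n+1))\subseteq A_{n+1}$: else a positive-probability transition $q\to q'$ with $q\in A_n$ and $q'\in B_{n+1}$ would give $\mu^w_{n+1}(q')\geq\lambda p_{\min}$, contradicting $\mu^w_{n+1}(B_{n+1})\to 0$.

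The main obstacle will be the \emph{non-leaking property between jets on the significant mass}: for $n$ large and $i\geq 1$, every positive-probability transition from $q\in A_n\cap\tilde J^i_n$ via $w(n+1)$ lands in $\tilde J^i_{n+1}$. I would argue by contradiction. If this fails, a pigeonhole on the finitely many pairs $(i,j)$ with $i\neq j$ lets me fix one such pair for which the violation recurs along an infinite subsequence of indices $n$, so that the event $E_n=\{X^w_n\in\tilde J^i_n,\ X^w_{n+1}\in\tilde J^j_{n+1}\}$ satisfies $\mathbb{P}^w(E_n)\geq\lambda p_{\min}$ on that subsequence. However, the DS Theorem ensures that almost every run is eventually trapped in a single $\tilde J^k$, so $\mathbb{P}^w(E_n\text{ i.o.})=0$; continuity from above of $\mathbb{P}^w$ then yields $\mathbb{P}^w(E_n)\to 0$, contradicting the lower bound.

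Choosing $N$ so that (a), (b), and the non-leaking claim all hold for $n\geq N$, I define the $J^i_n$ as above. Property~(1) is then immediate from $J^i_n\subseteq A_n$; property~(3) follows since $J^0_n=B_n$ for $n\geq N$ by (a); property~(2) reduces to the case $n_2=n_1+1$, which combines (b) with the non-leaking claim, the general case following by a one-step induction on $n_2-n_1$. For property~(4), mixing of $J^i$ would be transferred from mixing of $\tilde J^i$: since $J^i_n=\tilde J^i_n\setminus B_n$ and $\mathbb{P}^w[X^w_n\in B_n\mid X^w_m=q]\leq \mu^w_n(B_n)/\mu^w_m(q)\to 0$ whenever $\mu^w_m(q)>0$, conditioning on $X^w_n\in J^i_n$ is asymptotically equivalent to conditioning on $X^w_n\in\tilde J^i_n$ in the ratio defining mixing, so the $\tilde J^i$-mixing limit carries over to $J^i$.
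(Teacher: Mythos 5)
Your construction is essentially the paper's own: the paper sets $\widehat{J}^i_n=\{q\in J^i_n \mid \mu^w_n(q)>\lambda\}$ (your $A_n\cap\tilde J^i_n$), dumps the rest into $J^0$, proves non-leaking by the same contradiction with the first point of the DS Theorem (a recurring inter-jet transition of probability at least $\lambda\cdot p_{\min}$ would make runs switch jets infinitely often with positive probability), and transfers mixing in the same way. Your write-up is correct and, if anything, slightly more careful than the paper's (e.g.\ the explicit treatment of leaks into $B_{n+1}$ and the quantitative mixing transfer).
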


\section{Decidable Problems for Simple Processes and Lasso shape Words}\label{s-decidable pb}
In this section we will present decidable algorithms (with optimal complexity) 
for the decision problems with the restriction of simple processes, and for lasso shape words.

\subsection{Decidable problems for simple processes} 
We first define the \emph{simple decision problems} that impose the 
simple process restriction.
Given an acceptance condition $\Phi$, we consider the following problems:
\begin{enumerate}
 \item \emph{Simple almost (resp. positive) problems:} Does there exist $w\in\Sigma^\omega$ such that $\lbrace \mu_n^w\rbrace_{n\in\mathbb{N}}$ is simple 
and $\mathbb{P}^w_\mathcal{A}(\Phi)=1$ (resp. $\mathbb{P}^w_\mathcal{A}(\Phi)>0$)?
 \item \emph{Simple limit problem:} For all $\epsilon>0$, is there $w\in\Sigma^\omega$ such that $\lbrace \mu_n^w\rbrace_{n\in\mathbb{N}}$ is simple and $\mathbb{P}^w_\mathcal{A}(\Phi)>1-\epsilon$?
\end{enumerate}

Proposition~\ref{prop:dec-undec-simple} shows that the decidability and undecidability results of Proposition \ref{p-summ known} concerning the positive, almost, 
and limit safety and reachability problems still hold when we consider their ``simple process'' version. Propositions \ref{p-dec simple almost} and \ref{p-dec simple positive} are more interesting as they 
show that the almost and positive parity problem become decidable when restricted to simple processes. Finally, Proposition \ref{p-undec limit buchi} shows that the ''limit'' decision problems 
remain undecidable even when restricted to simple processes.

\begin{proposition}\label{prop:dec-undec-simple}
The simple almost (resp. positive) safety and reachability problems are PSPACE-complete, as well as the simple limit safety problem. The simple limit reachability problem is undecidable.
\end{proposition}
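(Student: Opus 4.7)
\textbf{Proof plan for Proposition \ref{prop:dec-undec-simple}.}
The plan is to reduce each simple decision problem to its standard counterpart from Proposition \ref{p-summ known}, exploiting the fact that a lasso-shape (ultimately periodic) input word always induces a simple process (a property the paper establishes for ultimately periodic inputs in the next section and which I would invoke as a black box). Since every lasso-shape word is simple-inducing and every simple-inducing word is a legal input, the chain of inclusions lasso-shape $\subseteq$ simple-inducing $\subseteq$ arbitrary words will let me sandwich each simple problem between a standard problem and its lasso-shape restriction.

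For the PSPACE upper bounds on the four simple almost/positive safety and reachability problems, I would establish that for each of these conditions a lasso-shape witness exists whenever any witness does. Positive reachability is immediate: any finite prefix reaching $F$ with positive probability can be extended by an arbitrary periodic tail. For almost-sure reachability and both safety variants I would use a pumping argument on the support graph (the subset construction): a witness word $w$ induces an infinite trajectory of supports, which by pigeonhole eventually revisits some support $S$; the segment between two such visits can be pumped to produce a lasso $u$ whose induced support trace is eventually periodic on $S$, and a standard verification shows that the qualitative acceptance property is preserved. Because such a lasso $u$ induces a simple process, the simple and standard versions coincide as decision problems, and the PSPACE algorithms of Proposition \ref{p-summ known} transfer. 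For the simple limit safety problem, the same lasso-extraction is applied to each term of a sequence $w_\epsilon$ with $\mathbb{P}^{w_\epsilon}(\Safe(F)) > 1-\epsilon$, yielding lasso approximants, so the problem again coincides with its standard version, which is PSPACE-complete. PSPACE-hardness of all five decidable variants is inherited from the corresponding hardness on deterministic automata (a special case of PAs), since a deterministic automaton produces a sequence of Dirac distributions, which is trivially simple.

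For the undecidability of the simple limit reachability problem, I would reduce from standard limit reachability, which is undecidable by Proposition \ref{p-summ known}. Without loss of generality assume every state of $F$ is absorbing (add self-loops under every letter on $F$; this leaves $\Reach(F)$ unchanged). For any word $w$ with $\mathbb{P}^w(\Reach(F)) > 1-\epsilon$, continuity of probability gives an integer $n$ such that the probability of hitting $F$ within the first $n$ steps under $w[1..n]$ exceeds $1-\epsilon-\epsilon'$. Then for any fixed letter $a\in\Sigma$ the lasso word $w[1..n]\cdot a^\omega$ achieves reach probability at least $1-\epsilon-\epsilon'$ because trajectories that have already reached $F$ stay there. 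Hence simple limit reachability and standard limit reachability have the same YES-instances, and undecidability transfers.

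The main obstacle I expect is the lasso-extraction for the almost-sure safety case: unlike positive safety, one must ensure that the pumped cycle in the support graph does not introduce new escape edges out of $F$ that carry positive probability, which requires reasoning about the SCC structure of the support graph restricted to $F$ and choosing the pumped segment inside a bottom SCC of that restriction. Once this structural argument is in place, the limit safety approximation argument is a routine refinement, and all the reachability cases are comparatively easy thanks to the absorption trick.
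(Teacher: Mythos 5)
Your proposal follows the same route as the paper: both reduce each simple decision problem to its standard counterpart from Proposition~\ref{p-summ known} by observing that the decidable standard problems admit lasso-shape witnesses, that lasso-shape words induce simple processes (Proposition~\ref{p-LS-ind-simple}), and hence that the simple and standard problems have identical YES-instances; and both obtain undecidability of simple limit reachability by truncating a near-optimal word and appending an arbitrary periodic tail. The only place you diverge is that you attempt to re-prove the lasso-witness property by pumping the support sequence, where the paper simply cites \cite{baier2005recognizing} and \cite{chatterjee2010}. Two cautions there. First, you have the delicate case backwards: almost-sure safety is \emph{exactly} the condition that every finite support $\mathrm{Supp}(\delta(\alpha,w[1..n]))$ is contained in $F$, so pumping a repeated support works verbatim; it is \emph{positive} safety that is not determined by the full support sequence (a run can remain in $F$ with probability zero even though some $F$-state survives in every support, e.g.\ when mass leaks out of $F$ at a constant rate), and the correct pumping must be carried out on a trapped sub-support $C\subseteq F$ with $\delta(C,\rho)\subseteq C$ and all intermediate supports inside $F$. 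Second, your PSPACE-hardness claim via deterministic automata is both unnecessary and dubious (positive safety on a deterministic automaton is just nonemptiness of a safety automaton, which is in NL); since you have already shown that the simple and standard problems coincide as sets of instances, hardness is inherited directly from Proposition~\ref{p-summ known}, which is exactly what the paper does.
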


\begin{proposition}\label{p-dec simple almost}
 The simple almost parity problem is PSPACE-complete
\end{proposition}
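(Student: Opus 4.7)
The plan is to use the structural decomposition of Proposition \ref{p-separation simple jets} to reduce the simple almost parity problem to a combinatorial search on the subset construction of the FPT, and then to implement that search in polynomial space.

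First, since every parity condition $\Phi$ is a tail event and the atomic sets $\tau^1_\infty,\dots,\tau^c_\infty$ of Proposition \ref{p-tau atomics} partition $\Omega$ up to a null set, we have $\mathbb{P}^w(\Phi)=\sum_{i=1}^c \mathbb{P}^w(\Phi\cap \tau^i_\infty)$, each summand being either $0$ or $\mathbb{P}^w(\tau^i_\infty)$. Consequently, $w$ is an almost-sure witness iff every non-null jet $J^i$ satisfies $\Phi$ almost surely. When the process induced by $w$ is simple, Proposition \ref{p-separation simple jets} provides sets $J^i_n$ that are eventually invariant (i.e., $\delta(J^i_{n_1},w[n_1+1..n_2])\subseteq J^i_{n_2}$) while $J^0$ loses all mass; thus the problem reduces to finding invariant subsets $H_1,\dots,H_c\subseteq Q$ that each carry the correct parity behavior and to which the initial distribution is driven with full asymptotic mass.

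Second, I would argue that the witness may be taken of lasso shape: the sequence $(J^1_n,\dots,J^c_n)$ lives in a finite power of $2^Q$, so it eventually enters a cycle, and pumping a word $\rho_2$ realizing this cycle together with a prefix $\rho_1$ absorbing the transient mass yields a witness of the form $\rho_1\cdot\rho_2^\omega$. On $\rho_2^\omega$ the restriction to each $H_i$ is a \emph{homogeneous} Markov chain, and the mixing clause of Proposition \ref{p-separation simple jets} forces $H_i$ to coincide with a single bottom strongly connected component of the graph of $M_{\rho_2}$ restricted to $H_i$, so almost-sure satisfaction of $\Phi$ on $H_i$ reduces to the deterministic check that the minimum priority appearing in $H_i$ is even. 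This yields an NPSPACE procedure: guess the partition $H=H_1\sqcup\dots\sqcup H_c$ and the target minimum priority of each $H_i$; verify, via on-the-fly simulation of the subset construction together with the almost-sure reachability techniques of \cite{CSV09,chadha2009power}, that there exist words $\rho_1$ placing all asymptotic mass on $H$ and $\rho_2$ with $\delta(H_i,\rho_2)=H_i$, strongly connected support on each $H_i$, and the guessed even minimum priority on each $H_i$. By Savitch's theorem this runs in PSPACE, and the matching lower bound is inherited from Proposition \ref{prop:dec-undec-simple} since safety is a special case of parity.

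The main obstacle is the step bridging the semantic DS-decomposition with the combinatorial subset-construction object: one must prove that a word $w$ induces a simple process with almost-sure parity if and only if the subset construction admits such a pumpable cycle. The nontrivial direction requires that strong connectivity of $H_i$ under $\rho_2$ suffice for the mixing property demanded by the DS theorem; periodicity issues have to be handled by subdividing $H_i$ into the phases of the homogeneous chain induced by $M_{\rho_2}$, producing a finer partition into genuinely mixing jets.
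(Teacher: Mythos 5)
Your overall strategy coincides with the paper's: reduce the existence of a simple almost-sure witness, via Proposition~\ref{p-separation simple jets} and the atomicity of the $\tau^i_\infty$ from Proposition~\ref{p-tau atomics}, to the existence of a lasso witness $\rho_1\cdot\rho_2^\omega$ by pumping a repeated jet-vector, and then decide the resulting combinatorial condition on the induced periodic chain by an NPSPACE guess-and-check over words of exponentially bounded length. That is exactly the equivalent formulation \textbf{(1)}$\Leftrightarrow$\textbf{(2)} used in the paper.

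There is, however, a genuine flaw in your acceptance check. You reduce almost-sure satisfaction of $\Parity(p)$ on the recurrent class $H_i$ to ``the deterministic check that the minimum priority appearing in $H_i$ is even,'' where $H_i\subseteq Q$ is the set of states at the period boundaries of $\rho_2$. This misses the states visited at \emph{intermediate} positions while reading $\rho_2$: the run lives in $Q^\omega$ at every step, not only at multiples of $|\rho_2|$. Concretely, take $H_i=\{q\}$ with $p(q)=2$ and $\rho_2=ab$ with $\delta(q,a)=\{q'\}$, $p(q')=1$, $\delta(q',b)=\{q\}$; then $\delta(H_i,\rho_2)=H_i$, the support is strongly connected, and the minimum priority on $H_i$ is even, yet $\Inf(r)=\{q,q'\}$ almost surely and the parity condition fails. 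The correct condition — and the one the paper's NPSPACE procedure verifies by carrying, for each $q\in A$ and each $q'\in\delta(q,a_1,\dots,a_i)$, the minimal priority over all positive-probability paths from $q$ to $q'$ within the current prefix of $\rho_2$ — is that the minimum priority over all states on all positive-probability paths issued from the recurrent states during one period is even. Your ``on-the-fly simulation'' would have to be augmented with exactly this bookkeeping; as stated, the algorithm accepts instances it should reject. A secondary, more easily repaired point: safety is not literally a special case of parity on the same automaton (a run may leave $F$ and return), so inheriting the lower bound from Proposition~\ref{prop:dec-undec-simple} requires the absorbing-sink modification and a check that it preserves simple-process witnesses in both directions; the paper instead reduces directly from intersection-emptiness of deterministic finite automata to the simple almost B\"uchi problem.
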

\begin{proof} {\em (Sketch).} The proof relies on the following 
equivalent formulation.

\smallskip\noindent{\bf Equivalent formulation.}
In the following, $p:Q\rightarrow\mathbb{N}$ is a parity function on $Q$, and $\Phi=\Parity(p)$. We prove that: \textbf{(1)} There exists $w\in\Sigma^\omega$ such that the induced process is simple and $\mathbb{P}^w_\mathcal{A}(\Phi)=1$ if and only if \textbf{(2)} There exists $G\subseteq Q$ and $\rho_1,\rho_2\in\Sigma^*$ such that $G=\delta(\alpha,\rho_1)$, $\delta(G,\rho_2)\subseteq G$, and the runs on the Markov chain induced by $(G,\rho_2)$ satisfy $\Phi$ with probability one.
We show in the appendix that the properties can be verified in PSPACE and also present a PSPACE lower bound.

We show the equivalence \textbf{(2)}$\Leftrightarrow$\textbf{(1)}. The way \textbf{(2)}$\Rightarrow$\textbf{(1)} is direct, since we will show in Section \ref{s-simple-words} that the process induced by a lasso shape word on any automaton is always simple. We prove that \textbf{(1)}$\Rightarrow$\textbf{(2)}. Let $w=a_1,...,a_i,...$ be such that the induced process is simple and $\mathbb{P}^w_\mathcal{A}(\Phi)=1$. Using Proposition \ref{p-separation simple jets}, let $J^0,J^1,...,J^m$ be the decomposition of $Q^\omega$ into jets and let $N_0\in\mathbb{N},\ \lambda>0$ be such that:
\begin{compactitem}
 \item $\forall n\geq N_0$, $\forall i\in\set{1,2,\ldots,c}$, $\forall q\in J^i_n$: $\mu_n(q)>\lambda$.
 \item $\forall i\in\set{1,2,\ldots,c}$, for all $n_2>n_1\geq N_0$, we have $\delta(J^i_{n_1},w^{n_2}_{n_1+1})\subseteq J^i_{n_2}$.
 \item $\mu^w_n(J^0_n)\rightarrow_{n\rightarrow\infty}0$
 \item Each jet $J^i,\ i\in\set{1,2,\ldots,c}$ is mixing.
\end{compactitem}
Without loss on generality, since $Q$ is finite, taking $N_0$ large enough, we can assume that the vector of sets of states $(J_{N_0}^0,...,J_{N_0}^c)$ appears infinitely often in the sequence $\lbrace(J_{n}^0,...,J_{n}^c)\rbrace_{n\in\mathbb{N}}$. As well, without loss on generality, we can assume that for all $n\geq N_0$ and all $i\in\set{1,2,\ldots,c}$, all the states in $J^i_n$ appear infinitely often among the sets $J^i_m$, for $m\geq N_0$. Let $i\in\set{1,2,\ldots,c}$. Given $q\in Q$, let 
\[\Phi_q=\lbrace r\in\Omega\ |\ q\in\mathrm{Inf}(r)\ \mathrm{and}\ p(q)=\min_{q'\in\mathrm{Inf}(r)}p(q')\rbrace\]
Clearly, for all $q\in Q$, $\Phi_q\in\mathcal{F}_\infty$. Since $Q$ is finite, there exists $q_i\in Q$ such that $\mathbb{P}(\tau^i_\infty\cap \Phi_{q_i})>0$. By Proposition \ref{p-tau atomics}, $\tau_\infty^i$ is atomic, hence $\tau_\infty^i\subseteq \Phi_{q_i}$. Since the runs of the process satisfy the parity condition with probability one, $p(q_i)$ must be even. Moreover, for all $n\geq N_0$ and all $q\in J^i_n$, we must have $p(q)\geq p(q_i)$. Indeed, such a $q$ appears an infinite number of times in the sequence $J^i_n$, by hypothesis, and always with probability at least $\lambda$. 

Since $\tau_\infty^i\subseteq \Phi_{q_i}$, there exists $m_i\in\mathbb{N}$ such that for all $q\in J_i^{N_0}$, there exists $m<m_i$ such that $\delta(q,w[N_0+1..m])(q_i)>0$. We define $m=\max_{i\in\set{1,2,\ldots,c}}\ m_i$, and $m'\geq m$ such that 
\[(J_{N_0}^0,...,J_{N_0}^c)=(J_{N_0+m'}^0,...,J_{N_0+m'}^c)\]
Taking $\rho_1\!\!=\!\!w[0..N_0]$ and $\rho_2\!=\!w[N_0+1..N_0+m']$ completes the proof. Indeed, when starting from the initial distribution, after reading $\rho_1$, we arrive by construction in one of the sets $J^i_{N_0}$, with $i\in\set{0,\ldots,c}$. Starting from this state $q$, if the word $\rho_2$ is taken as input, we go to set $J^i_{N_0+m'}$ with probability one, visit $q_i$ with positive probability, and do not visit any state with probability smaller that $p(q_i)$. This implies that when starting from $q$ and reading $\rho_2^\omega$, we visit $q_i$ with probability one, hence the result.
\end{proof}

\begin{proposition}
\label{p-dec simple positive}
 The simple positive parity problem is PSPACE-complete. 
\end{proposition}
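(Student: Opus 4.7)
The plan is to follow closely the proof of Proposition \ref{p-dec simple almost}, adapting the argument to ``positive probability'' instead of ``probability one''. I would first establish an equivalent formulation: \textbf{(1)} there exists $w\in\Sigma^\omega$ such that the induced process is simple and $\mathbb{P}^w_\mathcal{A}(\Phi)>0$ if and only if \textbf{(2)} there exist $G\subseteq Q$ and $\rho_1,\rho_2\in\Sigma^*$ such that $G=\delta(\alpha,\rho_1)$, $\delta(G,\rho_2)\subseteq G$, and the runs on the Markov chain induced by $(G,\rho_2)$ satisfy $\Phi$ with positive probability. Direction \textbf{(2)}$\Rightarrow$\textbf{(1)} is immediate, since $\rho_1\rho_2^\omega$ is a lasso shape word, which induces a simple process (established in Section \ref{s-simple-words}), and carries positive parity probability by hypothesis.

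For \textbf{(1)}$\Rightarrow$\textbf{(2)}, I would take $w$ with simple induced process and $\mathbb{P}^w(\Phi)>0$, and apply Proposition \ref{p-separation simple jets} to obtain a jet decomposition $J^0,J^1,\ldots,J^c$ with constants $N_0$ and $\lambda$, assuming (as in the almost proof) that $(J^0_{N_0},\ldots,J^c_{N_0})$ repeats infinitely often and every state of each $J^i_n$ recurs in its jet infinitely often. The key difference with the almost case appears here: since $\Phi$ is a tail event (parity is a tail property), $\mathbb{P}^w(\bigcup_i\tau^i_\infty)=1$, and each $\tau^i_\infty$ is $\mathcal{F}_\infty$-atomic (Proposition \ref{p-tau atomics}), there must exist at least one index $i^*\in\set{1,\ldots,c}$ such that $\tau^{i^*}_\infty\subseteq\Phi$ (modulo null sets). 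Exactly as in the almost proof, pick $q^*\in Q$ with $\tau^{i^*}_\infty\subseteq\Phi_{q^*}$; then $p(q^*)$ is even because $\tau^{i^*}_\infty\subseteq\Phi$, and every state visited in $J^{i^*}_n$ infinitely often has priority at least $p(q^*)$. Choose $\rho_1=w[0..N_0]$ and $\rho_2=w[N_0+1..N_0+m']$ with $m'$ large enough so that the configuration of jets repeats and $q^*$ is reachable from every state of $J^{i^*}_{N_0}$ within one application of $\rho_2$. Then $G=\delta(\alpha,\rho_1)\supseteq J^{i^*}_{N_0}$, which is hit with positive probability under the uniform start on $G$; from there the closure property of jets keeps the run inside $\bigcup_n J^{i^*}_n$ forever, $q^*$ is visited in each cycle with positive probability, and no state of priority strictly less than $p(q^*)$ is ever visited, so parity is satisfied with positive probability along a positive-measure set of runs.

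For the PSPACE upper bound I would follow the same recipe as in the almost case: non-deterministically guess $G\subseteq Q$ and search for $\rho_1$ with $\delta(\alpha,\rho_1)=G$ by walking in the subset graph in polynomial space (Savitch); then verify the existence of $\rho_2$ such that $\delta(G,\rho_2)\subseteq G$ and the chain induced by $(G,\rho_2)$ has positive parity probability. The latter reduces to checking, in the product structure over $G$ and positions in $\rho_2$, the reachability of a bottom strongly connected component whose minimum priority is even; since $\rho_2$ may be exponentially long but polynomially bounded in the subset graph, this reachability can be performed letter by letter without storing $\rho_2$, using polynomial space. The PSPACE lower bound is inherited from the positive coB\"uchi problem of Proposition \ref{p-summ known}, which is already subsumed by the present problem (coB\"uchi on lasso-shape inputs remains coB\"uchi on simple processes). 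The main obstacle is the careful bookkeeping in the positive-parity check on $(G,\rho_2)$: unlike the almost case, one cannot collapse to a single ``good'' BSCC in the subset graph, and the argument must instead witness, in polynomial space, a BSCC of the product chain meeting $q^*$ without visiting any lower-priority state.
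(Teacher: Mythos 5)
Your proof is correct and follows essentially the same route as the paper's: an equivalent lasso-shape reformulation, the jet decomposition of Proposition \ref{p-separation simple jets}, atomicity of the $\tau^i_\infty$ to locate a good jet $i^*$ with an even minimal priority $p(q^*)$, and a letter-by-letter polynomial-space verification. The one noteworthy divergence is in the reformulation itself: you keep $G=\delta(\alpha,\rho_1)$ (the full support) and ask that the chain induced by $(G,\rho_2)$ satisfy $\Phi$ with \emph{positive} probability, whereas the paper takes $G\subseteq\mathrm{Supp}(\delta(\alpha,\rho_1))$ (e.g.\ $G=J^{i^*}_{N_0}$, reached with positive probability) and asks that the chain induced by $(G,\rho_2)$ satisfy $\Phi$ with probability \emph{one}. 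The two formulations are interchangeable --- a positive-probability chain always contains a good recurrent class that can serve as the paper's sub-support $G$ --- but the paper's choice dissolves the ``main obstacle'' you flag at the end: the verification on $(G,\rho_2)$ reduces to exactly the almost-sure check already done for Proposition \ref{p-dec simple almost}, with the only change being that $G$ need only be a subset of a reachable support, so no separate positive-parity bookkeeping on the periodic chain is needed. Your lower bound (inheriting hardness from positive \coBuchi\ via the lasso-witness property) also differs from the paper's reduction from intersection-emptiness of DFAs, but it is valid for the same reason the paper's own Proposition \ref{prop:dec-undec-simple} is: a positive-probability witness can always be taken lasso-shaped, hence simple.
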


A corollary of the proofs of Propositions \ref{p-dec simple almost} and \ref{p-dec simple positive} is that if the 
simple almost (resp. positive) parity problem is satisfied by a word, then it is in fact satisfied also by a lasso shape word.

\begin{proposition}\label{p-undec limit buchi}
The simple limit B\"uchi and coB\"uchi problems are undecidable.
\end{proposition}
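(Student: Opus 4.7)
I would reduce from the value-$1$ problem for probabilistic finite automata on finite words, which is undecidable by~\cite{gimbert2010probabilistic}, and show that the witnesses produced by the reduction are lasso-shape words. Since (as Section~\ref{s-simple-words} establishes) lasso-shape words always induce simple processes, this directly yields undecidability of the simple limit B\"uchi and coB\"uchi problems.

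Given a PFA $\mathcal{A}=(Q,\Sigma,\{M_a\}_{a\in\Sigma},\alpha,F)$ for which the existence of a finite word accepted with probability arbitrarily close to~$1$ is undecidable, I would construct a PA $\mathcal{A}'$ over the alphabet $\Sigma\cup\{\#\}$ by adding two fresh absorbing states $q_+,q_-$, each with self-loops on every letter. Letters of $\Sigma$ act on $Q$ as in $\mathcal{A}$ (and trivially on $q_+,q_-$); the fresh letter $\#$ sends every state of $F$ deterministically to $q_+$ and every state of $Q\setminus F$ deterministically to $q_-$. The B\"uchi target set is $\{q_+\}$. For any lasso-shape word $w_\rho=\rho\cdot\#^\omega$ with $\rho\in\Sigma^*$, a direct computation gives $\mathbb{P}^{w_\rho}_{\mathcal{A}'}(\Buchi(\{q_+\}))=\mathbb{P}_{\mathcal{A}}(\rho\text{ accepted})$, since after reading $\rho\cdot\#$ the chain sits in $q_+$ with the $\mathcal{A}$-acceptance probability of $\rho$ and in $q_-$ otherwise, and both are absorbing. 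Conversely, any infinite word that never contains $\#$ has zero B\"uchi probability, and for words in which $\#$ first appears at position $k+1$ the B\"uchi probability coincides with the $\mathcal{A}$-acceptance probability of the finite prefix $w[1..k]$. Taking suprema therefore equates the value of $\mathcal{A}'$ under the B\"uchi condition with the value of $\mathcal{A}$, and the supremum is already approached by lasso-shape words. Hence the simple limit B\"uchi problem for $\mathcal{A}'$ answers the undecidable value-$1$ problem for $\mathcal{A}$.

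The coB\"uchi case needs no further work: because $q_+$ is absorbing, the events $\Buchi(\{q_+\})$ and $\coBuchi(\{q_+\})$ coincide on every run of $\mathcal{A}'$, so the same reduction establishes undecidability of the simple limit coB\"uchi problem. The main obstacle in writing this out is the supremum-achievement argument, namely verifying that arbitrary infinite words (with possibly many or delayed occurrences of $\#$, or none at all) cannot beat the supremum over lasso-shape prefixes $\rho\cdot\#^\omega$; this reduces to the observation that only the first $\#$ matters, after which the chain is absorbed in $\{q_+,q_-\}$ and the suffix contributes nothing.
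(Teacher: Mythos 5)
Your proof is correct and follows essentially the same route as the paper: the paper first establishes undecidability of the simple limit reachability problem (by noting that lasso shape words, which induce simple processes, suffice to approach the reachability value) and then reduces it to the B\"uchi and coB\"uchi cases by making the accepting states absorbing, which is exactly the role your absorbing sinks $q_+,q_-$ play. The only cosmetic difference is that you reduce in one step directly from the value-$1$ problem for probabilistic finite automata via a fresh letter $\#$, whereas the paper routes through the intermediate simple limit reachability problem; the two pillars of the argument (undecidability of the limit problem from~\cite{gimbert2010probabilistic} and the fact that lasso shape words induce simple processes) are identical.
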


From the propositions of this section we obtain the following theorem.
In the theorem below the PSPACE-completeness of the limit safety problem 
follows as for safety conditions the limit and almost problem coincides.

\begin{theorem}
The simple almost and positive problems are PSPACE-complete for parity 
conditions.
The simple limit problem is PSPACE-complete for safety conditions, and 
the simple limit problem is undecidable for reachability, 
B\"uchi, coB\"uchi and parity conditions.
\end{theorem}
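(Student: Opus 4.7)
The theorem is a compilation of the propositions proved earlier in the section, together with a small compositional argument for the parity case. My plan proceeds in three steps. For the \emph{simple almost} and \emph{simple positive} problems under parity, the PSPACE-completeness is given directly by Propositions \ref{p-dec simple almost} and \ref{p-dec simple positive}; as B\"uchi, coB\"uchi, safety and reachability are all special cases of parity, this subsumes them (and Proposition \ref{prop:dec-undec-simple} also provides the safety and reachability cases directly).

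For the \emph{simple limit safety} problem, I would invoke Proposition \ref{prop:dec-undec-simple}, which already states PSPACE-completeness. The underlying justification, as noted in the remark preceding the theorem, is that the limit and almost problems coincide for safety: the quantity $\mathbb{P}^w(\Safe(F))$ is the monotone decreasing limit over $n$ of the probability of staying inside $F$ for at least $n$ steps, so by a compactness argument on $\Sigma^\omega$ in the product topology, any sequence of words $w_n$ with $\mathbb{P}^{w_n}(\Safe(F)) > 1 - 1/n$ admits a diagonal limit $w^*$ with $\mathbb{P}^{w^*}(\Safe(F))=1$. Hence the simple limit safety problem reduces to the simple almost safety problem, which is PSPACE-complete.

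For the undecidability of the \emph{simple limit} problem, the reachability case follows from Proposition \ref{prop:dec-undec-simple}, and the B\"uchi and coB\"uchi cases from Proposition \ref{p-undec limit buchi}. Undecidability for parity is then immediate by a syntactic reduction: since B\"uchi corresponds to the parity priority set $\{0,1\}$ and coB\"uchi to $\{1,2\}$ (as recorded in the preliminaries), any instance of simple limit B\"uchi or coB\"uchi already is an instance of simple limit parity. The only delicate point in the whole argument is the reduction from limit to almost safety: the diagonal word produced by compactness need not a priori induce a simple process, so to stay inside the simple setting one must either redo the argument with explicit control on the jet structure of the $w_n$'s, or, as we do, appeal directly to Proposition \ref{prop:dec-undec-simple}, which has already handled this subtlety.
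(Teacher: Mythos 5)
Your proposal is correct and follows essentially the same route as the paper: the theorem is assembled from Propositions \ref{prop:dec-undec-simple}, \ref{p-dec simple almost}, \ref{p-dec simple positive} and \ref{p-undec limit buchi}, with the limit safety case handled by the coincidence of the limit and almost problems for safety and the parity undecidability obtained by viewing B\"uchi/coB\"uchi as special parity conditions. Your extra care about the diagonal word from the compactness argument possibly not inducing a simple process is a valid observation, and routing around it via the almost problem and lasso shape words (as Proposition \ref{prop:dec-undec-simple} does) is exactly the paper's resolution.
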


\subsection{Decidable problems for lasso shape words}\label{s-simple-words}

In this sub-section we consider the decision problems where, 
instead of restricting the probabilistic automata, we restrict the 
set of input words to lasso shape words.
First, the processes induced by such words are 
simple:

\begin{proposition}\label{p-LS-ind-simple}
 Let $\mathcal{A}$ be a PA, let $w$ be a lasso shape word, and let $\alpha\in\Delta(Q)$. Then the process induced by $w$ and $\alpha$ on $Q$ is simple.
\end{proposition}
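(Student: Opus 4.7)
The idea is to reduce to the classical asymptotic theory of finite homogeneous Markov chains. Write the lasso as $w=\rho_1\cdot\rho_2^\omega$, set $N=|\rho_1|$, $m=|\rho_2|$, $M=M_{\rho_2}$, and let $\beta=\alpha\cdot M_{\rho_1}\in\Delta(Q)$ be the distribution reached after reading the prefix. For any $n\ge N$, decompose $n=N+km+j$ with $0\le j<m$, so that
\[
\mu_n^w \;=\; \beta\cdot M^k\cdot M_{\rho_2[1..j]}.
\]
Thus the ``skeleton'' distributions $\mu_{N+km}^w=\beta\cdot M^k$ are iterates of a single stochastic matrix $M$ applied to $\beta$, while intermediate-time distributions are obtained by right-multiplying by one of only $m$ fixed matrices.

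I would then invoke the classical cyclic behaviour of finite homogeneous Markov chains: $Q$ decomposes under $M$ into a transient set and recurrent classes with respective periods $d_1,\dots,d_s$, and setting $d=\mathrm{lcm}(d_1,\dots,d_s)$ one obtains distributions $\ell^0,\dots,\ell^{d-1}\in\Delta(Q)$ with $\|\beta\cdot M^k-\ell^{k\bmod d}\|_\infty\to 0$ (mass on transient states vanishes, mass on each reached recurrent class stabilises, and inside such a class it concentrates on the cyclic subclass selected by $k\bmod d_i$). Composing with the $m$ fixed post-transformations produces a finite set
\[
K_\infty \;=\; \{\,\ell^r\cdot M_{\rho_2[1..j]}\,:\,0\le r<d,\;0\le j<m\,\}\subset\Delta(Q),
\]
together with a map $n\mapsto\ell_n^\star\in K_\infty$ (determined by $\lfloor(n-N)/m\rfloor\bmod d$ and $(n-N)\bmod m$) satisfying $\|\mu_n^w-\ell_n^\star\|_\infty\to 0$.

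To finish, let $\lambda=\tfrac12\min\{\ell(q):\ell\in K_\infty,\;q\in\mathrm{Supp}(\ell)\}$, which is strictly positive because $K_\infty$ is finite and every element is a probability distribution with non-empty support. Pick $N_0$ past which $\|\mu_n^w-\ell_n^\star\|_\infty<\lambda$, and define $A_n=\mathrm{Supp}(\ell_n^\star)$, $B_n=Q\setminus A_n$ for $n\ge N_0$, while $A_n=\emptyset$, $B_n=Q$ for $n<N_0$. For $q\in A_n$ one has $\mu_n^w(q)>\ell_n^\star(q)-\lambda\ge 2\lambda-\lambda=\lambda$, and $\mu_n^w(B_n)\le|Q|\cdot\|\mu_n^w-\ell_n^\star\|_\infty\to 0$ because $\ell_n^\star(B_n)=0$. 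The only genuine subtlety is the bookkeeping of the interaction between the cyclic period $d$ of $M$ and the word length $m$; both are finite, which is exactly what guarantees $K_\infty$ is finite and that a uniform $\lambda$ works.
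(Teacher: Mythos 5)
Your proof is correct and takes essentially the same route as the paper: the paper also observes that along each residue class modulo $|\rho_2|$ the induced chain is homogeneous and then invokes the classical decomposition of a finite homogeneous Markov chain into transient states and periodic recurrent classes. You simply carry out in full the bookkeeping (the $\mathrm{lcm}$ of the periods, the finite set $K_\infty$ of limit distributions, and the extraction of a uniform $\lambda$) that the paper leaves implicit.
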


\begin{corollary}
 Let $\mathcal{M}$ be a finite state machine. Then for any $w\in\Sigma^\omega$ generated by $\mathcal{M}$, the process induced by $w$ and $\alpha$ on $Q$ is simple.
\end{corollary}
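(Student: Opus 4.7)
The plan is to reduce the corollary directly to Proposition~\ref{p-LS-ind-simple} by showing that any infinite word produced by a finite state machine is necessarily of lasso shape, from which the claim follows immediately. Under the natural interpretation that $\mathcal{M}$ is a deterministic finite state machine (e.g., a Moore/Mealy-style device) producing letters of $\Sigma$ along its transitions, the generated word is determined by an infinite run through the finite state space of $\mathcal{M}$.

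First, I would fix $\mathcal{M}$ with finite state set $S$ and an infinite run $s_0, s_1, s_2, \ldots$ that outputs $w(1), w(2), \ldots$ Since $|S|$ is finite, the pigeonhole principle yields indices $0 \leq i < j \leq |S|$ with $s_i = s_j$. By determinism of $\mathcal{M}$, the behavior from time $j$ onward is identical to the behavior from time $i$ onward, so the sequence of outputs from position $i+1$ is periodic with period $j-i$. In other words, setting $\rho_1 = w[1..i]$ and $\rho_2 = w[i+1..j]$, one obtains $w = \rho_1 \cdot \rho_2^{\omega}$, which is a lasso shape word by definition.

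Second, with $w$ identified as a lasso shape word, Proposition~\ref{p-LS-ind-simple} applies directly: for the fixed probabilistic automaton $\mathcal{A}$ and any initial distribution $\alpha \in \Delta(Q)$, the non-homogeneous Markov chain $\{X^w_n\}_{n \in \mathbb{N}}$ induced on $Q$ by $w$ and $\alpha$ is simple. This concludes the corollary with no further work.

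I expect essentially no technical obstacle beyond clarifying the model: the only subtlety is pinning down what ``$w$ generated by $\mathcal{M}$'' means, because under an unrestricted nondeterministic interpretation a finite machine could in principle generate arbitrary words and the claim would fail. Assuming the standard deterministic reading (which is the only one making the statement meaningful), the proof reduces to the pigeonhole argument above plus one invocation of Proposition~\ref{p-LS-ind-simple}.
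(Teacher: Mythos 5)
Your proof is correct and matches the argument the paper intends (the corollary is stated without an explicit proof, but the only sensible route is exactly yours): a deterministic finite state machine's output is ultimately periodic by pigeonhole on its states, hence lasso shape, and Proposition~\ref{p-LS-ind-simple} applies. Your caveat about the deterministic reading is also well taken, since under a fully nondeterministic interpretation a one-state machine would generate all of $\Sigma^\omega$ and the claim would contradict the existence of non-simple processes.
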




The results of this section along with the results of the previous sub-section give us the following theorem.
\begin{theorem}\label{thrm:lasso}
Given a probabilistic automaton with parity acceptance condition,
the question whether there is lasso shape word that is accepted with 
probability~1 (or positive probability) is PSPACE-complete. 
\end{theorem}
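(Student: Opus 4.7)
The plan is to derive this theorem as a direct corollary of Propositions \ref{p-dec simple almost}, \ref{p-dec simple positive}, and \ref{p-LS-ind-simple}. The key observation is the following equivalence: for any PA $\mathcal{A}$ with parity condition $\Phi$, there exists a lasso shape word $w$ with $\mathbb{P}^w_\mathcal{A}(\Phi) = 1$ (respectively $\mathbb{P}^w_\mathcal{A}(\Phi) > 0$) if and only if the simple almost (respectively positive) parity problem has a positive answer on $\mathcal{A}$. Once this equivalence is established, both the upper and the lower PSPACE bounds for the lasso shape decision problems transfer immediately from the corresponding PSPACE-completeness results for the simple problems.

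For the forward direction of the equivalence, I would invoke Proposition \ref{p-LS-ind-simple}: any lasso shape word $w$ induces a simple process on $Q$, so if $w$ satisfies the acceptance condition then $w$ itself also witnesses the corresponding simple problem.

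For the converse, I would appeal to the explicit witness construction appearing inside the proofs of Propositions \ref{p-dec simple almost} and \ref{p-dec simple positive}. Those proofs reformulate the simple almost (resp. positive) parity problem as the existence of a support set $G \subseteq Q$ and finite words $\rho_1, \rho_2 \in \Sigma^*$ with $G = \delta(\alpha, \rho_1)$, $\delta(G, \rho_2) \subseteq G$, and the Markov chain induced by $(G, \rho_2)$ satisfying $\Phi$ with probability one (resp. positive). Whenever such a triple $(G, \rho_1, \rho_2)$ exists, the lasso shape word $w = \rho_1 \cdot \rho_2^\omega$ witnesses acceptance of $\mathcal{A}$, which closes the equivalence. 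There is no real technical obstacle beyond recalling and assembling these earlier results; the only subtle point that merits a quick sanity check will be that the PSPACE lower bound established for the simple problems is indeed realized by lasso shape witnesses, but this is automatic from the equivalence above, since any reduction producing hard instances for the simple problem produces equally hard instances for the lasso shape problem.
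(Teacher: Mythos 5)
Your proposal is correct and follows essentially the same route as the paper's own proof: the paper likewise combines Proposition \ref{p-LS-ind-simple} (lasso words induce simple processes) with the observation, extracted from the proofs of Propositions \ref{p-dec simple almost} and \ref{p-dec simple positive}, that a satisfiable simple almost/positive parity instance is always witnessed by a lasso shape word $\rho_1\cdot\rho_2^\omega$, so the two problems coincide and PSPACE-completeness transfers.
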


\section{Structurally Simple Automata}\label{s-simple autom}

In this section we introduce the class of structurally simple automata, which is a structurally defined 
subclass of probabilistic automata on which every words induce 
a simple process. We show that the problems associated to this class of automata are decidable (the almost and positive problems are
PSPACE-complete and limit problem is in EXPSPACE).
We then show that this subclass of simple automata is closed under union and intersection,
and finally show that structurally simple automata strictly generalizes HPA and $\#$-acyclic
automata.


\subsection{Simple automata and structural characterization}\label{ss-SA-struct}

\begin{definition}[Simple Automata]
A probabilistic automaton is \emph{simple} if for all $w\in\Sigma^w$, the process $\lbrace \mu^w_n\rbrace_{n\in\mathbb{N}}$ induced on its state space by $w$ is simple. 
\end{definition}

In \cite{gimbert2010probabilistic}, given $S\subseteq Q$ and $a\in\Sigma$, the authors define the set $S\cdot a$ as the support of $\delta(S,a)$, and in the case where $S\cdot a=S$, the set $S\cdot a^\#$ 
as the set of states which are recurrent for the homogeneous Markov chain induced on $S$ by the transition matrix $M_a$. Next, they define the \emph{support graph} $\mathcal{G}_\mathcal{A}$ of the automaton $\mathcal{A}$ as the graph whose nodes are the subsets of $Q$, and such that, given $S,T\subseteq Q$, the couple $(S,T)$ is an edge in $\mathcal{G}_\mathcal{A}$ if there exists $a\in\Sigma$ such that $S\cdot a=T$ or $S\cdot a=S$ and $S\cdot a^\#=T$. They present the class of $\#$-acyclic automata as the class of probabilistic automata whose support graph is acyclic. 
\begin{definition}[\cite{gimbert2010probabilistic}]\label{d-acyclic PA}
 A probabilistic automaton $\mathcal{A}$ is $\#$-acyclic if $\mathcal{G}_\mathcal{A}$ is acyclic.
\end{definition}

\noindent We now present a natural generalization of this approach. Given $S\subseteq Q$ and a finite word $\rho\in\Sigma^*$, let $S\cdot\rho=\mathrm{Supp}(\delta(S,\rho))$. 
If $S\cdot\rho=S$, we define $S\cdot\rho^\#$ as the set of states which are recurrent for the homogeneous Markov chain induced on $S$ by $\rho$ 
(i.e. by the transition matrix $\lbrace\delta(q,\rho)(q')\rbrace_{q,q'\in S}$). 

\begin{example}$\ $


\begin{multicols}{2}
    \begin{enumerate}[]
      \item 

Consider the following probabilistic automaton $\mathcal{A}$, with state space $Q=\lbrace s,t,u\rbrace$.\\
\xymatrix{
 &\mathcal{A}:& s\ar@(ur,ul)[]_{a,.5;\ b,1}\ar@/^1pc/[rr]^{a,.5}	&&	t\ar@/^1pc/[ll]^{a,.5}\ar@/^1pc/[rr]^{a,.5;\ b,1}	&&	u\ar@/^1pc/[ll]^{a,1;\ b,1}\\
}

      \item 

We have $Q\cdot a=Q\cdot a^\#=Q$, and $Q\cdot b=Q\cdot b^\#=Q$. 
However, $Q\cdot(ab)^\#=\lbrace u\rbrace$. 

    \end{enumerate}
  \end{multicols}
\end{example}




Given a probabilistic automaton $\mathcal{A}$, an \emph{execution tree} is given by an initial distribution $\alpha\in\Delta(Q)$, or a set of states $A\subseteq Q$, and a finite or infinite word $\rho$. We use the term execution tree informally for the set of execution runs on $\mathcal{A}$ which can be probabilistically generated when the system is initiated in one of the states of $\mathrm{Supp}(\alpha)$ (or $A$), and when the word $\rho$ is taken as input. 

\begin{definition}[$\#$-reductions]
A \emph{$\#$-reduction} is a tuple $(A,B,\rho)$ where $A,B\subseteq Q$ and $\rho\in\Sigma^*$ are such that: 
(i)~$A\not=\emptyset$, (ii)~$B\not=\emptyset$, 
(iii)~$A\cap B=\emptyset$, (iv)~$(A\cup B)\cdot\rho=A\cup B$, and (v)~$(A\cup B)\cdot\rho^\#=B$.
\end{definition}
For simplicity, we may use the term $\#$-reduction for a couple $(A,\rho)$ where $A\subseteq Q$ and $\rho\in\Sigma^*$ are such that $A\cdot\rho=A$ and $A\cdot\rho^\#\not=A$.

\begin{definition}
An execution tree $(\alpha,\rho)$ is said to be \emph{chain recurrent} for a probabilistic automaton $\mathcal{A}$ if it does not contain a $\#$-reduction. 
That is, for all $\rho_1,\rho_2\in\Sigma^*$ such that $\rho_1\cdot\rho_2$ is a prefix of $\rho$, $(\delta(\alpha,\rho_1),\rho_2)$ is not a $\#$-reduction. 
We write $\CRec(\alpha)$ for the set of $\rho\in\Sigma^*$ such that $(\alpha,\rho)$ is a chain recurrent execution tree for $\mathcal{A}$.
\end{definition}


The following \emph{key lemma} shows that for any probabilistic automaton $\mathcal{A}$ there exists a constant $\gamma(\mathcal{A})>0$ such that the probability 
to reach any state on a chain recurrent execution tree is either $0$ or greater than $\gamma(\mathcal{A})$. 
Given a probabilistic automaton $\mathcal{A}$, let $\epsilon(\mathcal{A})$ be the smallest non zero probability which appears among the $\delta(q,a)(q')$, where $q,q'\in Q$ and $a\in\Sigma$.

\begin{lemma}\label{l-borne mots rec}
Let $\mathcal{A}$ be a probabilistic automaton. For all $q\in Q$, all $\rho\in\CRec(q)$ and all $q'\in\mathrm{Supp}(\delta(q,\rho))$ we have $\delta(q,\rho)(q')\geq\epsilon^{2^{2\cdot|Q|}}$ where $\epsilon=\epsilon(\mathcal{A})$.
\end{lemma}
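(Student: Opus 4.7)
The plan is to argue that chain recurrence, by forbidding $\#$-reductions, prevents the probability distribution from diffusing too thinly; the absence of mass ``leakage'' then gives a uniform lower bound on any non-zero entry reached along the execution. The bound $\epsilon^{2^{2|Q|}}$ should arise from a pigeonhole/contraction argument tracking at most $2^{2|Q|}$ ``configurations'' of the execution.

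First, record a trivial single-path bound: for any word $\rho$ of length $\ell$ and any path $q=q_0,q_1,\dots,q_\ell=q'$ compatible with $\rho$, one has $\delta(q,\rho)(q')\geq\prod_{i<\ell}\delta(q_i,\rho(i+1))(q_{i+1})\geq\epsilon^\ell$. Hence if $|\rho|\leq 2^{2|Q|}$ we are already done, since some such path exists whenever $q'\in\Supp(\delta(q,\rho))$. The real work is to reduce the general case to this short-word regime by repeatedly contracting $\rho$ along repeated configurations.

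For each position $i\in\{0,\dots,|\rho|\}$, define the configuration $\kappa_i=(S_i^{\to},S_i^{\leftarrow})\in 2^Q\times 2^Q$, where $S_i^{\to}=\Supp(\delta(q,\rho[1..i]))$ is the forward support from $q$ and $S_i^{\leftarrow}=\{\hat q\in Q\mid q'\in\Supp(\delta(\hat q,\rho[i+1..|\rho|]))\}$ is the set of states from which $q'$ is still reachable through the remaining suffix. Since $|2^Q\times 2^Q|=2^{2|Q|}$, if $|\rho|>2^{2|Q|}$ the pigeonhole principle gives $\kappa_i=\kappa_j$ for some $i<j$. In particular the segment $\sigma=\rho[i+1..j]$ satisfies $S_i^{\to}\cdot\sigma=S_i^{\to}$, so chain recurrence of $(q,\rho)$ forbids $(S_i^{\to},\sigma)$ from being a $\#$-reduction: every state of $S_i^{\to}$ is recurrent for the homogeneous Markov chain induced on $S_i^{\to}$ by $\sigma$.

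Using this recurrence, together with $S_i^{\leftarrow}=S_j^{\leftarrow}$ (which ensures that $q'$ is reached through the suffix from the same set of states at times $i$ and $j$), the plan is to establish the inequality $\delta(q,\rho)(q')\geq\delta(q,\rho[1..i]\cdot\rho[j+1..|\rho|])(q')$ and to check that the contracted word $\rho[1..i]\cdot\rho[j+1..|\rho|]$ is still chain recurrent with the same distinguished target state $q'$. Iterating the contraction produces a chain recurrent word of length at most $2^{2|Q|}$, to which the trivial single-path bound applies and yields $\delta(q,\rho)(q')\geq\epsilon^{2^{2|Q|}}$. The main obstacle will be the probability inequality in this contraction step: it requires combining the recurrence of the homogeneous chain on $S_i^{\to}$ under $\sigma$ (which merely rearranges mass within recurrent classes without destroying it) with the matching of the backward supports (which guarantees that this rearranged mass is converted into contributions to $\delta(\cdot,\cdot)(q')$ by the suffix in exactly the same way as in the original word). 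Verifying that no new $\#$-reduction is created at the splice point is the other technical ingredient, and follows from the fact that the configurations $\kappa_i$ and $\kappa_j$ coincide, so the pre-splice and post-splice support and reachability profiles glue together consistently.
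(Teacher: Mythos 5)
There is a genuine gap at the heart of your contraction step: the inequality $\delta(q,\rho)(q')\geq\delta(q,\rho[1..i]\cdot\rho[j+1..|\rho|])(q')$ is false in general, and the two facts you invoke to support it do not imply it. Knowing that every state of $S_i^{\to}$ is recurrent for the homogeneous chain induced by $\sigma=\rho[i+1..j]$ on $S_i^{\to}$ only tells you that no support is lost; it says nothing about how the mass is redistributed among the states of $S_i^{\to}$, and different states of the same recurrent class may reach $q'$ through the suffix with wildly different probabilities. Concretely, take $S_i^{\to}=\{s,t\}$ with $M_\sigma$ sending each of $s,t$ to $0.01\,s+0.99\,t$ (a single communicating recurrent class, so no $\#$-reduction), and a suffix that reaches $q'$ with probability $1$ from $s$ but with a tiny positive probability from $t$; then $S_i^{\leftarrow}=S_j^{\leftarrow}$, yet inserting $\sigma$ shrinks the probability of reaching $q'$ by an arbitrary factor, so the contracted word can have a much larger value than $\rho$ itself and your iteration yields no lower bound on $\delta(q,\rho)(q')$. (The matching of backward supports only fixes \emph{which} states can still reach $q'$, not with what probability.) A secondary problem is that chain recurrence of the spliced word does not follow from $\kappa_i=\kappa_j$: a subword of the contracted word straddling the splice point has transition matrix $M_{\rho[i'+1..i]}M_{\rho[j+1..j']}$, whose support and recurrence structure need not agree with that of any subword of $\rho$, so new $\#$-reductions could in principle appear at the seam; tracking only the two supports is not enough to rule this out.

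The paper's proof avoids contraction altogether. Arguing by contradiction, it sets $V_m=\delta^{-1}(\rho[m+1..|\rho|])(\{q'\})\cap\delta(q,\rho[1..m])$ (your $S_m^{\leftarrow}$ intersected with the forward support) and observes that the probability mass of $V_m$ is preserved exactly at every step where $\delta(V_m,a_{m+1})\subseteq V_{m+1}$, and retains at least a fraction $\epsilon$ of itself at every ``leakage'' step where this inclusion fails. Hence if $\delta(q,\rho)(q')<\epsilon^{2^{2|Q|}}$ there must be more than $2^{2|Q|}$ leakage steps, and the pigeonhole is applied to the leakage positions only: two of them carry the same pair $(V,W)$, where $W$ is the complement of $V$ in the forward support, and the segment between them maps $W$ into $W$ and maps $V$ into $V\cup W$ but not into $V$; this forces some state of $V$ to be transient for the induced homogeneous chain and exhibits a $\#$-reduction, contradicting $\rho\in\CRec(q)$. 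If you want to keep your setup, the fix is to replace the splicing argument by this mass-tracking argument: the configurations you defined are essentially the right objects, but they must be compared only at positions where mass actually leaks out of the backward-reachable set, and the contradiction must come from a $\#$-reduction rather than from a monotonicity claim about probabilities.
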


\begin{definition}[Structurally simple automata]
An automaton $\mathcal{A}$ is \emph{structurally simple} if for all $\rho\in\Sigma^*$ and $C\subseteq Q$, if $D\subseteq C$ is minimal among the $D\subseteq Q$ such that $C\stackrel{\#-\rho}{\rightarrow}D$, 
we have that $D,\rho$ is chain recurrent. Here $\#-\rho$ intuitively
denotes an iterated $\#$-reachability with the word $\rho$ (details in Section \ref{s-ext-supp} of the appendix).
\end{definition}




We now prove that all the structurally simple automata are simple.
We show that on a structurally simple automaton, given $w\in\Sigma^\omega$, 
the associated execution tree can be decomposed as a sequence of a bounded number of chain recurrent execution trees. The key Lemma \ref{l-borne mots rec} is then used to bound the probabilities which appear.

\begin{lemma}\label{l-ult pos impl simple}
Let $\lbrace\mu^w_n\rbrace_{n\in\mathbb{N}}$ be the process generated by a word $w=a_1,a_2,...\in\Sigma^\omega$ on a probabilistic automaton. Given $n\geq1$ recall that $w[1..n]=a_1,...,a_n$.
Suppose that there exists $\gamma>0$ and $N\geq0$ such that for all $n\geq N$ and all $q\in\mathrm{Supp}(\delta(\alpha,w[1..n]))$ we have 
$\delta(\alpha,w[1..n])(q)>\gamma$. Then the process is simple.
\end{lemma}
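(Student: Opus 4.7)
The proof plan is essentially a direct unpacking of the hypothesis: the uniform lower bound $\gamma$ is exactly the $\lambda$ in the definition of simple process, and the partition $(A_n,B_n)$ can be read off from the support of $\mu^w_n$. I would set $\lambda := \gamma$ and choose, for $n \geq N$,
\[A_n \;:=\; \mathrm{Supp}(\mu^w_n) \;=\; \mathrm{Supp}(\delta(\alpha,w[1..n])), \qquad B_n \;:=\; Q \setminus A_n.\]
For the finitely many indices $n < N$ the choice is irrelevant (since those terms do not affect $\lim_{n\to\infty}\mu^w_n(B_n)$); for concreteness I would set $A_n := \emptyset$ and $B_n := Q$ there, so that $(A_n,B_n)$ is a partition of $Q$ for every $n\in\mathbb{N}$.

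It then remains to verify the three defining conditions of a simple process. The partition condition holds by construction. For the second condition, fix $n \geq N$ and $q \in A_n$. Since $q$ belongs to the support of $\mu^w_n = \delta(\alpha,w[1..n])$, the hypothesis of the lemma applies and yields $\mu^w_n(q) = \delta(\alpha,w[1..n])(q) > \gamma = \lambda$. For $n < N$ the set $A_n$ is empty, so the universally quantified condition is vacuously true. Finally, for the third condition, note that $B_n = Q \setminus \mathrm{Supp}(\mu^w_n)$ for all $n \geq N$, hence $\mu^w_n(B_n) = 0$ for every such $n$, and in particular $\mu^w_n(B_n) \to 0$ as $n \to \infty$.

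I do not anticipate any obstacle: the lemma is, in effect, a reformulation of its hypothesis in terms of the jet-style partition demanded by the definition of simple process, with the happy feature that the ``small tail mass'' sequence $\mu^w_n(B_n)$ is not merely vanishing but identically zero beyond index $N$. The only mildly delicate point is making sure that the partition is defined for \emph{all} $n$ (not only for $n \geq N$), which is handled by the arbitrary choice on the finite initial segment.
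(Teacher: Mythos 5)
Your proof is correct and follows essentially the same route as the paper: take $\lambda=\gamma$, let $A_n$ be the support of $\mu^w_n$ and $B_n$ its complement, and observe that $\mu^w_n(B_n)=0$ while every state in $A_n$ carries mass above $\gamma$ for $n\geq N$. Your explicit treatment of the indices $n<N$ (setting $A_n=\emptyset$ there) is in fact slightly more careful than the paper's own write-up, which quietly ignores that the definition of a simple process quantifies over all $n$.
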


We introduce the notion of \emph{sequence of recurrent execution trees} in order to represent a process which may not be chain recurrent, 
but which can be decomposed as a sequence of a finite number of chain recurrent execution trees. The \emph{length} of the sequence measures the number of steps 
which do not belong to a chain recurrent subsequence, and will be useful to bound the probabilities which appear. 
Lemma~\ref{l-exec trees impl simple} uses the key Lemma \ref{l-borne mots rec}.

\begin{definition}[Sequence of recurrent execution trees]
A \emph{sequence of recurrent execution trees} is a finite sequence $(\alpha_1,\rho_1),\rho_1',(\alpha_2,\rho_2),\rho_2',...(\alpha_k,\rho_k)$ such that:
\begin{compactitem}
	\item $\rho_k\in\Sigma^\omega$, and for $i\in[1;k-1]$ we have $\rho_i,\rho_i'\in\Sigma^*$
	\item For all $i\in[2;k]$ we have $\mathrm{Supp}(\alpha_i)\subseteq \mathrm{Supp}(\delta(\alpha_{i-1},\rho_{i-1}\cdot\rho_{i-1}'))$
	\item All the execution trees $(\alpha_i,\rho_i)$ are chain recurrent
\end{compactitem}
The \emph{length} of the sequence is defined as $\sum_{i=1}^{k-1}|\rho_i'|$.
\end{definition}

Given an execution tree $(\alpha,w)$, a subsequence of recurrent execution trees of $(\alpha,w)$ is a sequence of recurrent execution trees $(\alpha_1,\rho_1),\rho_1',(\alpha_2,\rho_2),\rho_2',...(\alpha_k,\rho_k)$ such that $\alpha=\alpha_1$ and $w=\rho_1\cdot\rho_1'\cdot\rho_2\cdot\rho_2'\ldots\cdot\rho_k$.

\begin{lemma}\label{l-exec trees impl simple}
Let $\mathcal{A}$ be a probabilistic automaton. Suppose that there exists $K\in\mathbb{N}$ such that for all execution trees $(\alpha,\rho)$, there exists a subsequence of recurrent execution trees of length at most $K$. 
Then $\mathcal{A}$ is simple.
\end{lemma}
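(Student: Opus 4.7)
The plan is to reduce to Lemma~\ref{l-ult pos impl simple}: to prove that $\mathcal{A}$ is simple it suffices, for every $w \in \Sigma^\omega$, to exhibit $\gamma > 0$ and $N \in \mathbb{N}$ such that $\delta(\alpha, w[1..n])(q) > \gamma$ for every $n \geq N$ and every $q \in \mathrm{Supp}(\delta(\alpha, w[1..n]))$. I would in fact aim for a single $\gamma$ depending only on $\mathcal{A}$ and $K$, valid for all $n$, so that $N = 0$ suffices.

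Given $w$ and $n$, I would apply the hypothesis to the execution tree $(\alpha, w[1..n])$ to obtain a decomposition
\[
w[1..n] \;=\; \rho_1 \cdot \rho_1' \cdot \rho_2 \cdot \rho_2' \cdots \rho_k,
\]
with each $(\alpha_i, \rho_i)$ chain recurrent and $\sum_{i<k}|\rho_i'| \leq K$. After merging any trivial segments with empty gaps, I would assume $k \leq K+1$. For any target $q \in \mathrm{Supp}(\delta(\alpha, w[1..n]))$, I would pick a witnessing path of states $q_0 \in \mathrm{Supp}(\alpha),\; q_1, \ldots, q_k = q$ with positive transition probability at each step along the corresponding subword, and with $q_{i-1} \in \mathrm{Supp}(\alpha_i)$ for every $i$.

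On each chain-recurrent segment I would invoke the key Lemma~\ref{l-borne mots rec}, which requires chain recurrence starting from the single state $q_{i-1}$ rather than from the distribution $\alpha_i$. I would establish this via a downward monotonicity property: if $(A, \sigma)$ is chain recurrent and $B \subseteq A$, then $(B, \sigma)$ is chain recurrent as well, since any subset $B \cdot \sigma_1$ that happens to be closed under $\sigma_2$ is contained in the larger closed set $A \cdot \sigma_1$ whose states are recurrent under $\sigma_2$, and a state recurrent in a closed chain remains recurrent when we restrict to a smaller closed subset it never leaves. Lemma~\ref{l-borne mots rec} then gives $\delta(q_{i-1}, \rho_i)(\tilde q) \geq \epsilon^{2^{2 \cdot |Q|}}$ for every reachable $\tilde q$, while on each gap $\rho_i'$ the trivial per-letter bound $\epsilon$ along the chosen path contributes a factor of at least $\epsilon^{|\rho_i'|}$. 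Multiplying all these contributions yields
\[
\delta(\alpha, w[1..n])(q) \;\geq\; \alpha_{\min} \cdot \bigl(\epsilon^{2^{2 \cdot |Q|}}\bigr)^{K+1} \cdot \epsilon^{K},
\]
where $\alpha_{\min} = \min\{\alpha(q) : \alpha(q) > 0\}$. This uniform positive constant is the desired $\gamma$, and Lemma~\ref{l-ult pos impl simple} then concludes that $\{\mu^w_n\}_{n \in \mathbb{N}}$ is simple; since $w$ was arbitrary, $\mathcal{A}$ is simple.

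The main obstacle I anticipate is the downward monotonicity of chain recurrence, which is needed to transfer Lemma~\ref{l-borne mots rec} from the distributional setting to the single-state setting, together with the bookkeeping needed to bound $k$ uniformly in the decomposition. Once these two ingredients are in place, the rest of the argument is a routine multiplicative chaining of the bounds furnished by the hypothesis and the key lemma.
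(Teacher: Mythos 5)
Your overall architecture is the same as the paper's: reduce to Lemma~\ref{l-ult pos impl simple} and obtain a uniform lower bound $\gamma$ by chaining the key Lemma~\ref{l-borne mots rec} over the chain-recurrent segments with a crude $\epsilon^{|\rho_i'|}$ bound over the gaps. The genuine gap is exactly at the step you flag as the main obstacle: the ``downward monotonicity'' of chain recurrence is \emph{false}, and your argument for it breaks at the words ``the larger closed set $A\cdot\sigma_1$.'' Chain recurrence of $(A,\sigma)$ only asserts that $(A\cdot\sigma_1,\sigma_2)$ is not a $\#$-reduction, and this is vacuously satisfied whenever $A\cdot\sigma_1$ is \emph{not} closed under $\sigma_2$ -- in which case it says nothing about a closed subset $B\cdot\sigma_1\subseteq A\cdot\sigma_1$. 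Concretely, take states $\{0,0',1,2,3,4\}$ where $b$ sends $0$ uniformly to $\{1,2\}$ and sends $0'$ to $3$, while $a$ sends $1$ uniformly to $\{1,2\}$, fixes $2$ and $4$, and sends $3$ to $4$. Then $(\{0,0'\},b\cdot a)$ is chain recurrent (no prefix decomposition ever produces a set closed under the remaining subword), but $(\{0\},b\cdot a)$ is not: $\{0\}\cdot b=\{1,2\}$ is closed under $a$ with $1$ transient, so $(\{1,2\},a)$ is a $\#$-reduction. Hence you cannot transfer Lemma~\ref{l-borne mots rec} to the single states $q_{i-1}$ along your witnessing path.

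The correct repair is to use the key lemma at the level of the distributions $\alpha_i$ themselves rather than of individual states: the proof of Lemma~\ref{l-borne mots rec} goes through verbatim with an arbitrary initial distribution in place of the Dirac at $q$ (the sets $S_i=\delta^{-1}(a_{i+1}\ldots a_n)(U)$ and the count of indices where $\delta(S_i,a_{i+1})\not\subseteq S_{i+1}$ never use that the initial mass is concentrated on one state), and yields $\delta(\alpha_i,\rho_i)(\tilde q)\geq\bigl(\min_{q\in\mathrm{Supp}(\alpha_i)}\alpha_i(q)\bigr)\cdot\epsilon^{2^{2\cdot|Q|}}$ for every $\tilde q$ in the support. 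This distribution-level bound is what the paper's induction on the masses $\alpha_i'(q)$ implicitly uses, and it is the form you need. Two smaller points: (i)~``merging trivial segments with empty gaps'' to force $k\leq K+1$ is also unjustified, since the concatenation of two chain-recurrent execution trees need not be chain recurrent (a $\#$-reduction can straddle the junction); the bound on $k$ should instead come from requiring the gaps $\rho_i'$ to be nonempty, as they are in the decomposition actually produced by Lemma~\ref{l-s simple impl exec trees}. (ii)~The hypothesis as formalized in the paper (the last segment $\rho_k$ is infinite) applies to the infinite tree $(\alpha,w)$, and the paper invokes it once to read off both $N$ and $\gamma$; applying it separately to each finite prefix $w[1..n]$ needs a finite-word variant of the definition.
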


\begin{lemma}\label{l-s simple impl exec trees}
Suppose that $\mathcal{A}$ is structurally simple. Then for all execution trees $(\alpha,w)$, there exists a subsequence of recurrent execution trees of length at most $2^{2\cdot|Q|}$.
\end{lemma}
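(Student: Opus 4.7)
The plan is to construct the required sequence of recurrent execution trees by a left-to-right greedy procedure on $w$, combined with a potential argument to bound the total gap length. Concretely, set $\alpha_1=\alpha$ and, at stage $i$, define $\rho_i$ to be the longest prefix of the untreated suffix of $w$ such that $(\alpha_i,\rho_i)$ is chain recurrent. If $\rho_i$ exhausts the suffix, terminate with $\rho_i \in \Sigma^\omega$. Otherwise, maximality forces a $\#$-reduction at the boundary: there is some minimal extension $\eta$ by which $(C_i,\eta)$ is a $\#$-reduction, where $C_i=\mathrm{Supp}(\delta(\alpha_i,\rho_i))$.

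At this point I invoke the structural simplicity hypothesis applied to the pair $(C_i,\eta)$: the minimal $D_i \subseteq C_i$ with $C_i \stackrel{\#-\eta}{\rightarrow} D_i$ satisfies $(D_i,\eta)$ chain recurrent. I choose the gap word $\rho_i'$ to be the concatenation of words executing this iterated $\#$-$\eta$ reduction, taken just long enough so that $D_i \subseteq \mathrm{Supp}(\delta(\alpha_i,\rho_i\cdot\rho_i'))$; this is possible because each $\#$-step strictly contracts the relevant recurrent support, so at most $|Q|$ iterations suffice to expose $D_i$ in the support. I then set $\alpha_{i+1}$ to be a distribution on $D_i$, so that $(\alpha_{i+1},\rho_{i+1})$ begins a fresh chain recurrent piece by the minimality clause.

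The heart of the proof is bounding $\sum_i|\rho_i'|$ by $2^{2|Q|}$, which I would do by a pair-of-subsets potential. Each gap letter contributes to advancing some pair $(C,D) \in 2^Q \times 2^Q$ along the iterated $\#$-reduction chain attached to the current stage: $C$ tracks the ambient support and $D$ tracks the target recurrent subset currently being "uncovered". The key claim is that across all gaps of the entire decomposition, no pair $(C,D)$ is charged twice. If it were, the induced loop would witness a $\#$-reduction strictly inside some alleged chain recurrent piece $(\alpha_j,\rho_j)$, contradicting either the maximality of $\rho_j$ or the minimality of $D_j$ guaranteed by structural simplicity. Since there are only $(2^{|Q|})^2=2^{2|Q|}$ such pairs, the total gap length is bounded by $2^{2|Q|}$.

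The main obstacle is the formalization of the two-sided monotonicity of the $(C,D)$ potential, particularly across the boundary between the end of one gap and the start of the next chain recurrent piece: one must check that a pair charged during the resolution of one $\#$-reduction cannot recur during the resolution of a later one, which requires unwinding the precise definition of the iterated $\#$-$\rho$ reachability relation from the appendix. A subsidiary technical point is the choice of $\rho_i'$: one needs the iterated $\#$-$\eta$ process to actually land probability on $D_i$ after a finite, controllable number of applications of $\eta$, rather than only asymptotically, and this is exactly where the "minimal $D$" clause of structurally simple, combined with $D \cdot \eta \subseteq D$, guarantees that a single pass through the chain suffices.
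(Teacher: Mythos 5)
There is a genuine gap, and it sits exactly where you flag "the main obstacle" plus one place you treat as routine. First, the gap words $\rho_i'$ in a subsequence of recurrent execution trees are not freely choosable: the definition requires $w=\rho_1\cdot\rho_1'\cdot\rho_2\cdot\rho_2'\cdots\rho_k$, so each $\rho_i'$ must be the factor of $w$ that actually follows $\rho_i$. Your construction takes $\rho_i'$ to be "the concatenation of words executing the iterated $\#$-$\eta$ reduction," i.e.\ essentially $\eta^k$ for the $\#$-reduction word $\eta$ --- but $\eta$ is a suffix of the already-consumed prefix $\rho_i\cdot a$, and nothing forces $w$ to continue with copies of it. Relatedly, the claim that "at most $|Q|$ iterations suffice to expose $D_i$ in the support" misreads the $\#$ operation: since $(A\cup B)\cdot\eta=A\cup B$, reading $\eta^k$ never changes the support for any finite $k$; the set $B=(A\cup B)\cdot\eta^\#$ is defined by recurrence of the induced homogeneous chain, i.e.\ by where the probability mass concentrates in the limit, not by support contraction after finitely many steps. (The inclusion $\mathrm{Supp}(\alpha_{i+1})\subseteq\mathrm{Supp}(\delta(\alpha_i,\rho_i\cdot\rho_i'))$ you actually need is automatic for $D_i\subseteq C_i$, so this whole gap-engineering is both impossible and unnecessary.) Finally, the pair-of-subsets potential --- that no $(C,D)$ is charged twice across all gaps --- is the entire content of the bound, and you only assert it; without it the argument does not terminate in a bounded number of stages.

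The paper's proof repairs all three points at once by a different bookkeeping. At stage $i$ it does not greedily take the longest chain-recurrent prefix; instead it takes the longest prefix $\rho_i$ admitting some $A_i''\subseteq A_i$ with $A_i\stackrel{\#\text{-}\rho_i}{\rightarrow}A_i''$, picks a minimal such $A_i'$ (structural simplicity then makes $(A_i',\rho_i)$ chain recurrent), lets $\rho_i'$ be the longest word with $(A_i',\rho_i')$ chain recurrent, and makes the gap exactly the single next letter $a_i$ of $w$ that breaks chain recurrence --- so every gap is a factor of $w$ of length $1$ by construction. The letter $a_i$ witnesses a $\#$-reduction whose recurrent part $V_i$ becomes $A_{i+1}$, giving $A_1\stackrel{\#}{\rightarrow}A_2\stackrel{\#}{\rightarrow}\cdots$ with the $A_j$ pairwise distinct (this is where the minimality of the $A_i'$ is used), so there are at most $2^{|Q|}$ stages and the total gap length is at most $2^{|Q|}\leq 2^{2\cdot|Q|}$. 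It also handles the terminal case your sketch omits: when the set of admissible prefixes is unbounded, the last piece is the infinite suffix, and its chain recurrence is extracted by a pigeonhole over the sets $C_{\mathrm{pref}}$ supplied by structural simplicity. If you want to salvage your greedy variant, you would need to replace the $(C,D)$-potential by this distinctness-of-$\#$-reachable-supports argument and force the gaps to be single letters of $w$.
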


The following follows from Lemma \ref{l-exec trees impl simple} and \ref{l-s simple impl exec trees}.
\begin{theorem}\label{t-struct smple is simple}
All structurally simple automata are simple.
\end{theorem}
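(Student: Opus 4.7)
The plan is a direct combination of the two preceding lemmas, with $K$ taken to be the uniform bound $2^{2\cdot|Q|}$ provided by the structural hypothesis. First, I would invoke Lemma~\ref{l-s simple impl exec trees}: because $\mathcal{A}$ is assumed structurally simple, every execution tree $(\alpha,w)$ admits a subsequence of recurrent execution trees whose length is bounded above by $2^{2\cdot|Q|}$. Crucially, this bound depends only on $|Q|$ and not on the choice of $\alpha$ or $w$, which is precisely the uniformity required by the hypothesis of Lemma~\ref{l-exec trees impl simple}.

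Setting $K := 2^{2\cdot|Q|}$, the second step is to feed this into Lemma~\ref{l-exec trees impl simple}, which then delivers the conclusion that $\mathcal{A}$ is simple. In more detail, Lemma~\ref{l-s simple impl exec trees} supplies, for every word $w$, a decomposition of the prefixes of $w$ as a concatenation $\rho_1\cdot\rho_1'\cdots\rho_k$ in which the chain-recurrent pieces $(\alpha_i,\rho_i)$ carry almost all the mass (thanks to the key Lemma~\ref{l-borne mots rec}, which gives the uniform lower bound $\epsilon(\mathcal{A})^{2^{2\cdot|Q|}}$ on transition probabilities along chain-recurrent execution trees), while the intermediate words $\rho_i'$ have total length at most $K$. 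Lemma~\ref{l-exec trees impl simple} then exploits this decomposition to produce the partition $(A_n,B_n)$ of $Q$ and the constant $\lambda>0$ witnessing that the induced process $\{\mu_n^w\}_{n\in\mathbb{N}}$ is simple.

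I do not expect any step of the proof itself to present an obstacle, since all the substantive work has been done in Lemmas~\ref{l-borne mots rec}, \ref{l-ult pos impl simple}, \ref{l-exec trees impl simple} and \ref{l-s simple impl exec trees}. The only point worth flagging is that the bound $K$ produced by Lemma~\ref{l-s simple impl exec trees} must be independent of $w$, so that Lemma~\ref{l-exec trees impl simple} can be applied with a single value of $K$ to \emph{every} word simultaneously; this is ensured because the bound $2^{2\cdot|Q|}$ is purely structural. Thus the proof reduces to one line: apply Lemma~\ref{l-s simple impl exec trees} with $K=2^{2\cdot|Q|}$, then apply Lemma~\ref{l-exec trees impl simple} to conclude that for every $w\in\Sigma^\omega$ the process $\{\mu_n^w\}_{n\in\mathbb{N}}$ is simple, which is exactly the definition of a simple automaton. \qed
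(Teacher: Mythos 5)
Your proof is correct and follows exactly the same route as the paper, which derives the theorem by combining Lemma~\ref{l-s simple impl exec trees} (giving the uniform bound $K=2^{2\cdot|Q|}$, dependent only on $|Q|$) with Lemma~\ref{l-exec trees impl simple}. Your observation that the uniformity of $K$ over all words and initial distributions is the point that makes the composition legitimate is apt, and no further argument is needed.
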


\subsection{Decision problems for structurally simple automata}\label{ss-dec-pbs-ssPA}

For the following of this sub-section, $\mathcal{A}$ is a structurally simple automaton with state space $Q$ and initial distribution $\alpha$.
We consider the complexity of the decision problems related to infinite words on structurlly simple PAs. 
The upper bound on the complexity in Theorem~\ref{p:decib_pb_on_simple} follows from the results of Section 
\ref{s-decidable pb}, since the process induced on a simple PA by a word $w\in\Sigma$ is always simple. 
The lower bound follows from the fact that the PA used for the lower bound of Section \ref{s-decidable pb} is structurally 
simple.

\begin{theorem}\label{p:decib_pb_on_simple}
The almost and positive problems are PSPACE-complete for parity conditions on structurally simple PAs.
\end{theorem}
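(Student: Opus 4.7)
The plan is to reduce both the upper and the lower bound directly to the already established results on the simple variants of the parity problems.

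For the upper bound, the key observation is that on a structurally simple automaton every word induces a simple process. Indeed, by Theorem \ref{t-struct smple is simple}, structural simplicity implies simplicity in the semantic sense, so for every $w \in \Sigma^\omega$ the distribution sequence $\lbrace \mu^w_n\rbrace_{n\in\mathbb{N}}$ is simple by definition. Consequently, for a structurally simple $\mathcal{A}$ the two sets of witnesses
\[
\set{w \mid \mathbb{P}^w_\mathcal{A}(\Phi)=1} \quad \text{and} \quad \set{w \mid \lbrace \mu^w_n\rbrace_{n\in\mathbb{N}}\ \text{is simple and}\ \mathbb{P}^w_\mathcal{A}(\Phi)=1}
\]
coincide (and similarly with $>0$ in place of $=1$). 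Therefore, the almost (resp.\ positive) parity problem on a structurally simple PA is exactly the simple almost (resp.\ positive) parity problem, which is in PSPACE by Propositions \ref{p-dec simple almost} and \ref{p-dec simple positive}.

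For the lower bound, I would revisit the PSPACE-hardness reductions used in Propositions \ref{p-dec simple almost} and \ref{p-dec simple positive}. These reductions build a concrete family of probabilistic automata from instances of a PSPACE-hard problem (such as universality of finite automata or the emptiness of alternating polynomial-space-bounded Turing machines). The task is to verify that each such automaton is in fact structurally simple, i.e.\ that for every $\rho \in \Sigma^*$ and every $C \subseteq Q$, each minimal $D \subseteq C$ reachable from $C$ by iterated $\#$-transitions under $\rho$ gives a chain-recurrent pair $(D,\rho)$. Since the hardness constructions typically use only deterministic transitions together with a small number of uniform coin tosses whose support does not shrink under further reading of the same input, this property can be inspected directly from the transition structure. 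Once the construction is seen to yield a structurally simple PA, the PSPACE lower bounds of the simple variants lift verbatim.

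I expect the main obstacle to be the second step: the structural-simplicity check on the specific hardness gadget. If the original reductions were not designed with the class of structurally simple automata in mind, one may need to slightly modify them—for example by adding sink loops on fresh letters or duplicating states so that any candidate $\#$-reduction on the reachable supports already corresponds to the recurrent behaviour of a single letter—without altering the encoded PSPACE instance or the accepting probabilities. After such a polynomial-time rewriting the hardness transfers, and combining both directions yields PSPACE-completeness for both the almost and positive parity problems on structurally simple probabilistic automata.
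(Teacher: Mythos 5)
Your proposal matches the paper's proof: the upper bound is obtained exactly as you describe, by noting that Theorem \ref{t-struct smple is simple} makes the almost/positive parity problems on structurally simple PAs coincide with their simple variants (Propositions \ref{p-dec simple almost} and \ref{p-dec simple positive}), and the lower bound is obtained by observing that the hardness gadget from those propositions --- the reduction from emptiness of a finite intersection of regular languages, whose only probabilistic transition is the uniform branch at the state $s$ on the letter $x$ --- is itself structurally simple. The paper is no more detailed than you are on verifying structural simplicity of that gadget, so your hedging on that point is not a gap relative to the paper's own argument.
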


In Proposition 6 of \cite{gimbert2010probabilistic}, the authors show that if $F\subseteq Q$ is reachable from a state $q_0$ in the support graph of $\mathcal{A}$, 
then it is limit reachable from $q_0$ in $\mathcal{A}$. A generalization of this result to the extended support graph gives half of Proposition \ref{p-simple-equiv-reach} (details with complete proof in appendix). 
Theorem \ref{t-limit-reach-PSPACE} follows from Proposition \ref{p-simple-equiv-reach} and Lemma \ref{l-algo-expaspace} 
(details in Appendix \ref{s-ext-supp} for the definition of the extended support graph). 
Theorem~\ref{prop:simple-dec} shows that the limit parity problem is also decidable for simple automata.

\begin{proposition}\label{p-simple-equiv-reach}
 Let $\mathcal{A}$ be a structurally simple automaton, and let $F\subseteq Q$. Then 
\textbf{(1)} $F$ is reachable from $\mathrm{Supp}(\alpha)$ in the extended support graph of $\mathcal{A}$ iff \textbf{(2)} it is limit reachable from $\alpha$ in $\mathcal{A}$.
\end{proposition}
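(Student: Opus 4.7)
The plan is to prove both directions of the equivalence; the direction $(1) \Rightarrow (2)$ (structural reachability implies limit reachability) generalizes Proposition~6 of \cite{gimbert2010probabilistic} from the standard support graph to the extended support graph, while the harder direction $(2) \Rightarrow (1)$ (limit reachability implies structural reachability) is where structural simplicity becomes essential, via Lemmas~\ref{l-borne mots rec} and~\ref{l-s simple impl exec trees}.

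For $(1) \Rightarrow (2)$, I would proceed by induction on the length $m$ of a path $\Supp(\alpha) = C_0, C_1, \ldots, C_m = F$ in the extended support graph. Each edge $C_i \to C_{i+1}$ is labelled either by a single letter $a$ with $C_i \cdot a = C_{i+1}$, or by an iterated $\#$-transition with a word $\rho$ such that $C_i \cdot \rho = C_i$ and iterating $\rho$ stabilizes on the recurrent part $C_{i+1}$. In the iterated $\#$-case, by the classical convergence theorem for finite homogeneous Markov chains, reading $\rho^{N_i}$ for $N_i$ sufficiently large concentrates at least $1 - \epsilon/m$ of the mass on $C_{i+1}$. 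Concatenating the resulting blocks and applying a union bound produces a witness $w_\epsilon$ with $\mathbb{P}^{w_\epsilon}_{\mathcal{A}}(\Reach(F)) > 1 - \epsilon$.

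For $(2) \Rightarrow (1)$, I would fix $\epsilon > 0$ and a witness $w_\epsilon \in \Sigma^\omega$ with $\mathbb{P}^{w_\epsilon}_{\mathcal{A}}(\Reach(F)) > 1 - \epsilon$, then cut $w_\epsilon$ to a finite prefix that still carries almost all the success mass. By Lemma~\ref{l-s simple impl exec trees}, this prefix decomposes as $\rho_1 \rho_1' \rho_2 \rho_2' \cdots \rho_k$ where each $(\alpha_i, \rho_i)$ is chain recurrent and $\sum_i |\rho_i'| \leq 2^{2|Q|}$. By Lemma~\ref{l-borne mots rec}, each chain recurrent block contributes probability at least $\epsilon(\mathcal{A})^{2^{2|Q|}}$ to every state of $\Supp(\delta(\alpha_i, \rho_i))$. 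The induced sequence of supports $\Supp(\alpha_1) \to \cdots \to \Supp(\alpha_k)$ is a candidate path in the extended support graph: each $\rho_i$-block corresponds to an iterated $\#$-edge, and each connecting word $\rho_i'$ corresponds to a sequence of letter-transition edges. Since only finitely many such labelled paths exist (at most $2^{|Q|}$ supports and $k$ bounded), a pigeonhole argument extracts a single path that witnesses arbitrarily small $\epsilon$. The uniform positive-probability bound then forces the terminal support of this path either to meet $F$ or to reach $F$ via one further letter-edge, giving reachability in the extended support graph.

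The hard part will be the $(2) \Rightarrow (1)$ direction: one must precisely match each transition produced by the decomposition of Lemma~\ref{l-s simple impl exec trees} with an edge of the \emph{extended} support graph as formalized in Section~\ref{s-ext-supp}, and extract via pigeonhole a single fixed path independent of $\epsilon$. The quantitative bound of Lemma~\ref{l-borne mots rec} is the decisive tool that prevents probability from leaking along transitions not captured by the extended support graph, and so forces the extracted path to actually witness reachability of $F$.
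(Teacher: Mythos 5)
Your two directions both contain genuine gaps, and your route for \textbf{(2)}$\Rightarrow$\textbf{(1)} is in any case quite different from the paper's. For \textbf{(1)}$\Rightarrow$\textbf{(2)}, you have mischaracterized the edges of the \emph{extended} support graph: they are not only letter-edges and edges of the form ``$C\cdot\rho=C$ and $C\cdot\rho^\#=D$''. An edge $C\stackrel{\#-I}{\rightarrow}D$ arises from an arbitrary \emph{chain of borders} acting on the linked graph of a word, i.e.\ from nested $\#$-reductions applied to sub-intervals of the word (see Example~\ref{e-simple-autom}, where $\lbrace4\rbrace$ is $\#$-reachable from $\lbrace1\rbrace$ via $a\cdot a\cdot a\cdot b\cdot a\cdot a\cdot b$ and three borders, although it is not reachable in the ordinary support graph and no sub-word stabilizes a support). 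Your induction therefore only covers the support graph of \cite{gimbert2010probabilistic}; to handle the extended graph you must induct on the chain of borders, showing that each border action can be simulated by pumping the corresponding sub-word so as to concentrate mass on the recurrent part of the sub-linked-graph. This is what the paper's (admittedly terse) induction over $j$ does.

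For \textbf{(2)}$\Rightarrow$\textbf{(1)}, the decisive step as you state it is wrong. Lemma~\ref{l-borne mots rec} gives the uniform bound $\gamma$ only along chain recurrent execution trees, and in the decomposition of Lemma~\ref{l-s simple impl exec trees} each $\alpha_{i+1}$ is supported on the \emph{recurrent core} $A_{i+1}=V_i$ of a $\#$-reduction, a strict subset of $\Supp(\delta(\alpha,w[1..n]))$ in general; the discarded transient states remain in the support of the actual process with mass that can be arbitrarily small. So the uniform bound does \emph{not} hold on the full terminal support, and that support need not be contained in $F$ at all (again Example~\ref{e-simple-autom}: state $2$ never leaves the support but carries vanishing mass). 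Moreover ``the terminal support meets $F$'' is not the right target: reachability of $F$ in the (extended) support graph requires reaching a node \emph{contained} in $F$. The argument can be repaired --- for one fixed $\epsilon$ smaller than the uniform threshold $\gamma$, every state of the terminal \emph{core} has mass $\geq\gamma>\epsilon$, hence the core is contained in $F$, and the core is $\#$-reachable from $\Supp(\alpha)$ by construction; no pigeonhole over $\epsilon$ is needed --- but this repair must be stated in terms of the core, not the support. Note that the paper proves this direction by an entirely different device: it adjoins a fresh letter $e$ that resets $F$ to $\alpha$ and sends $Q\setminus F$ to a sink, shows that limit reachability without support reachability makes the augmented automaton non-simple, and then uses structural simplicity of $\mathcal{A}$ to force the offending $\#$-reduction to straddle an occurrence of $e$, whence the support read just before $e$ lies inside $F$. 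Your approach, once corrected as above, is more direct, but as written the key quantitative claim does not hold.
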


\begin{theorem}\label{t-limit-reach-PSPACE}
The limit problem is in EXPSPACE for reachability conditions on structurally 
simple PAs.
\end{theorem}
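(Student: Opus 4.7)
The plan is to derive the result immediately from the two building blocks already stated: Proposition \ref{p-simple-equiv-reach}, which reduces the probabilistic limit problem to a purely graph-theoretic reachability question on the extended support graph, and Lemma \ref{l-algo-expaspace}, which gives the EXPSPACE bound for that reachability question. So the ``proof'' is essentially a reduction plus a complexity book-keeping step.

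First I would invoke Proposition \ref{p-simple-equiv-reach}: the input is a structurally simple PA $\mathcal{A}$ with initial distribution $\alpha$ and a target set $F \subseteq Q$, and we want to know whether, for every $\epsilon > 0$, some word accepts with probability at least $1-\epsilon$ under the reachability condition to $F$. By the proposition, this holds iff there exists a set $H \subseteq Q$ with $H \supseteq F$ (or, depending on the formulation of reachability used in the extended support graph, some $H$ that has been collapsed into a state containing $F$) such that $H$ is reachable from $\mathrm{Supp}(\alpha)$ in the extended support graph of $\mathcal{A}$. This converts the probabilistic limit question into a pure reachability question on a graph whose nodes are subsets of $Q$.

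Next I would apply Lemma \ref{l-algo-expaspace}. The extended support graph has at most $2^{|Q|}$ nodes (one per subset of $Q$), so a single node is representable in $|Q|$ bits, but edges may be labelled by arbitrarily long witness words implementing $\#$-reductions, so naive enumeration of edges is prohibitive. The algorithm therefore proceeds nondeterministically: it stores only the current node $S$ (polynomial space) together with a step counter bounded by $2^{|Q|}$ (polynomial space as well), and at each step it guesses either a letter $a \in \Sigma$ or a witness word $\rho$ for a $\#$-reduction from $S$. For the latter, it guesses $\rho$ symbol by symbol while maintaining the current subset reached, so that only exponential space is required to verify that $(S, T, \rho)$ is a legitimate $\#$-reduction. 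Once the target $H$ is reached, the algorithm accepts. By Lemma \ref{l-algo-expaspace}, the length of the witnesses one has to consider can be bounded by a function that is at most exponential in $|Q|$, so this whole nondeterministic procedure runs in NEXPSPACE, which equals EXPSPACE by Savitch's theorem.

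The only obstacle that is not purely bookkeeping is to make sure that the witness-word length in the extended support graph can indeed be kept singly-exponential, which is precisely the content of Lemma \ref{l-algo-expaspace}; since we are taking that lemma as given, the theorem follows by a direct composition of the two ingredients.
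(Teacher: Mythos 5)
Your proposal is correct and follows essentially the same route as the paper, which likewise obtains the theorem directly by composing Proposition~\ref{p-simple-equiv-reach} (reduction of limit reachability to reachability in the extended support graph) with the EXPSPACE construction of that graph via Algorithm~\ref{a-comp_SG} and Proposition~\ref{l-algo-expaspace}; your on-the-fly nondeterministic search plus Savitch is just a space-economical variant of the paper's explicit fixpoint construction and yields the same bound. One small correction: the target of the graph search should be a node $D$ with $D\subseteq F$ (the support concentrated inside $F$), not a set $H\supseteq F$ as written in your parenthetical.
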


\begin{theorem}
\label{prop:simple-dec}
The limit problem is in EXPSPACE for parity conditions on structurally simple PAs.
\end{theorem}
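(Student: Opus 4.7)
The plan is to reduce the limit parity decision problem on a structurally simple PA $\mathcal{A}$ to a combination of two subproblems for which we already have algorithms: a reachability query in the extended support graph (as used for Theorem~\ref{t-limit-reach-PSPACE}) and the simple almost parity problem (PSPACE-complete by Theorem~\ref{p:decib_pb_on_simple}). Explicitly, I aim to show that the limit parity problem has a positive answer from the initial distribution $\alpha$ iff there is a node $C$ of the extended support graph of $\mathcal{A}$ satisfying both: (a) $C$ is reachable from $\Supp(\alpha)$ in the extended support graph, and (b) the simple almost parity problem has a positive answer when $\mathcal{A}$ is restarted with initial distribution uniform on $C$.

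For the ``if'' direction, I would glue two words in the spirit of the proof of Proposition~\ref{p-dec simple almost}. The corollary noted after Proposition~\ref{p-dec simple positive} says that whenever the simple almost parity problem admits a witness, it admits a lasso shape witness $\rho_1\cdot\rho_2^\omega$; I apply this to condition~(b) to obtain a word that, started from the uniform distribution on $C$, accepts with probability $1$. By Proposition~\ref{p-simple-equiv-reach} applied to target $C$, condition~(a) yields, for every $\epsilon>0$, a prefix $u_\epsilon$ whose execution ends in a distribution $\beta_\epsilon$ with $\beta_\epsilon(C)>1-\epsilon$. Using the mixing property of the jet supporting $C$ to renormalise the entry distribution and absorbing a short mixing segment into $u_\epsilon$, the word $u_\epsilon\cdot\rho_1\cdot\rho_2^\omega$ accepts with probability at least $1-2\epsilon$, witnessing limit parity.

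The ``only if'' direction is the main obstacle and leans on the DS theorem together with structural simpleness. Given a sequence $\{w_n\}$ with $\mathbb{P}^{w_n}_\mathcal{A}(\Phi)\to 1$, each induced process is simple (since $\mathcal{A}$ is structurally simple, by Theorem~\ref{t-struct smple is simple}), so Proposition~\ref{p-separation simple jets} provides a jet decomposition with a uniform lower bound $\lambda>0$ on the masses of non-transient jets. By Lemmas~\ref{l-borne mots rec} and~\ref{l-s simple impl exec trees}, the jet supports appearing along the execution are themselves nodes of the extended support graph that are reachable from $\Supp(\alpha)$ in that graph. Since the extended support graph is finite, a pigeonhole argument extracts a fixed node $C$ that is reachable in the graph and is a recurrent jet support for infinitely many $w_n$. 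On that subsequence the conditional acceptance probability of $\Phi$ given entry into the $C$-jet tends to $1$; atomicity of the tail event $\tau_\infty$ associated to $C$ (Proposition~\ref{p-tau atomics}) forces this conditional probability to equal $1$, and then the argument used in the proof of Proposition~\ref{p-dec simple almost} produces a lasso shape word establishing condition~(b) for this $C$.

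For the complexity upper bound, the nodes of the extended support graph are subsets of $Q$, representable in space polynomial in $|Q|$, and there are at most $2^{|Q|}$ of them. For each candidate $C$, condition~(b) is a simple almost parity instance on $\mathcal{A}$ with initial distribution uniform on $C$, decidable in PSPACE by Theorem~\ref{p:decib_pb_on_simple}. Condition~(a) is the same extended-support reachability problem that underlies Theorem~\ref{t-limit-reach-PSPACE} and hence is in EXPSPACE. An EXPSPACE procedure enumerates candidate sets $C$, runs the PSPACE check for~(b) in place, and on success invokes the EXPSPACE reachability procedure for~(a); the overall bound is EXPSPACE. The hardest technical step, and the one where structural simpleness is crucial, is extracting the node $C$ in the backward direction: it is the uniform lower bound $\epsilon(\mathcal{A})^{2^{2|Q|}}$ of Lemma~\ref{l-borne mots rec} that keeps jet masses bounded away from zero uniformly in $n$, letting pigeonhole produce a single $C$ that simultaneously witnesses~(a) and~(b).
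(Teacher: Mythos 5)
Your overall architecture is the same as the paper's: the paper's (very terse) proof characterizes the limit parity problem as the existence of a set $A$ that is limit reachable from $\Supp(\alpha)$ (equivalently, by Proposition~\ref{p-simple-equiv-reach}, reachable in the extended support graph) together with a finite word $\rho$ such that $A\cdot\rho\subseteq A$ and the Markov chain induced by $(A,\rho)$ satisfies the parity condition almost surely, and then checks this in EXPSPACE via Theorem~\ref{t-limit-reach-PSPACE}. Your conditions (a)+(b) are an equivalent packaging of this: allowing a finite prefix $\rho_1$ from $C$ to the loop set $G$ is harmless, since $G=\delta(C,\rho_1)$ is again limit reachable whenever $C$ is. Your complexity analysis also matches the paper's. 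In the ``if'' direction your appeal to the mixing property is unnecessary: almost-sure acceptance of the chain induced by $(G,\rho_2)$ from the uniform distribution on $G$ already implies almost-sure acceptance from every state of $G$, hence from every state of $C$ after reading $\rho_1$, so acceptance from $\beta_\epsilon$ is at least $\beta_\epsilon(C)>1-\epsilon$ directly.

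The genuine gap is in the ``only if'' direction, in the extraction of $C$. You take $C$ to be the support of a \emph{single} recurrent jet and assert it is reachable from $\Supp(\alpha)$ in the extended support graph. By Proposition~\ref{p-simple-equiv-reach} that reachability is equivalent to limit reachability, and a single jet only carries mass bounded below by some $\lambda>0$, not mass tending to $1$; if, say, the mass splits irrevocably into two accepting absorbing components of mass $1/2$ each, then neither jet support is limit reachable and your condition (a) fails for your chosen $C$, even though limit parity holds. (Your citation of Lemmas~\ref{l-borne mots rec} and~\ref{l-s simple impl exec trees} for this reachability claim is also misplaced: those lemmas give probability lower bounds and decompositions into chain recurrent execution trees, not reachability statements about the extended support graph.) The repair, which is what the paper's ``same kind of arguments as in the previous proofs'' implicitly intends, is to take $C$ to be the union of the supports of all \emph{accepting} jets at a suitable time $N_0$: since the acceptance probability of $w_n$ exceeds $1-1/2^n$ and the transient jet's mass vanishes, this union carries mass tending to $1$, so pigeonholing over $n$ yields a single limit-reachable set; and because all jets of a fixed $w_n$ evolve under the same word, the argument of Proposition~\ref{p-dec simple almost} produces one periodic factor $\rho_2$ of $w_n$ witnessing almost-sure acceptance simultaneously from each accepting jet, hence from their union.
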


The following theorem establishes the decidability of the problem that given 
a probabilistic automaton whether the automaton is structurally simple.

\begin{theorem}\label{t-decid-ssimple}
We can decide in EXPSPACE whether a given probabilistic automaton is structurally simple or not.
\end{theorem}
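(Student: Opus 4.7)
The plan is to reformulate structural simplicity as a bounded search problem on the extended support graph, show that any failure admits a witness word of exponentially bounded length, and then exhibit a non-deterministic exponential-space algorithm; Savitch's theorem converts $\mathrm{NEXPSPACE}$ to $\mathrm{EXPSPACE}$.

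First, I would fix notation for the extended support graph: its vertices are the $2^{|Q|}$ subsets of $Q$, and its edges are labeled either by an ordinary support transition $S\stackrel{\rho}{\rightarrow} S\cdot\rho$ or by a $\#$-edge $S\stackrel{\#-\rho}{\rightarrow} S\cdot\rho^\#$ (the latter available only when $S\cdot\rho=S$). The iterated $\#$-$\rho$ reachability $C\stackrel{\#-\rho}{\rightarrow} D$ is then a walk on this graph using a single word $\rho$, with $\#$-collapses applied exactly at positions where the current support stabilizes under $\rho$. Since each $\#$-collapse strictly reduces the cardinality of the current support, at most $|Q|$ such collapses can appear along any walk.

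Second, I would bound the length of a minimal counterexample word. The sequence of supports visited by $\rho$ lives in a universe of size $2^{|Q|}$. Between two successive $\#$-collapses, any sub-walk longer than $2^{|Q|}$ steps must revisit a support; the resulting support-level loop can be excised without altering the minimal $D$ reached by $C\stackrel{\#-\rho}{\rightarrow} D$ and without creating or destroying a $\#$-reduction, provided the excision is aligned with a support revisit and is not placed inside an active $\#$-reduction segment. Combining this with the at most $|Q|$ $\#$-collapses yields a witness word of length $O(|Q|\cdot 2^{|Q|})=2^{O(|Q|)}$.

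Third, I would give the algorithm. Non-deterministically guess $C\subseteq Q$ (in $|Q|$ bits) and a word $\rho$ of length at most $2^{O(|Q|)}$, storing $\rho$ on the tape using exponential space. Simulate the extended support graph on $(C,\rho)$ to compute the minimal $D$ reached by $C\stackrel{\#-\rho}{\rightarrow} D$, keeping at most polynomially-sized intermediate supports at any time. Then search over all splits $\rho=\rho_1\rho_2\rho_3$ for a witness to failure of chain recurrence of $(D,\rho)$: compute $H=\delta(D,\rho_1)$ and check both $H\cdot\rho_2=H$ and $H\cdot\rho_2^\#\subsetneq H$. The quantity $H\cdot\rho_2^\#$ is the recurrent part of the homogeneous Markov chain with transition matrix $M_{\rho_2}$ restricted to $H$, which is computable in space polynomial in $|Q|$ given $\rho_2$. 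The run accepts iff such a split is found, so the whole non-deterministic procedure runs in $\mathrm{NEXPSPACE}=\mathrm{EXPSPACE}$.

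The main obstacle will be rigorously justifying the length bound in the second step. One must show that shrinking a revisited support-level cycle neither conceals a $\#$-reduction that was originally present on $(D,\rho)$ nor creates a spurious one, and that the minimal $D$ reached under iterated $\#$-$\rho$ is preserved under the cut. The delicate point is that chain recurrence quantifies over all prefix decompositions of $\rho$, so the cuts have to be placed on support-graph cycle boundaries that lie outside any active $\#$-reduction segment; this surgical pumping argument, together with the fact that at most $|Q|$ $\#$-collapses can occur, is what gives the stated $2^{O(|Q|)}$ bound and hence the EXPSPACE upper bound.
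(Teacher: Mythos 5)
Your overall strategy---bound the length of a minimal counterexample word and then guess-and-check in nondeterministic exponential space---is a reasonable plan in principle, but it is not the paper's route, and as written it has a genuine gap exactly where you locate it: the length bound of step 2 is asserted, not proved, and it is the entire content of the theorem. Two concrete problems. First, your model of $C\stackrel{\#-\rho}{\rightarrow}D$ as a walk in which ``$\#$-collapses are applied exactly at positions where the current support stabilizes under $\rho$,'' each strictly shrinking the support, does not match the paper's definition via linked graphs and chains of borders: a border is a pair of positions $(n_1,n_2)$ with $B_{n_2}\subseteq A_{n_1}$ (inclusion, not stabilization), its action retroactively rewrites the linked graph from position $n_1-1$ onward using the recurrent part of the \emph{compaction} $\Comp(G_{n_1},\ldots,G_{n_2})$, and borders may be chained and nested. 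So the ``at most $|Q|$ collapses'' count and the claim that the minimal $D$ can be computed while ``keeping at most polynomially-sized intermediate supports'' are unjustified. Second, excising a support-level loop must simultaneously preserve (a) which sets are $\#-\rho$-reachable from $C$ (minimality of $D$ is a universal statement, so surgery must neither destroy the witnessing chain of borders nor create a chain reaching a strictly smaller set), and (b) the existence of a $\#$-reduction inside $(D,\rho)$. Both depend on the bipartite reachability graphs $\Comp(\cdot)$ of the segments involved, not merely on the sequence of supports; revisiting a support does not mean revisiting the same compaction, and the active border or $\#$-reduction segment may span essentially the whole word, leaving no place ``outside'' it to cut. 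A correct pumping argument would have to track pairs (support, accumulated bipartite graph), which is doable but is precisely the missing lemma.

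The paper avoids this simultaneous-preservation difficulty entirely by proving a structural characterization instead of a witness bound. It first shows (Lemma on sub-reachability) that $\#$-reachability is monotone under taking subsets of the source, and then proves that $\mathcal{A}$ is structurally simple iff for every \emph{minimal} $C$ (one admitting no $D\subsetneq C$ with $C\stackrel{\#}{\rightarrow}D$) and every $\rho$, the ordinary support $\Supp(\delta(C,\rho))$ does not lie in the set $S(C)$ of supports that can $\#$-return to $C$ while their plain $\rho$-image differs from $C$. This decouples the two roles your single word $\rho$ has to play, and reduces the decision to reachability computations on the extended support graph, which is itself built by the fixed-point Algorithm for the extended support graph (where the bounded-path Lemma, the analogue of your pumping step, is applied only to single border applications over an already-computed graph, not to arbitrary nested chains on one long word). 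If you want to salvage your approach, you would need to state and prove the witness-length lemma with the compaction graphs made explicit; otherwise the argument does not go through.
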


\vspace{-1em}

\subsection{Closure properties for Structurally Simple Automata}\label{ss-clos-props}

Given $\mathcal{A}_1=(S_1,\Sigma,\delta_1,\alpha_1)$ and $\mathcal{A}_2=(S_2,\Sigma,\delta_2,\alpha_2)$ two structurally simple automata on the same alphabet $\Sigma$, 
the construction of the Cartesian product automaton $\mathcal{A}_1\Join\mathcal{A}_2$ is standard. We detail this construction in appendix, along with the proof of the following proposition.

\begin{proposition}\label{p-product simple}
 Let $\mathcal{A}_1$ and $\mathcal{A}_2$ be two structurally simple automata. Then $\mathcal{A}=\mathcal{A}_1\Join\mathcal{A}_2$ is structurally simple.
\end{proposition}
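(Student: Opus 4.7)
The plan is to exploit the fact that the product transitions $\delta((q_1,q_2),a)((q_1',q_2')) = \delta_1(q_1,a)(q_1')\cdot\delta_2(q_2,a)(q_2')$ factor independently through the coordinate projections $\pi_i\colon S_1\times S_2\to S_i$, so that both the support operation $D\mapsto D\cdot\rho$ and the $\#$-reduction $D\mapsto D\cdot\rho^\#$ commute with $\pi_i$. Structural simplicity of the product then follows by projecting to the components and invoking structural simplicity of each.

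For the support operation, the identity $\Supp(\delta((q_1,q_2),\rho))=\Supp(\delta_1(q_1,\rho))\times\Supp(\delta_2(q_2,\rho))$ immediately gives $\pi_i(D\cdot\rho)=\pi_i(D)\cdot\rho$, so stability of $D\subseteq S_1\times S_2$ under $\rho$ in $\mathcal{A}$ transfers to stability of $\pi_i(D)$ under $\rho$ in $\mathcal{A}_i$. The main technical step is the compatibility of $\#$-reduction: when $D\cdot\rho=D$, I claim $\pi_i(D\cdot\rho^\#)=\pi_i(D)\cdot\rho^\#$. The core ingredient is a factorization of recurrence in the joint chain: a state $(q_1,q_2)\in D$ is recurrent in the homogeneous chain on $D$ driven by $\rho$ iff $q_1$ is recurrent in the chain of $\mathcal{A}_1$ on $\pi_1(D)$ and $q_2$ is recurrent in the chain of $\mathcal{A}_2$ on $\pi_2(D)$. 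Indeed, the rectangular support $\Supp(\delta_1(q_1,\rho))\times\Supp(\delta_2(q_2,\rho))\subseteq D$ (forced by $D\cdot\rho=D$) ensures that trajectories starting in $D$ stay in $D$ and project to independent trajectories of the two component chains, so the standard product-chain argument (independence plus positive recurrence on finite recurrent classes) yields the equivalence. From this, the inclusion $\pi_i(D\cdot\rho^\#)\subseteq\pi_i(D)\cdot\rho^\#$ is immediate; for the reverse, given $q_1\in\pi_1(D)\cdot\rho^\#$ and any $(q_1,q_2^0)\in D$, a coupling argument along the joint chain started at $(q_1,q_2^0)$---where the first component returns to $q_1$ infinitely often while the second eventually enters the recurrent part of the chain on $\pi_2(D)$---produces a recurrent partner $q_2\in\pi_2(D)\cdot\rho^\#$ with $(q_1,q_2)\in D$.

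Iterating these two compatibility facts, the $\stackrel{\#-\rho}{\to}$ reachability relation commutes with projection, and projections preserve $\#-\rho$ fixpoints. Now fix $C\subseteq S_1\times S_2$, $\rho\in\Sigma^*$, and let $D$ be minimal with $C\stackrel{\#-\rho}{\to}D$ in $\mathcal{A}$. Then $\pi_i(C)\stackrel{\#-\rho}{\to}\pi_i(D)$ in $\mathcal{A}_i$, and since $D\cdot\rho=D$ and $D\cdot\rho^\#=D$ the same identities hold for $\pi_i(D)$, so $\pi_i(D)$ is the minimal reachable fixpoint from $\pi_i(C)$ in $\mathcal{A}_i$. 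Structural simplicity of $\mathcal{A}_i$ then yields chain recurrence of $(\pi_i(D),\rho)$ for both $i=1,2$. Suppose for contradiction that $(D,\rho)$ is not chain recurrent in $\mathcal{A}$: there exist $\rho_1,\rho_2\in\Sigma^*$ with $\rho_1\cdot\rho_2$ a prefix of $\rho$ such that $(\delta(D,\rho_1),\rho_2)$ is a $\#$-reduction in $\mathcal{A}$. Applying the factorization of recurrence to the stable set $E:=\delta(D,\rho_1)$, the transience of some state of $E$ in the chain on $E$ forces the transience of a state in the chain on $\pi_i(E)$ for some $i$, so $(\delta_i(\pi_i(D),\rho_1),\rho_2)$ is a $\#$-reduction in $\mathcal{A}_i$---contradicting the chain recurrence of $(\pi_i(D),\rho)$.

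The main obstacle is the factorization-of-recurrence lemma for non-product subsets $D\subseteq S_1\times S_2$: $D$ need not be a Cartesian product, so one must show that the recurrent classes of the joint chain on $D$ are still entirely controlled by those of the two component chains on $\pi_1(D)$ and $\pi_2(D)$. The rectangular-support identity and the coupling argument needed to produce a recurrent partner in the opposite component are the delicate ingredients that make this work.
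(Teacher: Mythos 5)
Your overall strategy is the same as the paper's: exploit the factorization of the product transition function to reduce the support operation and the $\#$-operation to the two components. In fact you are more careful than the paper's own one-line proof, which asserts the identities $A\cdot\rho=(A_1\cdot\rho)\times(A_2\cdot\rho)$ and $A\cdot\rho^{\#}=(A_1\cdot\rho^{\#})\times(A_2\cdot\rho^{\#})$; for a non-rectangular $A$ only the inclusion $\subseteq$ holds in the first of these, and your projection identities $\pi_i(D\cdot\rho)=\pi_i(D)\cdot\rho$ and $\pi_i(D\cdot\rho^{\#})=\pi_i(D)\cdot\rho^{\#}$ are the correct replacement. Your recurrence-factorization lemma is sound for finite chains (the ``if'' direction needs the alignment of cyclic classes of the two recurrent classes, which works out by a residue argument, and your coupling step for the reverse inclusion is fine), and your final step --- a $\#$-reduction inside $(D,\rho)$ projects to a $\#$-reduction inside $(\pi_i(D),\rho)$ for some $i$, because a state of the stable set $E=\delta(D,\rho_1)$ is transient exactly when one of its coordinates is transient --- is also correct.

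The genuine gap is the minimality transfer. To invoke structural simplicity of $\mathcal{A}_i$ you need $\pi_i(D)$ to be minimal among the sets $\#$-$\rho$-reachable from $\pi_i(C)$ in $\mathcal{A}_i$, and you justify this only by observing that $\pi_i(D)$ is a fixpoint of $\cdot\rho$ and $\cdot\rho^{\#}$; being a fixpoint does not make it minimal. If some $D_i'\subsetneq\pi_i(D)$ were $\#$-$\rho$-reachable from $\pi_i(C)$, structural simplicity of $\mathcal{A}_i$ would say nothing about $(\pi_i(D),\rho)$ and your argument stalls. Closing this requires showing that the full reachability relation $\stackrel{\#-\rho}{\rightarrow}$ --- which in this paper is defined through linked graphs and chains of borders, not merely through iterated applications of $\cdot\rho$ and $\cdot\rho^{\#}$ --- commutes with the projections in both directions, so that a witness of non-minimality in a component lifts to a witness of non-minimality of $D$ in the product. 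You assert this commutation (``iterating these two compatibility facts''), but the two facts you prove concern single applications of the operations to forward-closed sets, which is strictly weaker. The paper's own proof skips this point entirely as well, but it is where the real work lies.
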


We prove that the classes of languages recognized by structurally simple automata under various semantics (positive parity, almost parity) are robust. 
This property relies on the fact that one can construct the intersection or the union of two parity (non-probabilistic) automata using only product constructions and change in the semantics (going from parity to Streett or Rabin, and back to parity; see~\cite{Thomas97,baier2005recognizing} for details of Rabin and 
Streett conditions and the translations). By Proposition \ref{p-product simple}, such transformations keep the automata simple.

\begin{theorem}\label{t-class-robust}
 The class of languages recognized by structurally simple automata under positive (resp. almost) semantics and parity condition is closed under union and intersection.
\end{theorem}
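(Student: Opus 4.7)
The plan is to derive closure under union and intersection by combining three ingredients: the probabilistic product $\mathcal{A}_1\Join\mathcal{A}_2$ (which preserves structural simplicity by Proposition~\ref{p-product simple}), a Boolean combination of the two parity conditions on the product, and the classical translation from Rabin/Streett back to parity conditions via a deterministic memory automaton.

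First I would exploit the fact that, given any input word $w$, the two components of $\mathcal{A}_1\Join\mathcal{A}_2$ evolve independently. Writing $\Phi_i=\Parity(p_i)$, this yields the factorisations
\[
\mathbb{P}^w_{\mathcal{A}_1\Join\mathcal{A}_2}(\Phi_1\cap\Phi_2)=\mathbb{P}^w_{\mathcal{A}_1}(\Phi_1)\cdot\mathbb{P}^w_{\mathcal{A}_2}(\Phi_2),
\]
and dually $\mathbb{P}^w(\Phi_1\cup\Phi_2)=1-(1-\mathbb{P}^w(\Phi_1))(1-\mathbb{P}^w(\Phi_2))$. From these identities, the positive (resp.\ almost) acceptance of $\Phi_1\cap\Phi_2$ on $\mathcal{A}_1\Join\mathcal{A}_2$ characterises words accepted with positive probability (resp.\ with probability one) by \emph{both} $\mathcal{A}_1$ and $\mathcal{A}_2$, while $\Phi_1\cup\Phi_2$ captures the unions under both semantics.

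Next, since $\Phi_1\cap\Phi_2$ is a Streett condition and $\Phi_1\cup\Phi_2$ a Rabin condition, I would invoke the standard Latest Appearance Record construction (see \cite{Thomas97,baier2005recognizing}) which takes the synchronous product of the automaton with a deterministic finite-memory automaton $\mathcal{D}$ and replaces the Rabin/Streett condition by an equivalent parity condition on the enlarged state space. Viewed as a probabilistic automaton, $\mathcal{D}$ is deterministic, so the support of any reachable distribution is a singleton, no $\#$-reduction is ever available, and every execution tree is trivially chain recurrent; hence $\mathcal{D}$ is structurally simple. A second application of Proposition~\ref{p-product simple} therefore yields that $(\mathcal{A}_1\Join\mathcal{A}_2)\Join\mathcal{D}$ is structurally simple. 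Because the product is over a deterministic factor, each run of $\mathcal{A}_1\Join\mathcal{A}_2$ lifts uniquely and preserves its probability, and the lifted parity condition accepts exactly the runs satisfying the original Rabin/Streett condition, so the Boolean identities above persist on the final automaton.

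The main technical point I expect to have to verify carefully is that a deterministic PA is indeed structurally simple in the sense of Section~\ref{ss-SA-struct}; this reduces to the observation that all reachable supports are singletons, which trivially rules out any $\#$-reduction along any execution tree. With that in place, the union and intersection constructions follow immediately from the two successive applications of Proposition~\ref{p-product simple}, establishing closure of the classes $L^+$ and $L^1$ of structurally simple parity PAs under both Boolean operations.
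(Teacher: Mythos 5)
Your overall strategy coincides with the paper's: form the product $\mathcal{A}_1\Join\mathcal{A}_2$, invoke Proposition~\ref{p-product simple}, read the Boolean combination of the two parity conditions as a Streett/Rabin condition, and translate back to parity with a deterministic memory. Two of your variations are genuine improvements in transparency. The explicit independence factorisation $\mathbb{P}^w(\Phi_1\cap\Phi_2)=\mathbb{P}^w_{\mathcal{A}_1}(\Phi_1)\cdot\mathbb{P}^w_{\mathcal{A}_2}(\Phi_2)$ lets you treat the positive and almost semantics, and union as well as intersection, uniformly on a single construction, whereas the paper handles positive union by a disjoint-union automaton and obtains the almost case by dualising the parity function ($p'=p-1$, so $\mathcal{L}^{=1}(\mathcal{A}')=\mathcal{L}^{>0}(\mathcal{A})^{c}$). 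Your observation that, for independent events, $\mathbb{P}(\Phi_1\cup\Phi_2)=1$ iff one of the two factors equals $1$ is exactly what makes almost-union go through directly and is worth recording.

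There is, however, a gap, and you have misplaced the ``main technical point''. That a deterministic PA is structurally simple is easy (the paper notes it inside the proof of Proposition~\ref{p-extends_acycl}): if $S\cdot\rho=S$ in a deterministic automaton then the induced map on $S$ is a permutation, every state is recurrent, and no $\#$-reduction exists. The real difficulty is that the LAR/IAR memory $\mathcal{D}$ cannot be a deterministic automaton over the input alphabet $\Sigma$: to turn a conjunction of two parity conditions into a single parity condition it must read the sequence of visited states (or priorities) of $\mathcal{A}_1\Join\mathcal{A}_2$, which is not a function of the input word. The resulting construction is a state-synchronised skew product, not the letter-synchronised $\Join$ of Proposition~\ref{p-product simple}, so that proposition does not apply to the second product. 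Preservation of structural simplicity under such a product is not automatic: two distinct positive-probability paths between the same pair of base states can drive the memory to different values, creating in the product a set $S$ with $S\cdot\rho=S$ in which some memory-copies of a recurrent base state are transient --- a $\#$-reduction with no counterpart in the base. What must then be argued is that the \emph{minimal} $\#$-reachable sets of the product are still chain recurrent; the paper papers over the same point with the informal remark that the Safra-like construction ``can be seen as a product construction which does not add any probabilistic transition''. So your proof shares the paper's weak spot, but your explicit claim that Proposition~\ref{p-product simple} ``therefore yields'' structural simplicity of $(\mathcal{A}_1\Join\mathcal{A}_2)\Join\mathcal{D}$ is not licensed as stated and needs to be replaced by a direct argument about state-synchronised deterministic-memory products.
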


\subsection{Subclasses of Simple Automata}\label{s-subclasses}

In this section we show that both $\#$-acyclic automata (recall Definition \ref{d-acyclic PA}) and 
hierarchical probabilistic automata are strict subclasses of 
simple automata.

\begin{proposition}\label{p-extends_acycl}
The class of structurally simple automata strictly subsumes the class of $\#$-acyclic automata.
\end{proposition}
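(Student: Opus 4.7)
The plan is to prove two facts: every $\#$-acyclic automaton is structurally simple, and the inclusion is strict.

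\emph{Inclusion.} Fix a $\#$-acyclic automaton $\mathcal{A}$, an arbitrary $\rho \in \Sigma^*$ and $C \subseteq Q$, and a minimal $D$ with $C \stackrel{\#-\rho}{\rightarrow} D$. I would show $(D,\rho)$ is chain recurrent by contradiction: if not, there is a prefix decomposition $\rho = \rho_1 \rho_2$ such that $(E,\rho_2)$ is a $\#$-reduction, where $E := \delta(D,\rho_1)$, meaning $E \cdot \rho_2 = E$ and $E \cdot \rho_2^\# \subsetneq E$. The strategy is then to use this internal $\#$-reduction to produce a further $\#$-$\rho$-reduction from $D$ itself, contradicting minimality. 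Concretely, by iterating $\rho$ from $D$ one tracks the evolution of the images $\delta(D,\rho^i)$; the reduction at $E$ occurring within a single read of $\rho$ induces a descending chain of subsets across successive iterations, and acyclicity of the single-letter support graph, lifted through the extended support graph associated with the word $\rho$, forces this chain to drop strictly below $D$, yielding the contradiction.

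\emph{Strictness.} I would exhibit the two-state swap automaton $\mathcal{A}'$ with $Q = \{q_0, q_1\}$, $\Sigma = \{a\}$, and $M_a$ the deterministic permutation $q_0 \leftrightarrow q_1$. Its support graph contains the two-cycle $\{q_0\} \to \{q_1\} \to \{q_0\}$ via $a$, so $\mathcal{A}'$ is not $\#$-acyclic. On the other hand, for every $\rho \in a^*$ and every $C \subseteq Q$ the induced trajectory is deterministic on singletons, and on $Q$ itself one has $Q \cdot \rho = Q \cdot \rho^\# = Q$, so no $\#$-reduction ever arises in any execution tree. Hence every relevant $(D,\rho)$ is trivially chain recurrent, and $\mathcal{A}'$ is structurally simple.

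\emph{Main obstacle.} The hard part is the inclusion: one must leverage single-letter acyclicity to preclude multi-letter $\#$-reductions inside the execution of $\rho$ from $D$, even though in general a multi-letter $\#$-reduction on a set $E$ need not decompose into single-letter ones (small stochastic examples already exhibit this). The delicate step is to show that, under acyclicity, any internal multi-letter reduction propagates through iterated reads of $\rho$ to a proper descent below $D$. This will require a careful induction on the size of the subsets visited during the $\#$-$\rho$-iteration, together with the use of the extended support graph (rather than the single-letter one) to see the cycle that the assumed reduction would create.
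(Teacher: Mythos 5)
Your strictness example is fine and is essentially the paper's: a deterministic automaton admits no $\#$-reduction (on any $\rho_2$-invariant support the induced chain is a permutation, so every state is recurrent and $S\cdot\rho_2^\#=S$), hence is structurally simple, while the two-state swap has the cycle $\{q_0\}\rightarrow\{q_1\}\rightarrow\{q_0\}$ in its support graph.

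The inclusion half, however, has a genuine gap, and the route you sketch points in the wrong direction. The contradiction you aim for is with the \emph{minimality of $D$}; but minimality in the definition of structural simplicity is relative to the fixed word $\rho$ (among the $D'$ with $C\stackrel{\#-\rho}{\rightarrow}D'$), so a ``descent below $D$'' obtained only after iterating $\rho$ is a statement about $\#$-$\rho^i$-reachability and contradicts nothing in the hypothesis. Moreover, the images $\delta(D,\rho^i)$ need not form a descending chain at all --- they can grow or oscillate --- so the object your argument tracks does not exist in general, and no concrete mechanism is given by which single-letter acyclicity enters; indeed your ``main obstacle'' paragraph defers exactly the step that constitutes the proof. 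What is missing is the paper's saturation device: arguing by contraposition from a witness $(C,\rho,D)$ of non-simplicity, one sets $E=\bigcup_{i\geq0}\Supp(\delta(D,\rho^i))$. This union is genuinely $\rho$-invariant, $E\cdot\rho=E$, so reading $\rho$ letter by letter from $E$ already traces a cycle in the single-letter support graph; and the $\#$-reduction occurring at $(\Supp(\delta(D,\rho_1)),\rho_2)$ inside $(D,\rho)$ lifts to one at $(\Supp(\delta(E,\rho_1)),\rho_2)$, exhibiting a nontrivial cycle and hence contradicting $\#$-acyclicity directly. The hypothesis to contradict is acyclicity of the support graph, not minimality of $D$, and the set $E$ is the idea that turns the ``eventually periodic'' behaviour of $D$ under $\rho$ into an actual cycle without any induction on subset sizes.
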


Another restriction of Probabilistic Automata which has been considered is the model of \emph{Hierarchical PAs}, presented first in \cite{chadha2009power}. Intuitively, a hierarchical PA is a probabilistic automaton on which a \emph{rank function} must increase on every runs. This condition imposes that the induced processes are ultimately deterministic with probability one.

\begin{definition}[\cite{chadha2009power}]
Given $k\in\mathbb{N}$, a PA $\mathcal{B}=(Q,q_s,Q,\delta)$ over an alphabet $\Sigma$ is said to be a \emph{$k$-level hierarchical PA} ($k$-HPA) if there is a function $\mathrm{rk}:Q\rightarrow\lbrace0,1,...,k\rbrace$ such that the following holds:
\begin{center}
 Given $j\in\lbrace0,1,...,k\rbrace$, let $Q_j=\lbrace q\in Q\ |\ \mathrm{rk}(q)=j\rbrace$. For every $q\in Q$ and $a\in\Sigma$, if $j_0=\mathrm{rk}(q)$ then $post(q,a)\subseteq \cup_{j_0\leq l\leq k}Q_l$ and $|post(q,q)\cap Q_{j_0}|\leq1$.
\end{center}

\end{definition}

\begin{proposition}\label{p-extends_hierch}
The class of structurally simple automata strictly subsumes the class of Hierarchical PAs.
\end{proposition}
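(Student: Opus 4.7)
The proposition has two parts---containment and strictness---which I would attack separately.

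\emph{Containment (every HPA is structurally simple).} Let $\mathcal{B}$ be a $k$-HPA with rank function $\mathrm{rk} : Q \to \{0,\ldots,k\}$. The HPA constraints (ranks non-decreasing and at most one same-rank successor per letter) lift to arbitrary words by a trivial induction on $|\rho|$: for every $q\in Q$ and every $\rho\in\Sigma^*$, the set of $q'\in\mathrm{Supp}(\delta(q,\rho))$ with $\mathrm{rk}(q')=\mathrm{rk}(q)$ has cardinality at most one. I would then argue by induction on $k$ that for every $C\subseteq Q$ and every $\rho\in\Sigma^*$, every minimal $D$ with $C\stackrel{\#-\rho}{\rightarrow} D$ is a disjoint union of deterministic $\rho$-cycles lying at various ranks. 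Fix such a $D$ and let $j^\ast$ be the minimum rank appearing in $D$. By the lifted HPA property, the restriction of $M_\rho$ to rank-$j^\ast$ states of $D$ is a partial deterministic function; any rank-$j^\ast$ state whose same-rank $\rho$-orbit escapes to higher rank would be transient, and its removal would yield a further $\#$-reduction, contradicting the minimality of $D$. Hence the rank-$j^\ast$ part of $D$ decomposes as a disjoint union of deterministic cycles. Restricting $\mathcal{B}$ to ranks strictly above $j^\ast$ gives a smaller HPA, and the induction hypothesis handles the higher-rank part of $D$. Chain recurrence of $(D,\rho)$ then follows immediately: for every prefix $\rho_1\rho_2$ of $\rho$ with $\delta(D,\rho_1)\cdot\rho_2 = \delta(D,\rho_1)$, the same-rank dynamics acts as a permutation on the cycles inside $\delta(D,\rho_1)$, so every state there is recurrent for $M_{\rho_2}$ and no $\#$-reduction is available.

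\emph{Strictness.} I exhibit a two-state PA that is structurally simple but not an HPA. Let $Q=\{q_1,q_2\}$, $\Sigma=\{a\}$, with $\delta(q_i,a)(q_j)=1/2$ for all $i,j$. Starting from any nonempty $C\subseteq Q$, the support reaches $Q$ after at most one letter, and $Q\cdot a = Q = Q\cdot a^\#$ because the induced chain is ergodic; no $\#$-reduction is ever available and the automaton is vacuously structurally simple. It is not an HPA: any assignment $\mathrm{rk}(q_1)=\mathrm{rk}(q_2)=j_0$ produces two same-rank successors under $a$, violating $|\mathrm{post}(q,a)\cap Q_{j_0}|\leq 1$; any assignment with $\mathrm{rk}(q_1)\neq\mathrm{rk}(q_2)$ induces a transition from the higher-ranked state back to the lower-ranked one, violating $\mathrm{post}(q,a)\subseteq\bigcup_{l\geq j_0}Q_l$.

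The main obstacle is the containment direction, specifically justifying that a minimal $D$ cannot retain stray states at rank $j^\ast$ outside the deterministic cycles. The HPA's ``at most one same-rank successor'' property is the essential ingredient: without it, branching same-rank successors could yield non-deterministic recurrent classes that pass the minimality test but fail chain recurrence along intermediate supports. The strictness direction is conceptually easy once the right two-state example is chosen.
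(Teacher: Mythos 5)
Your proposal is correct, but it reaches both halves of the statement by a genuinely different route from the paper. For containment, the paper argues via elementary cycles of the support graph: any two states occurring along such a cycle are mutually reachable with positive probability, so the non-decreasing rank function forces them all to the same rank, and the ``at most one same-rank successor'' condition then makes every transition along the cycle deterministic, leaving no room for a $\#$-reduction anywhere on the cycle. You instead work directly with the defining condition of structural simplicity: you lift the single-letter HPA constraint to arbitrary words and show, by induction on the rank levels, that a minimal $D$ with $C\stackrel{\#-\rho}{\rightarrow}D$ decomposes into probability-one deterministic $\rho$-cycles, from which chain recurrence follows because the determinism persists through every intermediate support $\delta(D,\rho_1)$ --- all intermediate states of a cycle at rank $j$ must themselves sit at rank $j$ (rank is non-decreasing and the cycle returns to rank $j$ with probability one), so each trajectory is single-valued, and a deterministic surjection of a finite support onto itself is a permutation, hence has no transient states and admits no $\#$-reduction. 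You correctly identify this persistence through intermediate supports as the crux; it is exactly the point the paper's cycle-based argument glosses over (the paper's claim that any $q\in B_i$ and $q'\in B_j$ on an elementary cycle are mutually reachable, and the passage from ``no probabilistic transition on the cycle'' to the formal minimal-$D$ condition, are both left implicit). Your route is closer to the formal definition at the cost of the rank induction; the paper's is shorter but leans on an unstated reduction from $\#$-reductions to elementary support-graph cycles. For strictness, the paper reuses the four-state automaton of Example~\ref{e-simple-autom}, whereas your two-state uniformly mixing automaton over a one-letter alphabet is a simpler, self-contained witness; your verification that no rank function can satisfy the HPA constraints, and that no $\#$-reduction ever arises since the uniform chain is ergodic on every self-mapping support, is correct.
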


It follows that our decidability results for structurally simple 
PAs both unifies and generalizes the decidability results previously known for 
$\#$-acyclic (for limit reachability) and hierarchical PA (for almost and positive B\"uchi).

\section{Conclusion} 
In this work we have used a very 
general result from stochastic processes, namely the decomposition-separation
theorem, to identify simple structure of tail $\sigma$-fields,
and used them to define simple processes on probabilistic automata. 
We showed that under the restriction of simple processes the almost 
and positive decision problems are decidable for all 
parity conditions.
We then characterized structurally a subclass of the class of 
simple automata on which every process is simple. 
We showed that this class is decidable, robust, and that it generalizes 
the previous known subclasses of probabilistic 
automata for which the decision problems were decidable. 
Our techniques also show that for lasso shape words the 
almost and positive decision 
problems are decidable for all probabilistic automata.
We believe that our techniques will be useful in future research 
for other decidability results related to probabilistic automata 
and more general probabilistic models (such as partially observable
Markov decision processes and partial-observation stochastic games).

\newpage

\newpage
\section*{Appendix}
\appendix

\newcommand{\Rec}{\mathrm{Rec}}
\newcommand{\Comp}{\mathrm{Comp}}
\newcommand{\leftc}{\mathsf{left}}
\newcommand{\rightc}{\mathsf{right}}

\newcommand{\LG}{\mathsf{LG}}
\newcommand{\org}{\mathrm{org}}
\newcommand{\dest}{\mathrm{dest}}

\section{Details of Section \ref{s-simple process}}

\smallskip\noindent{\bf Details of Proposition \ref{p-tau atomics}.} 

\begin{proof} \emph{(of Proposition~\ref{p-tau atomics}).}We present the proof of all three parts below.

\smallskip\noindent{\bf Assertion~1.}
Let $i\in\set{1,2,\ldots,c}$. We prove that $\tau^i_\infty$ belongs to $\mathcal{F}_\infty$. 
We first note that for all $N_0\in\mathbb{N}$ and $N\geq N_0$, by definition of $\mathcal{F}_N$, we have $\cap_{n\geq N}\tau_i^n\in\mathcal{F}_{N_0}$. 
Next, we note that $\lbrace\cap_{n\geq N}\tau_i^n\rbrace_{N\in\mathbb{N}}$ is an increasing sequence of sets of runs, and that the first point of 
Theorem \ref{t-decomp} implies $\tau_\infty^i=\mathrm{lim}_{N\rightarrow\infty}\cap_{n\geq N}\tau^i_n$. For all $N_0\in\mathbb{N}$, we have $\mathcal{F}_{N_0}$ is a $\sigma$-field, 
hence the limit of an increasing sequences of sets in $\mathcal{F}_{N_0}$ also belong to $\mathcal{F}_{N_0}$. We get that for all $N_0\in\mathbb{N}$, we have $\tau_\infty^i\in\mathcal{F}_{N_0}$, hence the result.

\smallskip\noindent{\bf Assertion~2.}
We prove that $\tau^i_\infty$ is atomic by using Proposition 2.1. of \cite{cohn1989products}, 
which states the following result:
\begin{compactitem}
\item 
For any set $\Gamma$ in $\mathcal{F}_\infty$, there exists a sequence $L_n$ of subsets of $Q$ such that, $\mathbb{P}^{w}$-almost surely, we have $\mathrm{lim}_{n\rightarrow\infty}\lbrace r\in\Omega\ s.t.\ r(n)\in L_n\rbrace=\Gamma$.
\end{compactitem}
Here, ``$\mathrm{lim}_{n\rightarrow\infty}\lbrace r\in\Omega\ s.t.\ r(n)\in L_n\rbrace=\Gamma$ almost surely'' means that the $\mathbb{P}^{w}$-measure of the set of states 
on which the characteristic functions of the sets $\lbrace r\in\Omega\ s.t.\ r(n)\in L_n\rbrace$ and $\Gamma$ goes to zero as $n$ goes to infinity. For sake of completeness, we prove this fact 
using the Martingale Convergence Theorem, as in \cite{cohn1989products} 
(see for instance \cite{kemeny1976denumerable} for a presentation of the Martingale Convergence Theorem and the Levy's Law).

Given $n\in\mathbb{N}$, let $\sigma(X^w_0,X^w_1,\ldots,X^w_n)$ be the $\sigma$-field generated by $X^w_i,i\in\set{0,\ldots,n}$. 
Since $\Gamma$ belongs to $\mathcal{F}_\infty=\cap_{n\in\mathbb{N}}\mathcal{F}_n$, the Levy's Law implies that, 
$\mathbb{P}^w$ almost surely, $\mathrm{lim}_{n\rightarrow\infty}\mathbb{P}(\Gamma|\sigma(X^w_0,X^w_1,\ldots,X^w_n))=1_{\Gamma}$, 
where $1_\Gamma$ is the characteristic function of $\Gamma$. Since $\lbrace X_n,\ n\geq0\rbrace$ is Markovian, 
we know that for all $n$ we have $\mathbb{P}(\Gamma|\sigma(X^w_0,X^w_1,\ldots,X^w_n))=\mathbb{P}(\Gamma|\sigma(X^w_n))$. 
Let $0<\lambda<1$, and given $n\in\mathbb{N}$ let $L_n=\lbrace q\in Q\ |\ \mathbb{P}(\Gamma|X^w_n=q)>\lambda\rbrace$. 
Then, $\mathbb{P}^w$ almost surely, we have $\mathrm{lim}_{n\rightarrow\infty}\lbrace X_n\in L_n\rbrace=\Gamma$, which proves the preliminary result.

Now, let $A\in\tau^i_\infty$. By hypothesis, $\mathbb{P}^{w}[A]>0$. Suppose by contradiction that $0< \mathbb{P}^{w}[A]< \mathbb{P}^{w}(\tau^i_\infty)$. 
Let $B=\tau^i_\infty\setminus A$. We have $A,B\in\mathcal{F}_\infty$, hence there exist $L_n,L_n',\ n\in\mathbb{N}$ two sequences of sets such that $\mathrm{lim}_{n\rightarrow\infty}\lbrace r\in\Omega\ |\ r(n)\in L_n\rbrace=A$ almost surely and $\mathrm{lim}_{n\rightarrow\infty}\lbrace r\in\Omega\ |\ r(n)\in L_n'\rbrace=B$ almost surely. Let $N$ be large enough, and let $q\in L_N,\ q'\in L_N'$ be such that :
\[\mathbb{P}[r\in A\ |\ r(N)=q]>1-\frac{1}{4\cdot|Q|^2}; 
\]
and
\[
\mathbb{P}[r\in B\ |\ r(N)=q']>1-\frac{1}{4\cdot|Q|^2}.
\]
We prove that this contradicts the second point of Theorem \ref{t-decomp}: first, by the Pigeon Hole Principle, there exists a sequence $q_n,\ n\geq N$ of states in $L_n'$ such that
\[\mathrm{lim}_n\mathbb{P}[r(n)=q_n |\ r(N)=q']>\dfrac{1}{2\cdot|Q|}.\]
Moreover, by the second point of Theorem \ref{t-decomp} we know that 
\[\mathrm{lim}_{n\rightarrow\infty}\dfrac{\mathbb{P}[X^w_n=q_n\ |\ X^w_n\in J^k_n\wedge X^w_m=q]}{\mathbb{P}[X^w_n=q_n\ |\ X^w_n\in J^k_n\wedge X^w_m=q']}=1\]
Thus, for $n$ large enough, $\mathbb{P}[r(n)=q_n |\ r(N)=q]>\dfrac{1}{4\cdot|Q|}$. Hence, for $n$ large enough, $\mathbb{P}[r\not\in A |\ r(N)=q]>\dfrac{1}{4\cdot|Q|^2}$. This is a contradiction.

\smallskip\noindent{\bf Assertion~3.}
The fact that, $\mathbb{P}^{w}(\bigcup_{i=1}^c\tau^i_\infty)=1$, is a consequence of the first point of Theorem \ref{t-decomp}: 
with probability one, after a finite number of steps, a run belongs to one of the $J^i$ and never leaves it.
\end{proof}

\begin{comment}
\begin{proof} \emph{(of Corollary~\ref{c-tau fini}).} 
 Direct by the second point of Proposition \ref{p-separation simple jets}, since we know that with probability one, after a finite number of steps, a run enters in one of the $J^i, i\in\set{1,2,\ldots,c}$, and never leaves it.
\end{proof}
\end{comment}

\smallskip\noindent{\bf Details of Proposition \ref{p-separation simple jets}.} 
We prove Proposition \ref{p-separation simple jets}.

\begin{proof}
Let $J^0,...,J^c$ be a decomposition of $Q^\omega$ into jets, as in Theorem \ref{t-decomp}. Let $\lambda>0$ be the threshold given by the definition of a simple process, for the process $\lbrace \mu^w_n\rbrace_{n\in\mathbb{N}}$. For all $i\in\set{1,2,\ldots,c}$ and $n\in\mathbb{N}$, let 
\[\widehat{J}^i_n=\lbrace q\in J^i_n\ s.t.\ \mu^w_n(q)>\lambda\rbrace\]
For all $n\in\mathbb{N}$, let $H^0_n=J^0_n\cup\bigcup_{i=1}^c(J^i_n\setminus\widehat{J}^i_n)$, and let $H^i_n=\widehat{J}^i_n$ for $i\in\set{1,2,\ldots,c}$. We claim that the decomposition of $Q^\omega$ into jets $H=(H^0,...,H^c)$ satisfies the conditions of the proposition.

The first point of the Proposition follows from the definition of the $\widehat{J}^i_n$. The third point follows from the fact that the process is simple: the probability of the set of states whose measure is less than $\lambda$ goes to zero. We prove now the second point.

Suppose that there exists no $N\in\mathbb{N}$ such that the property is satisfied for the jet decomposition $H$. Then, there exists $i\in\set{1,2,\ldots,c}$ such that for all $N\in\mathbb{N}$, there exist $n_2>n_1\geq N$ such that $\delta(\widehat{J}^i_{n_1},w^{n_2}_{n_1})\not\subseteq \widehat{J}^i_{n_2}$.

We write $w=a_0,a_1,...$. Since $Q$ is finite, there exist $i\not=j$ in $\set{1,2,\ldots,c}$ and $q,q'\in Q$ such that for an infinite number of $n\in\mathbb{N}$ we have $q\in\widehat{J}^i_n,\ q'\in\widehat{J}^j_n$, and $\delta(q,a_n)(q')>0$. Since for $n$ large enough we have $\mu^w_n(q)>\lambda$ for all $q\in \widehat{J}^i_n$, this implies that the probability of the set of runs which move from jet $J^i$ to jet $J^j$ infinitely often is at least $\epsilon\cdot\lambda$, where $\epsilon$ is the least non zero probability which appears among the transition probabilities given by the $M_a$, for $a\in\Sigma$. This implies that the probability of the set of runs which stay inside one of the $J^i,\ i\in\set{1,2,\ldots,c}$ for ever after a finite number of steps cannot be equal to one. This contradicts the definition of the decomposition $J^0,...,J^c$.

For the fourth point, the fact that the jets are mixing follows directly from the Theorem \ref{t-decomp}, and the fact that a run does not leave $\widehat{J}^i$ once it has entered it after time $N$.
\end{proof}

\section{Details of Section \ref{s-decidable pb}}

\smallskip\noindent{\bf Details of Proposition \ref{prop:dec-undec-simple}.} 
We prove Proposition \ref{prop:dec-undec-simple}.

\begin{proof}
By \cite{baier2005recognizing} and \cite{chatterjee2010}, the almost (resp. positive) safety and reachability problems are decidable for the general 
class of probabilistic automata, as well as the limit safety problem. The results of \cite{baier2005recognizing} and \cite{chatterjee2010} 
show that if one of the problems is satisfiable, it is satisfiable by a lasso shape word, and hence the simple version of the problem is satisfiable (by the results of our Section \ref{s-simple-words}). 
As a consequence, we can use this result to get the decidability of the problems when we 
restrict to simple processes. The PSPACE-completeness follows from the results of \cite{chadha2009power}.


The undecidability of the limit reachability problem comes from the results of \cite{gimbert2010probabilistic} and \cite{chatterjee2010}, which show that it is undecidable 
for the general class of probabilistic automata, and from the following fact: Given a PA with state space $Q$, accepting states $F\subseteq Q$ and $\epsilon\in]0;1[$, 
if there exists $w\in\Sigma^\omega$ such that $\mathbb{P}^w[\lbrace r\ |\ r\in \Reach(F)\rbrace]>1-\epsilon$, then there exists $w'\in\Sigma^\omega$ such that 
$\mathbb{P}^{w'}[\lbrace r\ |\ r\in \Reach(F)\rbrace]>1-2\cdot\epsilon$ and the process induced by $w'$ is simple. For this we just have to consider any lasso shape word 
$w=\rho_1\cdot\rho_2^\omega$ whose prefix word $\rho_1$ satisfies the $1-2\cdot\epsilon$ reachability condition. In Section \ref{s-simple-words}, we see that 
the process induced by a lasso-shape word on an automaton is always simple, which concludes the proof.
\end{proof}

\smallskip\noindent{\bf Details of Proposition \ref{p-dec simple almost}.}

\begin{proof}

The proof is in three parts: first we present an equivalent formulation of the problem. 
Then we show that the equivalent formulation gives a problem which we can solve in PSPACE. 
Finally we give the PSPACE lower bound.

\smallskip\noindent{\bf Equivalent formulation.}
In the following, $p:Q\rightarrow\mathbb{N}$ is a parity function on $Q$, and $\Phi=\Parity(p)$. We prove that: \textbf{(1)} There exists $w\in\Sigma^\omega$ such that the induced process is simple and $\mathbb{P}^w_\mathcal{A}(\Phi)=1$ if and only if \textbf{(2)} There exists $G\subseteq Q$ and $\rho_1,\rho_2\in\Sigma^*$ such that $G=\delta(\alpha,\rho_1)$, $\delta(G,\rho_2)\subseteq G$, and the runs on the Markov chain induced by $(G,\rho_2)$ satisfy $\Phi$ with probability one.
We then show that the properties can be verified in PSPACE and also present a PSPACE lower bound.

We show the equivalence \textbf{(2)}$\Leftrightarrow$\textbf{(1)}. The way \textbf{(2)}$\Rightarrow$\textbf{(1)} is direct, since we will show in Section \ref{s-simple-words} that the process induced by a lasso shape word on any automaton is always simple. We prove that \textbf{(1)}$\Rightarrow$\textbf{(2)}. Let $w=a_1,...,a_i,...$ be such that the induced process is simple and $\mathbb{P}^w_\mathcal{A}(\Phi)=1$. Using Proposition \ref{p-separation simple jets}, let $J^0,J^1,...,J^m$ be the decomposition of $Q^\omega$ into jets and let $N_0\in\mathbb{N},\ \lambda>0$ be such that:
\begin{compactitem}
 \item $\forall n\geq N_0$, $\forall i\in\set{1,2,\ldots,c}$, $\forall q\in J^i_n$: $\mu_n(q)>\lambda$.
 \item $\forall i\in\set{1,2,\ldots,c}$, for all $n_2>n_1\geq N_0$, we have $\delta(J^i_{n_1},w^{n_2}_{n_1+1})\subseteq J^i_{n_2}$.
 \item $\mu^w_n(J^0_n)\rightarrow_{n\rightarrow\infty}0$
 \item Each jet $J^i,\ i\in\set{1,2,\ldots,c}$ is mixing.
\end{compactitem}
Without loss on generality, since $Q$ is finite, taking $N_0$ large enough, we can assume that the vector of sets of states $(J_{N_0}^0,...,J_{N_0}^c)$ appears infinitely often in the sequence $\lbrace(J_{n}^0,...,J_{n}^c)\rbrace_{n\in\mathbb{N}}$. As well, without loss on generality, we can assume that for all $n\geq N_0$ and all $i\in\set{1,2,\ldots,c}$, all the states in $J^i_n$ appear infinitely often among the sets $J^i_m$, for $m\geq N_0$. Let $i\in\set{1,2,\ldots,c}$. Given $q\in Q$, let 
\[\Phi_q=\lbrace r\in\Omega\ |\ q\in\mathrm{Inf}(r)\ \mathrm{and}\ p(q)=\min_{q'\in\mathrm{Inf}(r)}p(q')\rbrace\]
Clearly, for all $q\in Q$, $\Phi_q\in\mathcal{F}_\infty$. Since $Q$ is finite, there exists $q_i\in Q$ such that $\mathbb{P}(\tau^i_\infty\cap \Phi_{q_i})>0$. By Proposition \ref{p-tau atomics}, $\tau_\infty^i$ is atomic, hence $\tau_\infty^i\subseteq \Phi_{q_i}$. Since the runs of the process satisfy the parity condition with probability one, $p(q_i)$ must be even. Moreover, for all $n\geq N_0$ and all $q\in J^i_n$, we must have $p(q)\geq p(q_i)$. Indeed, such a $q$ appears an infinite number of times in the sequence $J^i_n$, by hypothesis, and always with probability at least $\lambda$. 

Since $\tau_\infty^i\subseteq \Phi_{q_i}$, there exists $m_i\in\mathbb{N}$ such that for all $q\in J_i^{N_0}$, there exists $m<m_i$ such that $\delta(q,w[N_0+1..m])(q_i)>0$. We define $m=\max_{i\in\set{1,2,\ldots,c}}\ m_i$, and $m'\geq m$ such that 
\[(J_{N_0}^0,...,J_{N_0}^c)=(J_{N_0+m'}^0,...,J_{N_0+m'}^c)\]
Taking $\rho_1\!\!=\!\!w[0..N_0]$ and $\rho_2\!=\!w[N_0+1..N_0+m']$ completes the proof. Indeed, when starting from the initial distribution, after reading $\rho_1$, we arrive by construction in one of the sets $J^i_{N_0}$, with $i\in\set{0,\ldots,c}$. Starting from this state $q$, if the word $\rho_2$ is taken as input, we go to set $J^i_{N_0+m'}$ with probability one, visit $q_i$ with positive probability, and do not visit any state with probability smaller that $p(q_i)$. This implies that when starting from $q$ and reading $\rho_2^\omega$, we visit $q_i$ with probability one, hence the result.

Now, we argue the PSPACE upper and lower bounds. 

\smallskip\noindent{\bf PSPACE upper bound.}
First, we show that we can verify the second property in NPSPACE, hence in PSPACE. The proof is in two steps. In a first step, we show that we can decide in NPSPACE whether, given $G\subseteq Q$, there exists $\rho_1\in\Sigma^*$ such that $G=\delta(\alpha,\rho_1)$. For this notice that, given $G\subseteq Q$, if there exists $\rho_1\in\Sigma^*$ such that $G=\delta(\alpha,\rho_1)$, then there exists $\rho_1'\in\Sigma^*$ such that $G=\delta(\alpha,\rho_1')$ and $|\rho_1'|\leq2^{|Q|}$. Thus, we can restrict the search to words $\rho_1$ of length at most $2^{|Q|}$. By guessing the letters $a_1,a_2,\ldots$ of $\rho_1$ one by one, and by keeping in memory the set $A_i=\delta(\alpha,a_1,\ldots,a_i)$ at each step, we can check at each step whether $A_i=G$, and thus we can decide whether there exists such a $\rho_1$ in NPSPACE.

In a second step, we show that, given $G\subseteq Q$, we can decide in NPSPACE whether there exists $\rho_2\in\Sigma^*$ such that the runs on the periodic non-homogeneous Markov chain induced by $(G,\rho_2)$ satisfy $\Phi$ with probability one. For this, we refine the previous argument. Notice that this is equivalent to find $\rho_2=a_1,\ldots,a_k\in\Sigma^*$ and $A,B\subseteq Q$ such that:
\begin{compactitem}
 \item $\rho_2$ has length at most $2^{2\cdot|Q|}$
 \item $\delta(G,\rho_2)\subseteq G$
 \item $A,B$ partition $G$
 \item $A$ is the set of recurrent states for the homogeneous Markov chain induced by $\rho_2$ on $G$
 \item $B$ is the set of transient states for the homogeneous Markov chain induced by $\rho_2$ on $G$
 \item For all $q_0\in A$, for all the finite runs $q_0,a_1,q_1,a_2,q_2,\ldots,a_k$ generated with positive probability when initiated on $q$ and when reading $\rho_2$, the minimal value of the $p(q_i),\ i\in\set{0,k-1}$ is even.
\end{compactitem}
This can be checked in NPSPACE. Indeed, we can guess $A,B$, and the letters of $\rho_2$ one by one, and at each step keep in memory the following sets:
\begin{compactitem}
 \item The set of states visited at time $i$, i.e. $E_i=\delta(A\cup B,a_1,\ldots,a_i)$
 \item For all $q\in A$ and all $q'\in\delta(q,a_1,\ldots,a_i)$, the minimal $p$ value of the paths visited between $q$ and $q'$. Notice that this set has size at most $|Q|$.
 \item For all $q\in A\cup B$ and all $q'\in\delta(A\cup B,a_1,\ldots,a_i)$, a boolean value $v_i(q,q')$ which is equal to one if there exists a path between $q$ and $q'$ between the first step and step $i$, and which is null if not.
\end{compactitem}
At the end, we just have to check that $E_k=G$, that the minimal $p$-values of all the paths issued from $A$ is even, that the set of states in $A$ are recurrent for the chain, and that the states in $B$ are transient. This can be done easily since we can recover the graph of the Markov chain on $G$ from the values given by $v_{|\rho_2|}$.


\smallskip\noindent{\bf PSPACE lower bound.}
We prove now that the simple almost B\"uchi problem is PSPACE-hard. For this, we reduce the problem of checking the emptiness of a finite intersection of regular languages, which is known to be PSPACE complete by \cite{kozen1977lbn}, to the \emph{simple almost B\"uchi problem}, which is a particular case of the simple almost parity problem. The size of the input of Problem \ref{p-inter reg lang} is the sum of the number of states of the automata.

\begin{problem}[Finite Intersection of Regular Languages]\label{p-inter reg lang}$\ $\\
Input: $\mathcal{A}_1,...,\mathcal{A}_l$ a family of regular deterministic automata (on finite words) on the same finite alphabet $\Sigma$. \\
Question: Do we have $\mathcal{L}(\mathcal{A}_1)\cap...\cap\mathcal{L}(\mathcal{A}_l)=\emptyset$ ?
\end{problem}

Let $\mathcal{A}_1,...,\mathcal{A}_l$ be a family of regular automata on the same finite alphabet $\Sigma$, with respective state spaces $Q_i$ and transition functions $\delta_i$ (where $\delta_i(s,a)(t)=1$ if there exist a transition from $s$ to $t$ with label $a\in\Sigma$ in $\mathcal{A}_i$). We build a probabilistic automaton $\mathcal{A}=(Q,\Sigma',\delta,\alpha,F)$ such that the simple almost $\Buchi(F)$ problem is satisfied on $\mathcal{A}$ iff $\mathcal{L}(\mathcal{A}_1)\cap...\cap\mathcal{L}(\mathcal{A}_l)\not=\emptyset$.

 Let $x$ be a new letter, not in $\Sigma$, and let $\Sigma'=\Sigma\cup\lbrace x\rbrace$. 
\begin{compactitem}
	\item $Q$ is the union of the state spaces of the $\mathcal{A}_i$, plus two extra states $s$ and $\perp$. That is $Q=\bigcup_{i=1}^l Q_i'\cup\lbrace s,\perp\rbrace$, where the $Q_i'$ are disjoint copies of the $Q_i$.
	\item The state $\perp$ is a sink: for all $a\in\Sigma'$, $\delta(\perp,a)(\perp)=1$.
	\item If $u'$ is the copy of a non accepting state $u$ of $\mathcal{A}_i$, we allow in $\mathcal{A}$ the same transitions from $u'$ as in $\mathcal{A}_i$ for $u$: if $a\in\Sigma$, let $\delta(u',a)(v')=1$ iff $v'$ is the copy of a state $v\in Q_i$ such that $\delta_i(u,a)(v)=1$. Moreover we add a transition from $u$ with label $x$: $\delta(u,x)(\perp)=1$.
	\item If $u'$ is the copy of an accepting state $u$ of $\mathcal{A}_i,i\in[1;l]$, the transitions from $u'$ in $\mathcal{A}$ are the same as in $\mathcal{A}_i$, plus an extra transition $\delta(u',x)(s)=1$.
	\item From state $s$ in $\mathcal{A}$, with uniform probability on $i\in[1;l]$, when reading $x$, the system goes to one of the copies of an initial state of the $\mathcal{A}_i$'s. 
	\item For the transitions which have not been precised, for instance if $a\in\Sigma$ is read in state $s$, the system goes with probability one to the sink $\perp$.
	\item The initial distribution $\alpha$ is the Dirac distribution on $s$.
	\item $F=\lbrace s\rbrace$
\end{compactitem}

Given $\rho\in\mathcal{L}(\mathcal{A}_1)\cap...\cap\mathcal{L}(\mathcal{A}_l)$, the input word $(x\cdot\rho\cdot x)^\omega$ satisfies clearly the simple almost $\Buchi(F)$ problem since a run visits $s$ after each occurrence of $x\cdot\rho\cdot x$ (the generated process is simple since we see in Section \ref{s-simple-words} that any process generated on a probabilistic automaton by a lasso shape word is simple).

Conversely, suppose that there exists $\rho\in\Sigma^\omega$ such that the induced process is simple and satisfies almost surely the $\Buchi(F)$ condition.

\begin{compactitem}
 \item Since the only transition from $s$ which does not goes to the sink has label $x$, the word $\rho$ must start with letter $x$. 
 \item Since with probability one the runs induced by $\rho$ visit infinitely often $s$, the letter $x$ must appear infinitely often in $\rho$. Let $\rho=x\cdot\rho'\cdot x$ where $\rho'\in\Sigma$ is non empty and does not contain the letter $x$. After reading $x\cdot\rho'\cdot x$, since the process cannot be in the sink $\perp$ with positive probability, it has to be on $s$ with probability one. This implies that $\rho'\in\mathcal{L}(\mathcal{A}_1)\cap...\cap\mathcal{L}(\mathcal{A}_l)$, hence $\mathcal{L}(\mathcal{A}_1)\cap...\cap\mathcal{L}(\mathcal{A}_l)\not=\emptyset$.
\end{compactitem}

This concludes the proof of the PSPACE completeness of our problem. We give an example of the last reduction.

\begin{example}
 Consider the following regular automata $\mathcal{A}_1$ and $\mathcal{A}_2$, and the associated probabilistic automaton $\mathcal{A}$.
\begin{figure}[!ht]
\begin{center}
\begin{picture}(50,30)(20,0)
  \gasset{Nadjust=w,Nadjustdist=2,Nh=6,Nmr=1}
  \node[Nmarks=r](A)(55,0){$1$}
  \node[Nmarks=i](B)(30,0){$2$}
    \drawloop[loopdiam=5,loopangle=110,ELpos=60](A){$a$}
    \drawedge[curvedepth=0,ELside=l](A,B){$b$}
    \drawloop[loopdiam=5,loopangle=110,ELpos=60](B){$b$}
    \drawedge[curvedepth=5,ELside=l](B,A){$a$}

  \node[Nmarks=i](C)(25,20){$3$}
  \node[Nmarks=r](D)(45,20){$4$}
  \node(E)(65,20){$5$}
    \drawloop[loopdiam=5,loopangle=110,ELpos=60](C){$b$}
    \drawedge[curvedepth=0,ELside=l](C,D){$a$}
    \drawloop[loopdiam=5,loopangle=110,ELpos=60](D){$a$}
    \drawedge[curvedepth=0,ELside=l](D,E){$b$}
    \drawloop[loopdiam=5,loopangle=110,ELpos=60](E){$a,b$}
\end{picture}
\end{center}
\caption{Automata $\mathcal{A}_1$ and $\mathcal{A}_2$}
\end{figure}
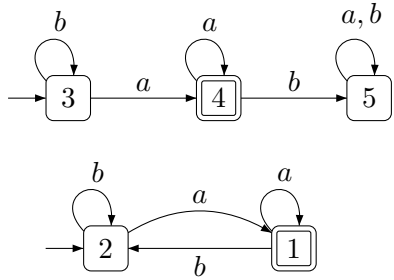


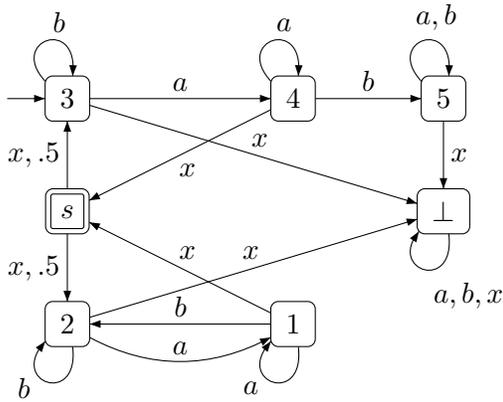
\begin{figure}[!ht]
\begin{center}
\begin{picture}(140,50)(20,-10)
  \gasset{Nadjust=w,Nadjustdist=2,Nh=6,Nmr=1}

  \node(A)(60,0){$1$}
  \node(B)(30,0){$2$}
    \drawloop[loopdiam=5,loopangle=-110,ELpos=60](A){$a$}
    \drawedge[curvedepth=0,ELside=r](A,B){$b$}
    \drawloop[loopdiam=5,loopangle=-110,ELpos=60](B){$b$}
    \drawedge[curvedepth=-5,ELside=l](B,A){$a$}

  \node[Nmarks=i](C)(30,30){$3$}
  \node(D)(60,30){$4$}
  \node(E)(80,30){$5$}
    \drawloop[loopdiam=5,loopangle=110,ELpos=60](C){$b$}
    \drawedge[curvedepth=0,ELside=l](C,D){$a$}
    \drawloop[loopdiam=5,loopangle=110,ELpos=60](D){$a$}
    \drawedge[curvedepth=0,ELside=l](D,E){$b$}
    \drawloop[loopdiam=5,loopangle=110,ELpos=60](E){$a,b$}

  \node[Nmarks=r,Nmarks=r](I)(30,15){$s$}
  \node(F)(80,15){$\perp$}

  \drawedge[curvedepth=0,ELside=r](I,B){$x,.5$}
  \drawedge[curvedepth=0,ELside=l](I,C){$x,.5$}
  \drawedge[curvedepth=0,ELside=r](A,I){$x$}
  \drawedge[curvedepth=0,ELside=l](D,I){$x$}
 
  \drawedge[curvedepth=0,ELside=l](B,F){$x$}
  \drawedge[curvedepth=0,ELside=l](C,F){$x$}
  \drawedge[curvedepth=0,ELside=l](E,F){$x$}

   \drawloop[loopdiam=5,loopangle=-110,ELpos=30](F){$a,b,x$}

\end{picture}
\end{center}
\caption{The probabilistic automaton $\mathcal{A}$}
\end{figure}
For instance, the word $b\cdot a\cdot a$ belongs to $\mathcal{L}(\mathcal{A}_1)\cap\mathcal{L}(\mathcal{A}_2)$. We get that on $\mathcal{A}$, the word $(x\cdot b\cdot a\cdot a\cdot x)^\omega$ satisfies the simple almost $\Buchi(\lbrace s\rbrace)$ problem.

\end{example}
This completes the details of the PSPACE upper and lower bound.

\end{proof}

\smallskip\noindent{\bf Details of Proposition \ref{p-dec simple positive}.} 
We prove that the simple positive parity problem is PSPACE-complete.

As for the proof of Proposition \ref{p-dec simple almost}, the proof is in three parts: first we present an equivalent formulation of the problem. 
Then we show that the equivalent formulation gives a problem which we can solve in PSPACE. 
Finally we give the PSPACE lower bound. We present only the first part of the proof in details, since the second and third parts are analogous to the proof 
of Proposition \ref{p-dec simple almost}.
 
\begin{proof}

As before, let $\Phi=\Parity(p)$ where $p:Q\rightarrow\mathbb{N}$ is a parity function. We follow a method analogous to the one for the simple almost parity problem: We prove that: \textbf{(1)} 
There exists $w\in\Sigma^\omega$ such that the induced process is simple and $\mathbb{P}^w_\mathcal{A}(\Phi)>0$ iff 
\textbf{(2)} There exists $G\subseteq Q$ and $\rho_1,\rho_2\in\Sigma^*$ such that $G\subseteq\mathrm{Supp}(\delta(\alpha,\rho_1))$, and $\delta(G,\rho_2)\subseteq G$, 
and the runs on the Markov chain induced by $(G,\rho_2)$ satisfy $\Phi$ with probability one.
That is, we reach $G$ with positive probability, and once we read $\rho_2$ from a state in $G$ we satisfy the condition almost surely.

The way \textbf{(2)}$\Rightarrow$\textbf{(1)} of the equivalence is direct. We prove that conversely, \textbf{(1)}$\Rightarrow$\textbf{(2)}. Suppose now that there exists such a $w=a_1,...,a_i,...$ . The induced processed is simple, so let $J^0,J^1,...,J^m$ be as given by Proposition \ref{p-separation simple jets}. As before, without loss on generality, since $Q$ is finite, we can also assume that the vector of sets $(J^{N_0}_0,...,J^{N_0}_c)$ appears infinitely often in the sequence $(J_{n}^0,...,J_{n}^c),\ n\in\mathbb{N}$. Moreover, we also assume that for all $n\geq N_0$, for all $i\in\set{1,2,\ldots,m}$, all the states in $J_i^n$ appears in a infinite number of the sets $J_i^m,m\geq N_0$.

Let $i\in\set{1,2,\ldots,c}$. As before, an ultimate property is either satisfied or unsatisfied with probability one by the runs $r\in\Omega$ such that $r(N_0)\in J^i_n$. Thus, we can define $q_i\in Q$ as the state with minimal value for $p$ among the states which are visited infinitely by runs in $\tau^\infty_i$ with probability one.

Since the runs of the process satisfy the parity condition with positive probability, there exists $i\in\set{1,2,\ldots,c}$ such that $p(q_i)$ is even. Moreover, for all $n\geq N_0$ and all $q\in J_i^n$, as in the previous case, we must have $p(q)\geq p(q_i)$. Finally, there exists $m_i\in\mathbb{N}$ such that for all $q\in J_i^{N_0}$, there exists $m<m_i$ such that $\delta(q,w[N_0..m])(q_i)>0$. We define $m'\geq m_i$ such that 
\[(J_{N_0}^0,...,J_{N_1}^c)=(J_{N_0+m'}^0,...,J_{N_0+m'}^c)\]
Taking $\rho_1=w[0..N_0]$ and $\rho_2=w[N_0..N_0+m']$ concludes the proof.

The PSPACE upper and lower bound proofs are analogous to the proof of Proposition \ref{p-dec simple almost}.
\end{proof}

\smallskip\noindent{\bf Details of Proposition \ref{p-undec limit buchi}.} 
We prove that the simple limit B\"uchi and coB\"uchi problems are undecidable.
\begin{proof}
 This is a direct consequence of the fact that the simple limit reachability problem is undecidable. 
The reduction from an instance of the simple limit reachability problem is direct: we only delete all outgoing 
transitions from the accepting states in $F$, and transform them into self loops for all label $a\in\Sigma$. 
We get a probabilistic automaton on which the simple limit B\"uchi and coB\"uchi problems are satisfied iff the simple limit reachability problem is satisfied.
\end{proof}

\smallskip\noindent{\bf Details of Proposition \ref{p-LS-ind-simple}.} 
We prove Proposition \ref{p-LS-ind-simple}.
\begin{proof}
 We just have to show that for any $\alpha\in\Delta(Q)$ and $\rho\in\Sigma^*$, the process induced by $\rho^\omega$ and $\alpha$ on $Q$ is simple. 
Let $\lbrace X_n\rbrace_{n\in\mathbb{N}}$ be the non-homogeneous Markov chain induced on $Q$ by $\alpha$ and $\rho^\omega$. Then for all $i\in\set{0,1,\ldots,|\rho|-1}$, 
the chain $\lbrace X_{n\cdot |\rho|+i}\rbrace_{n\in\mathbb{N}}$ is homogeneous. 
The result follows from the classical decomposition Theorem of the state space of an homogeneous Markov chain into periodic components of recursive classes, and transient states.
\end{proof}

\smallskip\noindent{\bf Details of Theorem \ref{thrm:lasso}.} 
We prove Theorem \ref{thrm:lasso}.

\begin{proof}
 By the results of Section \ref{s-decidable pb}, if the \emph{simple} almost or positive parity problem is satisfied, then it is satisfied by a lasso shape word. 
Along with Proposition \ref{p-LS-ind-simple}, this implies that the simple almost (resp. positive) parity problem is equivalent 
to the question whether there is lasso shape word that is accepted with 
probability~1 (resp. with positive probability). Since the simple almost parity problem and the positive parity problem are both PSPACE-complete, the theorem follows.
\end{proof}

\section{Details of Section \ref{s-simple autom}}

\subsection{Details of Sub-Section \ref{ss-SA-struct}}

\begin{comment}
\smallskip\noindent{\bf Details of Lemma \ref{l-cont cycle}.} 
We prove Lemma \ref{l-cont cycle}.
\begin{proof}
Notice first that by definition, every cycle in $\mathcal{H}_\mathcal{A}$ contains a cycle which is elementary. 
Thus, we just have to show that $seq$ contains a cycle. If $A_{k+1}=A_1$, then we are done. 
If $A_{k+1}\subsetneq A_1$, let $B_1$ be such that $A_{k+1}\stackrel{\rho_1}{\rightarrow}\stackrel{\rho_2^\#}{\rightarrow}\ldots\stackrel{\rho_k}{\rightarrow}B_1$. 
Then we have $B_1\subseteq A_{k+1}$. If $B_1=A_{k+1}$ then we are done, since $B_1\stackrel{\rho_1}{\rightarrow}\stackrel{\rho_2^\#}{\rightarrow}\ldots\stackrel{\rho_k}{\rightarrow}B_1$ 
is a subpath of $seq$ which is a cycle. If $B_1\subsetneq A_{k+1}$, then we continue the construction iteratively: for $i\geq1$, until we find $B_{i+1}$ such that $B_i=B_{i+1}$, 
we let $B_{i+1}$ be such that $B_{i}\stackrel{\rho_1}{\rightarrow}\stackrel{\rho_2^\#}{\rightarrow}\ldots\stackrel{\rho_k}{\rightarrow}B_{i+1}$. 
By construction at each step we have $B_i\not=\emptyset$, and $B_{i+1}\subseteq B_i$. Clearly, the construction has to stop after at most $|Q|$ 
steps, and we get a cycle $B_i\stackrel{\rho_1}{\rightarrow}\stackrel{\rho_2^\#}{\rightarrow}\ldots\stackrel{\rho_k}{\rightarrow}B_i$ which is a sub-path of $seq$.
\end{proof}
\end{comment}

\smallskip\noindent{\bf Details of Lemma \ref{l-borne mots rec}.}
We prove the key Lemma \ref{l-borne mots rec}.

\begin{proof}

Given $U\subseteq Q$ and $\rho\in\Sigma^*$, let

\[
\delta^{-1}(\rho)(U)=\lbrace q\in Q\ |\ \delta(q,\rho)(U)>0\rbrace
\]

The following remark will be useful: given $\rho=a_1,...,a_n\in\Sigma^*$, given $U\subseteq Q$ and $i\in\set{0,1,2,\ldots,n-1}$, let $S_i=\delta^{-1}(a_{i+1},...,a_n)(U)$. Then we have:
\begin{compactenum}
 \item For all $i\in\set{0,1,2,\ldots,n-1}$, $\delta(Q\setminus S_i,a_{i+1},...,a_n)\subseteq Q\setminus U$
 \item For all $i\in\set{0,1,2,\ldots,n-2}$, if $\delta(S_i,a_{i+1})\subseteq S_{i+1}$, then $\delta(\alpha,a_1,...,a_{i})(S_i)=\delta(\alpha,a_1,...,a_{i+1})(S_{i+1})$
 \item Given $i\in\set{0,1,2,\ldots,n-2}$, let $k_i$ be the number of integers $j\in[i;n-2]$ such that $\delta(S_j,a_{j+1})\not\subseteq S_{j+1}$. Then, for all $i\in\set{0,1,2,\ldots,n-2}$, 
\[\delta(\alpha,a_1,...,a_n)(U)\geq\delta(\alpha,a_1,...,a_{i})(S_i)\cdot\epsilon^{k_i}\]
\end{compactenum}

The only non trivial point is the last one. It follows from the fact that for all $\rho\in\Sigma^*$ and $q,q'\in Q$, if $\delta(q,\rho)(q')>0$, then by definition of $\epsilon$ we have $\delta(q,\rho)(q')>\epsilon^{|\rho|}$.

By contradiction, suppose that there exists $\rho\in\CRec(q)$ and $U\subseteq Q$ such that $U\subseteq\mathrm{Supp}(\delta(q,\rho))$ and

\[
\delta(q,\rho)(U)<\epsilon^{2^{2\cdot|Q|}} 
\]

We show that then we can write $\rho=\rho_1\cdot \rho_2\cdot \rho_3$ where $\rho_1,\rho_2,\rho_3$ are such that $\delta(q,\rho_1)$ can be partitioned into two subsets $A$ and $B$ such that $(A,B,\rho_2)$ is a 
$\#$-reduction. This contradicts the definition of $\CRec(\mathcal{A})$.

Let $\rho=a_1,...,a_{l}$. Given $i\in\set{0,1,2,\ldots,l-1}$, let:
\begin{compactitem}
 	\item $V^n_i=\delta^{-1}(a_{i+1},a_{i+2},...a_{l})(U)\ \cap\ \delta(q,a_1,\ldots,a_i)$
	\item $W^n_i=(Q\setminus V^n_i)\ \cap\ \delta(q,a_1,\ldots,a_i)$
\end{compactitem}

Using the third point of the previous remark, since $\delta(q,\rho)(U)<\epsilon^{2^{2\cdot|Q|}}$, there exists a least $k$ integers $i$ in $\set{1,2,\ldots,l-2}$ such that $\delta(V^n_i,a_{i+1})\not\subseteq V^n_{i+1}$, where $k$ satisfies $\epsilon^k<\epsilon^{2^{2\cdot|Q|}}$. Thus, $k\geq 2^{2\cdot|Q|}$. Let $n_1,...,n_{2^{2\cdot|Q|}}$ be the $2^{2\cdot|Q|}$ largest integers in $\set{1,2,\ldots,l}$ such that $\delta(V^n_i,a_{i+1})\not\subseteq V^n_{i+1}$.

By a simple cardinality argument, there exist $i<j$ in $\set{1,2,\ldots,2^{2\cdot|Q|}}$ such that $V^n_{n_i}=V^n_{n_j}$ and $W^n_{n_i}=W^n_{n_j}$. Let $\rho_1=a_1,\ldots,a_{n_i-1}$, $\rho_2=a_{n_i},\ldots,a_{n_j-1}$ and $\rho_3=a_{n_j},\ldots,a_n$. Then we are in the following situation:\\
\xymatrix{
&	&	q\ar[r]^{\rho_1}\ar[rd]^{\rho_1}	          &      W^n_{n_i}\ar[r]^{\rho_2}                & W^n_{n_i}\\
&	&	& V^n_{n_i}\ar[r]^{\rho_2}\ar[ur]^{\rho_2} 		   & V^n_{n_i}\\
}
$\ $\\
That is, $\delta(q,\rho_1)$ can be partitionned into two subsets $V^n_{n_i}$ and $W^n_{n_i}$ such that $\delta(W^n_{n_i},\rho_2)\subseteq W^n_{n_i}$, $\delta(V^n_{n_i},\rho_2)\subseteq V^n_{n_i}\cup W^n_{n_i}$, and $\delta(V^n_{n_i},\rho_2)\not\subseteq V^n_{n_i}$. This implies that there exists $A\subseteq V^n_{n_i}$ such that $(A,(V^n_{n_i}\setminus A)\cup W^n_{n_i},\rho_2)$ is a 
$\#$-reduction. Since $A$, $(V^n_{n_i}\setminus A)\cup W^n_{n_i}$ is a partition of $\delta(q,\rho_1)$, we get that the execution tree $(q,\rho)$ contains a 
$\#$-reduction. This is a contradiction since $\rho\in\CRec(q)$.
\end{proof}

\subsection{Extended support graph}\label{s-ext-supp}
In this sub-section, we introduce the necessary preliminary concepts before the formal definition of \emph{structurally simple probabilistic automata}. 
These concepts can be listed as:
\begin{itemize}
 \item A notion of \emph{Linked Graphs}, and how to associate a \emph{Linked Graph} to a word and an automaton. This is a technical way to represent transitions induced on 
a probabilistic automaton by an input word.
 \item A notion of $\#-\rho$ reachability, which generalizes the notion of $\#$-reachability of \cite{gimbert2010probabilistic}.
 \item A notion of \emph{extended support graph} of a probabilistic automaton, which generalizes the notion of \emph{support graph} of \cite{gimbert2010probabilistic}.
\end{itemize}

We start with the notion of \emph{linked graphs}. An example is given after the two following definitions to give more intuition. 
In the following of the sub-section, we call bipartite graph on $Q$ a subset of $Q\times Q$.

\begin{definition}[Linked Graph]
 Given $n\geq1$, a \emph{linked graph} of length $n$ on $Q$ is a sequence $\mathcal{I}=(G_1,\ldots,G_{n})$ of $n$ bipartite graphs: for all 
$i\in 1,\ldots,n$, the bipartite $G_i$ is a set of couples of states of $Q$, i.e. $G_i\subseteq Q\times Q$, such that for all $i\in1,..,n-1$ we have $B_i(\mathcal{I})=A_{i+1}(\mathcal{I})$, where:
\[\forall i\in1,..,n\ \ A_i(\mathcal{I})=\lbrace s\in Q\ |\ \exists t\in Q\ s.t.\ (s,t)\in G_i\rbrace;\ \mathrm{and}\ B_i(\mathcal{I})=\lbrace t\in Q\ |\ \exists s\in Q\ s.t.\ (s,t)\in G_{i}\rbrace\]
The set $A_1=\org(\mathcal{I})$ is called the \emph{origin} of $\mathcal{I}$, and the set 
$B_n=\dest(\mathcal{I})$ is called the \emph{destination} of $\mathcal{I}$.
\end{definition}

Given a linked graph $\mathcal{I}=(G_1,\ldots,G_{n})$, the \emph{compaction} of $\mathcal{I}$, written $\Comp(\mathcal{I})$, is the graph $(V,E)$ where $V\subseteq Q$ and $E\subseteq Q\times Q$, and such that for all $s,t\in Q$ 
we have $(s,t)\in E$ if $s\in A_1, t\in B_n$, and there exists a sequence of 
edges $(s_{1},t_{1})\in G_{1},(s_{2},t_{2})\in G_{2},\ldots,(s_{n},t_{n})\in G_{n}$ such that $s_{1}=s$ and $t_{n}=t$. 

Given an edge-oriented graph $\mathcal{G}=(V,E)$, a \emph{terminal component} is a set of states $A\subseteq V$ such that given any $s,t\in A$, there exists a path between $s$ and $t$ in $A$.

If $\dest(\mathcal{I})\subseteq \org(\mathcal{I})$, we define $\Rec(\mathcal{I})$ as the set of $t\in V$ which belong to a terminal component of  $\Comp(\mathcal{I})$.
Given $s\in \org(\mathcal{I})$, we define $\Rec(s,\mathcal{I})$ as the set of $t\in \Rec(\mathcal{I})$ which are reachable from $s$ in $\Comp(\mathcal{I})$.

\begin{definition}
 Let $n\geq1$, let $\mathcal{A}$ be a probabilistic automaton, and let $\rho=a_1,...,a_n\in\Sigma^n$. Given $A\subseteq Q$, 
we define inductively the linked graph $\LG(\rho,A,\mathcal{A})=(G_1,\ldots,G_n)$ 
induced by $\rho$ and $A$ on $\mathcal{A}$ as follows:
\begin{itemize}
 \item $G_1=\lbrace (s,t)\ |\ s\in A\ \mathrm{and}\ t\in Supp(\delta(s,a_1)))\rbrace$.
 \item For all $i\in1,\ldots n-1$, $G_{i+1}=\lbrace (s,t)\ |\ s\in \dest(G_1,...,G_i)\ \mathrm{and}\ t\in Supp(\delta(s,a_{i+1}))\rbrace$. 
\end{itemize}
\end{definition}

\begin{example}\label{e-graph-LG}
Consider the following automaton:

\xymatrix{
\mathcal{A}: &&		&	2\ar@(ur,ul)[]_{a:1,b:1}\\
&&1\ar@(dl,ul)[]^{a:.5}\ar[rr]^{a:.5}\ar[ru]_{b:1}	&&	3\ar@/^1pc/[ll]^{b:1}\ar@(ur,ul)[]_{a:1}
}

Let $\rho=a\cdot b\cdot a$. The state space of $\mathcal{A}$ is $Q=\lbrace1,2,3\rbrace$, and the linked graph $\mathcal{I}=\LG(\rho,\lbrace 1,2,3\rbrace,\mathcal{A})$ can be represented as:

\xymatrix{
A_1		&	B_1		&	A_2		&	B_2			&	A_3			&	B_3\\
1\ar[ddr]\ar[r]	&	1\ar[r]		&	1\ar[rd]	&	1\ar[r]			&	1\ar[rdd]\ar[r]		&	1\\
2\ar[r]		&	2\ar[r]		&	2\ar[r]		&	2\ar[r]			&	2\ar[r]			&	2\\
3\ar[r]		&	3\ar[r]		&	3\ar[uur]	&	3\ar[r]			&	3\ar[r]			&	3\\
\\
}

Using a compact representation which avoids repetitions between $B_i$ and $A_{i+1}$, we describe the previous linked graph as:

\xymatrix{
1\ar[ddr]\ar[r]	&	1\ar[rd]	&	1\ar[rdd]\ar[r]		&	1\\
2\ar[r]		&	2\ar[r]		&	2\ar[r]			&	2\\
3\ar[r]		&	3\ar[uur]	&	3\ar[r]			&	3\\
\\
}

We have $\org(\mathcal{I})=\dest(\mathcal{I})=\lbrace1,2,3\rbrace$, and $\Comp(\mathcal{I})$ is the graph:

\xymatrix{
1\ar[r]\ar[dr]	&	1\\
2\ar[r]		&	2\\
3\ar[r]\ar[uur]	&	3\\
\\
}

In this case, we have $\Rec(\mathcal{I})=\lbrace 2\rbrace$, and for all $s\in\lbrace1,2,3\rbrace$ we have $\Rec(s,\mathcal{I})=\lbrace2\rbrace$.
\end{example}

\begin{definition}[Borders]
 A \emph{border} of a linked graph $\mathcal{I}$ of length $n$ is a couple of integers $b=(n_1,n_2)$ where $1\leq n_1<n_2\leq n$ and $B_{n_2}(\mathcal{I})\subseteq A_{n_1}(\mathcal{I})$.
\end{definition}



\begin{definition}[Action of a border on a linked graph]
Let $\mathcal{I}=(G_1,\ldots,G_n)$ be a linked graph on $Q$, and let $b=(n_1,n_2)$ be a border on $\mathcal{I}$. 
We define $\LG(b,\mathcal{I})=(G_1',\ldots,G_n')$, the linked graph induced by the action of $b$ on $\mathcal{I}$, as follows:
\begin{itemize}
 \item For all $i<n_1-1$ we let $G_i'=G_i$
 \item We define $G_{n_1-1}',G_{n_1}',...,G_n'$ inductively. First, let $\mathcal{J}=G_{n_1},...,G_{n_2}$. By hypothesis, we have $\dest(\mathcal{J})\subseteq \org(\mathcal{J})$. 
  \begin{itemize}
    \item Given $s,t\in Q$, let $(s,t)\in G'_{n_1-1}$ if $s\in \dest(G_1',...,G_{n_1-2}')$ and $t\in \Rec(s,\mathcal{J})$. 

    \item Suppose that we have defined $G_1',...,G_i'$ until $i\geq n_1-1$. Then we define 
$G_{i+1}'=\lbrace (s,t)\ |\ s\in  \dest(G_1',...,G_{i}')\ \mathrm{and}\ (s,t)\in G_{i+1}\rbrace$.
    
  \end{itemize}
\end{itemize}

\end{definition}

Intuitively, the action of a border $b=(n_1,n_2)$ on a Linked Graph corresponds to a $\#$-transition: we keep only the states of $G_{n_1}$ which are recurrent for 
the sub-Linked Graph placed between $n_1$ and $n_2$, and their successors.

\begin{example}
 Consider the same automaton as in example \ref{e-graph-LG}. As before, let $\rho=a\cdot b\cdot a$ and let $\mathcal{I}=\LG(\rho,\lbrace1,2,3\rbrace,\mathcal{A})$. 
Then $b=(1,2)$ is a border on $\mathcal{I}$ since $B_2(\mathcal{I})=\lbrace1,2,3\rbrace=A_1(\mathcal{I})$. Then the graph $\LG(b,\mathcal{I})$ is the following graph:

\xymatrix{
1\ar[ddr]	&			&	1\ar[rdd]\ar[r]		&	1\\
2\ar[r]		&	2\ar[r]		&	2\ar[r]			&	2\\
3\ar[r]		&	3\ar[uur]	&	3\ar[r]			&	3
}

\end{example}

\begin{definition}[Chain of borders]
Let $\mathcal{I}=(G_1,\ldots,G_n)$ be a linked graph on $Q$. We call \emph{chain of borders} a sequence 
$B=((n_1^1,n_2^1),(n_1^2,n_2^2),\ldots,(n_1^k,n_2^k))$ of borders such that 
for all $i\in1,\ldots k-1$ we have that $b_{i+1}$ is a border of $\LG(b_{i-1},\LG(b_{i-2},\ldots,\LG(b_1,\mathcal{I})\ldots))$. We define $\LG(B,\mathcal{I})$, the action of the sequence of borders  $B$ 
on $\mathcal{I}$, as the linked graph $\LG(B,\mathcal{I})=\LG(b_{k},\LG(b_{k-1},\ldots,\LG(b_1,\mathcal{I})\ldots))$.
\end{definition}

Intuitively, the action of a chain of borders on a Linked Graph corresponds to a finite sequence of imbricated $\#$-transitions.




In the following definition we consider graphs whose nodes are subsets of $Q$, and whose edges are labeled by bipartite graphs $I\subseteq Q\times Q$. Given $C,D\subseteq Q$ and $I\subseteq Q\times Q$, 
we write $(C,I,D)$ for an edge labeled by the graph $I$ between the 
vertice $C$ and $D$. Given $I\subseteq Q\times Q$, we define:
\[\leftc(I)=\lbrace s\in Q\ |\ \exists t\in Q\ s.t.\ (s,t)\in I\rbrace\ \mathrm{and}\ \rightc(I)=\lbrace t\in Q\ |\ \exists s\in Q\ s.t.\ (s,t)\in I\rbrace\]

\begin{definition}[Complete linked graphs]
 A \emph{complete linked graph} on $Q$ is a graph $\mathcal{G}=(V,E)$ where $V=\mathcal{P}(Q)$ and the edges of $\mathcal{G}$ are labeled by bipartite graphs on $Q$ such that for all edge $(A,I,B)\in E$, 
we have $A=\leftc(I)$ and $B=\rightc(I)$.
\end{definition}

A path $p=(A_1,I_1,A_2,...,I_{n-1},A_{n+1})$ on a complete linked graph naturally induces a linked graph $(I_1,...,I_{n})$.

\begin{definition}[$\#$-reachability]
Let $C,D\subseteq Q$. Then:
\begin{itemize}
 \item Given $\rho\in\Sigma^*$, we say that $D$ is $\#-\rho$-reachable from $C$, written $C\stackrel{\#-\rho}{\rightarrow}D$, if there 
exists a chain of borders $B$ on $\mathcal{I}=\LG(C,\rho,\mathcal{A})$ such that $D=\dest(\LG(B,\mathcal{I}))$. 
 \item We say that $D$ is $\#$-reachable from $C$, written $C\stackrel{\#}{\rightarrow}D$, if there exists $\rho\in\Sigma^*$ such that $C\stackrel{\#-\rho}{\rightarrow}D$. 

\end{itemize}

Let $C,D\subseteq Q$ and $I\subseteq Q$. Then:
\begin{itemize}
 \item Given $\rho\in\Sigma^*$, we say that $D$ is $\#-\rho-I$-reachable from $C$, written $C\stackrel{\#-\rho-I}{\rightarrow}D$, if there 
exists a chain of borders $B$ on $\mathcal{I}=\LG(C,\rho,\mathcal{A})$ and $i\in1,...,|\rho|$ such that given $\LG(B,\mathcal{I})=(G_1,...,G_n)$, 
we have $\Comp(G_1,...,G_i)=I$ and $\dest(G_1,...,G_i)=D$.
 \item We say that $D$ is $\#-I$-reachable from $C$, written $C\stackrel{\#-I}{\rightarrow}D$, if there exists $\rho\in\Sigma^*$ such that $C\stackrel{\#-\rho-I}{\rightarrow}D$. 
\end{itemize}

\end{definition}

\begin{example}[$\#$-reachability]\label{e-simple-autom}$\ $
\begin{multicols}{2}
    \begin{enumerate}[]
      \item 

Consider the following probabilistic automaton:\\
\xymatrix{
\mathcal{A}: &&		&	2\ar@(ur,ul)[]_{a:1,b:1}\\
&&1\ar@(dl,ul)[]^{a:.5}\ar[rr]^{a:.5}\ar[ru]_{b:1}	&&	3\ar[rr]^{b:.5}\ar@/^1pc/[ll]^{b:.5}\ar@(ur,ul)[]_{a:1}	&& 4\ar@(ur,ul)[]_{a,b:1}
}

      \item 
Let $\rho=a\cdot a\cdot a\cdot b\cdot a\cdot a\cdot b$. Let $\mathcal{I}=\LG(1,\rho)$, let $b_1=(1,2)$, $b_2=(5,6)$, and $b_3=(4,7)$. 
Then $B=(b_1,b_2,b_3)$ is a chain of borders for $\mathcal{I}$, and moreover we have $\dest(\LG(B,\mathcal{I}))=\lbrace 4\rbrace$. 
Thus $\lbrace4\rbrace$ is $\#$-reachable from $\lbrace1\rbrace$. Notice however that $\lbrace 4\rbrace$ is not 
reachable from $\lbrace1\rbrace$ in the support graph of $\mathcal{A}$.

Indeed, we have 
\begin{center}
$\mathcal{I}=(\lbrace (1,3),(1,1)\rbrace,
\lbrace (1,3),(1,1),(3,3)\rbrace,$\\
$\lbrace (1,2),(3,1),(3,4)\rbrace
\lbrace (2,2),(1,3),(1,1),(4,4)\rbrace,$\\
$\lbrace (1,1),(1,3),(3,3),(2,2),(4,4)\rbrace,$\\
$\lbrace (1,2),(2,2),(4,4),(3,1),(3,4)\rbrace)$
\end{center}

Now, if $b_1=(1,2)$, $b_2=(5,6)$, and $b_3=(4,7)$. Then $B=(b_1,b_2,b_3)$ is a chain of borders for $\mathcal{I}$, and moreover we have:
\begin{center}
$\LG(b_1,\mathcal{I})=(\lbrace (1,3)\rbrace,
\lbrace (3,3)\rbrace,$\\
$\lbrace (3,1),(3,4)\rbrace
\lbrace (1,3),(1,1),(4,4)\rbrace,$\\
$\lbrace (1,1),(1,3),(3,3),(4,4)\rbrace,$\\
$\lbrace (1,2),(4,4),(3,1),(3,4)\rbrace)$
\end{center}
Next
\begin{center}
$\LG((b_1,b_2),\mathcal{I})=(\lbrace (1,3)\rbrace,
\lbrace (3,3)\rbrace,$\\
$\lbrace (3,1),(3,4)\rbrace
\lbrace (1,3),(1,1),(4,4)\rbrace,$\\
$\lbrace (1,3),(3,3),(4,4)\rbrace,$\\
$\lbrace (4,4),(3,1),(3,4)\rbrace)$
\end{center}
And finally 
\begin{center}
$\LG((b_1,b_2,b_3),\mathcal{I})=(\lbrace (1,3)\rbrace,
\lbrace (3,3)\rbrace,$\\
$\lbrace (3,1),(3,4)\rbrace
\lbrace (4,4)\rbrace,$\\
$\lbrace (4,4)\rbrace,$\\
$\lbrace (3,4)\rbrace)$
\end{center}

Thus, $\dest(\LG(B,\mathcal{I}))=\lbrace 4\rbrace$, and $\lbrace4\rbrace$ is $\#$-reachable from $\lbrace1\rbrace$. Notice however that $\lbrace 4\rbrace$ is not 
reachable from $\lbrace1\rbrace$ in the support graph of $\mathcal{A}$.


    \end{enumerate}
  \end{multicols}
\end{example}

\begin{definition}[Extented support graph]
Let $\mathcal{A}$ be a probabilistic automaton. We define $\mathcal{H}_\mathcal{A}=(\mathcal{P}(Q),E)$, the \emph{extended support graph of $\mathcal{A}$}, such that 
given $C,D\subseteq Q$ and $I\subseteq Q\times Q$, we have $(C,I,D)\in E$ if $C\stackrel{\#-I}{\rightarrow}D$.
\end{definition}

Clearly $\mathcal{H}_\mathcal{A}$ is a complete linked graph.

\begin{proposition}\label{t-constr-ext-supp-graph}
 We can construct $\mathcal{H}_\mathcal{A}$ in EXPSPACE.
\end{proposition}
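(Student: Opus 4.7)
The graph $\mathcal{H}_\mathcal{A}$ has vertex set $\mathcal{P}(Q)$ and edge labels in $\mathcal{P}(Q\times Q)$, so it contains at most $2^{2|Q|+|Q|^2}$ labeled triples. Even writing it down requires exponential space, so the goal is to enumerate its edges by a saturation procedure that stays within this budget.

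The first step will be to give an inductive characterization of the edges of $\mathcal{H}_\mathcal{A}$ that replaces the definition's quantification over arbitrary words and chains of borders by local rules on triples $(C, I, D)\in\mathcal{P}(Q)\times\mathcal{P}(Q\times Q)\times\mathcal{P}(Q)$. I plan to use three rules: \textbf{(i)} a \emph{letter rule} that, for each $C\subseteq Q$ and $a\in\Sigma$, produces $(C, I_a^C, \rightc(I_a^C))$ with $I_a^C=\{(s,t) : s\in C,\, t\in\Supp(\delta(s,a))\}$; \textbf{(ii)} a \emph{composition rule} turning $(C,I_1,D)$ and $(D,I_2,E)$ into $(C, I_1\circ I_2, E)$; and \textbf{(iii)} a \emph{$\#$-step rule} that, from two composable edges $(C_0, I_0, C_1)$ and $(C_1, I_1, C_2)$ with $C_2\subseteq C_1$, produces $(C_0, I_0\circ I_1^{\#}, \rightc(I_1^{\#}))$, where $I_1^{\#}=\{(s,t)\in I_1 : t\in\Rec(s, I_1)\}$. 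This last rule models appending one further outermost border whose endpoints enclose the middle edge.

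Soundness will be checked rule by rule: (i) is a single-letter linked graph; (ii) corresponds to concatenating the two underlying words and merging the two chains of borders, whose positions fall on disjoint halves; (iii) corresponds to inserting one additional border at the endpoints of the middle edge, which is valid because $C_2\subseteq C_1$. Completeness will be proved by induction on the number $k$ of borders in a realization $(\rho, B)$ of a given triple, peeling off an innermost border at each step and invoking the induction hypothesis on the prefix, middle, and suffix sub-linked-graphs. The main technical obstacle will be the bookkeeping required to show that the modification of $G_{n_1-1}$ in the definition of $\LG(b,\cdot)$ is faithfully mirrored by the composition $I_0\circ I_1^{\#}$ in the summarized representation: the crux is that this composition automatically restricts the prefix to states that reach a recurrent target, thereby reproducing the intersection with $\dest(G_1',\ldots,G_{n_1-2}')$ in the border action.

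Once the inductive characterization is established, the algorithm is a standard saturation: initialize the set of derived triples with all outputs of rule (i), then repeatedly add every triple obtainable by one application of (ii) or (iii), until closure. The number of distinct triples is bounded by $2^{2|Q|+|Q|^2}$, so the saturation terminates in at most exponentially many rounds, and each round requires only polynomial-time operations on objects of polynomial size in $|Q|$. Storing the current set of triples dominates the space cost and yields the claimed EXPSPACE bound.
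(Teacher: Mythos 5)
Your plan is at bottom the same as the paper's: the paper also constructs $\mathcal{H}_\mathcal{A}$ by saturation over the at most exponentially many labeled triples $(C,I,D)$, seeding with the border-free edges (its $G_0$ plays the role of your rules (i) and (ii) combined) and then repeatedly applying a single border action to a path of already-derived edges until the increasing sequence $G_0\subseteq G_1\subseteq\cdots$ stabilizes after at most $|Q|^{|Q|}\cdot 2^{2|Q|}$ rounds; soundness is argued rule by rule and completeness by induction on the number of borders in a witnessing chain, exactly as you propose, and the EXPSPACE bound comes from the same counting of triples. Your letter-plus-composition seeding is if anything cleaner than the paper's $G_0$, which needs a separate length-bounding lemma to be effective.

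There are, however, two points where your rules fall short of the object being constructed. First, an edge $(C,I,D)$ of $\mathcal{H}_\mathcal{A}$ is witnessed by a chain of borders $B$ together with an \emph{arbitrary} cut position $i\in 1,\ldots,|\rho|$ at which $\Comp(G_1,\ldots,G_i)=I$ and $\dest(G_1,\ldots,G_i)=D$ are read off from $\LG(B,\mathcal{I})$; this position may lie strictly inside the span of a border already applied, where the graphs are the restrictions of the original ones to successors of the recurrent states. Your rule (iii) only ever emits the edge ending at the right endpoint of the border (and, as written, $I_1^{\#}$ keeps only pairs $(s,t)\in I_1$, i.e.\ one traversal of the loop, whereas the border action sends $s$ to every recurrent state reachable by iterating the loop), so the closure of your three rules computes in general a proper subgraph of $\mathcal{H}_\mathcal{A}$; the paper's iteration rule explicitly quantifies over the intermediate index to capture these mid-border triples. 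Second, your completeness induction should peel off the \emph{last} border $b_k$ of the chain rather than an innermost one: a chain of borders is applied sequentially, $b_{j+1}$ being a border of $\LG(b_1,\ldots,b_j,\mathcal{I})$, and two borders in a chain need not be nested, so the prefix/middle/suffix decomposition around an innermost border does not in general isolate the remaining borders into independent sub-linked-graphs. Both defects are repairable (add a variant of rule (iii) emitting the triples obtained by cutting the loop edge at an already-derived intermediate decomposition, and induct on $k$ via the last border), but as stated the inductive characterization is not complete for $\mathcal{H}_\mathcal{A}$ as defined.
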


We present the following Algorithm to compute a graph $G_N$ in EXPSPACE, 
and the following Proposition shows that $G_N=\mathcal{H}_\mathcal{A}$. First, we need a preliminary Lemma, which can be proved using a simple counting argument.

\begin{lemma}\label{l-reduction}
 Let $\mathcal{G}$ be a complete linked graph. Let $A\subseteq Q$, let $I\subseteq Q$, and let $p=(A_1,I_1,...,I_{n},A_{n+1})$ be a path on $\mathcal{G}$ such that $A_1=A$. 
Suppose that there exists a border $b$ of $(I_1,...,I_n)$ 
and $i\in1,...,n$ such that given $\LG(b,\mathcal{I})=(G_1,...,G_n)$, 
we have $\Comp(G_1,...,G_i)=I$. Then there exists a path $p=(A_1',I_1',...,I_{m}',A_{m+1}')$ on $\mathcal{G}$ of length at most $2^{|Q|}$, where $A_1'=A$, and a border $b$ of $(I_1',...,I_m')$ 
and $i\in1,...,m$ such that given $\LG(b,\mathcal{I})=(G_1',...,G_m')$, we have $\Comp(G_1',...,G_i')=I$.
\end{lemma}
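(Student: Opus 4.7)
The plan is to prove Lemma~\ref{l-reduction} by a pumping argument on the sequence of vertices of $p$ in the complete linked graph $\mathcal{G}$. Since the vertex set of $\mathcal{G}$ is $\mathcal{P}(Q)$, any path visiting more than $2^{|Q|}$ distinct vertices must revisit some vertex by the pigeonhole principle. The goal will be to contract such a repetition to produce a strictly shorter path, while exhibiting a new border $b'$ and a new index $i'$ so that the compaction of the first $i'$ graphs of $\LG(b',\mathcal{I}')$ still equals $I$; iterating this reduction until the length drops to at most $2^{|Q|}$ yields the lemma.

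First I would fix notation, writing $p=(A_1,I_1,A_2,\ldots,I_n,A_{n+1})$ with $\mathcal{I}=(I_1,\ldots,I_n)$, $b=(n_1,n_2)$, and $(G_1,\ldots,G_n)=\LG(b,\mathcal{I})$, with $\Comp(G_1,\ldots,G_i)=I$. Assume $n+1>2^{|Q|}$ and pick indices $j<k$ with $A_j=A_k$; among all such pairs, take one where no vertex strictly between $A_j$ and $A_k$ repeats (so that the loop is \emph{innermost}). The contracted path $p'$ is obtained by excising $(I_j,A_{j+1},\ldots,I_{k-1})$; this is again a path in $\mathcal{G}$ because $A_k=A_j=\leftc(I_k)$.

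I would then split into cases based on the position of $[j,k-1]$ relative to $[n_1-1,n_2]$ and the cut-off $i$. The easy cases are when $[j,k-1]$ lies entirely to the right of $\max(n_2,i)$, in which case the excision leaves $(G_1,\ldots,G_i)$ and the border $(n_1,n_2)$ untouched, and when $[j,k-1]$ lies entirely to the left of $\min(n_1-1,i)$, in which case the appropriate choice is the shifted border $(n_1-(k-j),n_2-(k-j))$ and shifted index $i-(k-j)$; the key observation here is that the sub-linked-graph $\mathcal{J}$ and the $\Rec$ operation depend only on $\Comp(\mathcal{J})$, which is preserved because $A_j=A_k$ means every path through $\mathcal{I}$ crossing both positions $j$ and $k$ has a shorter companion in the contracted graph producing the same source–target pair. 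For loops straddling exactly one of the markers $n_1-1$, $n_2$, $i$, I would shift only the affected endpoint(s), and for loops lying entirely inside the border interval $(n_1,n_2)$, I would keep $n_1$ fixed and replace $n_2$ by $n_2-(k-j)$; the border condition $B_{n_2'}\subseteq A_{n_1}$ persists because the vertex sets at the endpoints are unchanged, and the recurrence structure of $\mathcal{J}$ is preserved under contraction of an innermost loop.

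The main obstacle will be verifying that $\Rec(s,\mathcal{J})$, hence $G_{n_1-1}'$ and all subsequently filtered $G_\ell'$, are preserved under loop contraction when $[j,k-1]$ meets the border interval. The core invariance I would establish is that contracting an innermost repetition does not remove any $(s,t)$-path in $\Comp(\mathcal{J})$ — any such path passing through positions $j$ and $k$ can be split at the common vertex set $A_j=A_k$ into two sub-paths, and a single state at that cut point always exists because the contracted graph inherits the edges of both sub-paths. Once this invariance is in hand, the compaction property $\Comp(G_1',\ldots,G_{i'}')=I$ transfers from $p$ to $p'$, and the inductive reduction terminates at length at most $2^{|Q|}$.
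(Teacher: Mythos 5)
The paper never actually writes out a proof of this lemma---it is introduced only with the remark that it ``can be proved using a simple counting argument''---so your attempt has to stand on its own, and its central step fails. The pigeonhole on the vertices $A_1,\dots,A_{n+1}\in\mathcal{P}(Q)$ is the natural starting point for the $2^{|Q|}$ bound, and excision at a repeated vertex does yield a path in $\mathcal{G}$ (since $\mathrm{right}(I_{j-1})=A_j=A_k=\mathrm{left}(I_k)$). But the invariance you rest everything on, namely that contracting a segment between two occurrences of $A_j=A_k$ preserves $\mathrm{Comp}$ of the relevant sub-linked-graphs, is false. The excised segment $(I_j,\dots,I_{k-1})$ contributes to every compaction spanning it not the identity on $A_j$ but the relation $R=\mathrm{Comp}(I_j,\dots,I_{k-1})\subseteq A_j\times A_j$, and contraction replaces $R$ by the identity. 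Your justification---that every path crossing positions $j$ and $k$ has a companion through ``a single state at that cut point''---is exactly where this breaks: an original path enters position $j$ at some $u$ and leaves position $k$ at some $v$ with $(u,v)\in R$, and if $u\neq v$ there need be no single state through which a glued path can pass. Concretely, take $Q=\{1,2\}$, $A_j=A_k=Q$, and $R$ the swap $\{(1,2),(2,1)\}$: a source whose prefix reaches only state $1$ at position $j$ sees, after position $k$, the successors of $2$ in the original but the successors of $1$ in the contracted path, so $\mathrm{Comp}(G_1',\dots,G_{i'}')=I$ can fail. Note that this already kills the case you label ``easy'' (loop entirely to the left of $\min(n_1-1,i)$), since $\mathrm{Comp}(G_1',\dots,G_{i'}')$ still composes across the excised region; and when the loop meets the border window it additionally corrupts $\mathrm{Rec}(s,\mathcal{J})$, which is computed from the terminal components of $\mathrm{Comp}(\mathcal{J})$.

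To make a contraction argument sound you must pigeonhole on more than the vertex set: for instance on $A_\ell$ together with the accumulated relation $\mathrm{Comp}(I_1,\dots,I_\ell)$ (and, inside the border window, the relation accumulated since position $n_1$), contracting only when this entire state repeats, so that the excised segment genuinely acts as an identity on everything the construction later consults. That version of your case analysis goes through, but it yields a bound exponential in $|Q|^2$ rather than the stated $2^{|Q|}$; this still suffices for the EXPSPACE use the paper makes of the lemma, but it does not establish the bound as literally claimed, and recovering $2^{|Q|}$ would require a further idea beyond what you have written.
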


\begin{algorithm}[Computation of the extended support graph]\label{a-comp_SG}
Let $\mathcal{A}$ be a probabilistic automaton.

\begin{itemize}
  \item Let $G_0=(\mathcal{P}(Q),E_0)$ be the transitions labeled graph such that given $A,B\subseteq Q$ and $I\subseteq Q\times Q$ we have $(A,I,B)\in E_0$ if there 
exists $\rho\in\Sigma^*$ such that $\delta(A,\rho)=B$ and $\Comp(\LG(A,\rho,\mathcal{A}))=I$.
  \item Given $i\geq1$, let $G_i=(\mathcal{P}(Q),E_i)$ be the graph with transitions labeled by bipartite graphs such that given $A,B\subseteq Q$ and $I\subseteq Q\times Q$ we have $(A,I,B)\in E_i$ either 
if $(A,I,B)\in E_{i-1}$, or if there exists a path $\mathcal{J}=(H_1,H_2,...,H_n)$ in $G_{i-1}$ (hence also a linked graph), and a border $b$ of $\mathcal{J}$, 
	  such that $A=\org(\mathcal{J})$, and if $\LG(b,\mathcal{J})=(G_1,...,G_n)$ there exists $i\in1,...,n$ such that $B=\dest(G_1,...,G_i)$, and $\Comp(G_1,...,G_i)=I$. Using Lemma \ref{l-reduction}, 
we can decide this fact in EXPSPACE.
\end{itemize}
The graphs $G_0,...,$ all have vertice in $\mathcal{P}(S)$, and their edges are labeled by bipartite graphs. Moreover, the sequence of the $G_i$ is increasing (for the inclusion relation). 
The maximum number of different bipartite graphs between the 
subsets of $Q$ is bounden by $|Q|^{|Q|}\cdot2^{|Q|}$. As a consequence, the sequence 
of the $G_i$ must stabilize after some $N\leq |Q|^{|Q|}\cdot2^{2\cdot|Q|}$, and we can build $G_N$ in EXPSPACE.
\end{algorithm}

\begin{proposition}\label{l-algo-expaspace}
 Let $G_0,...,G_N$ be given by the Algorithm \ref{a-comp_SG}. Then we have $G_N=\mathcal{H}_\mathcal{A}$
\end{proposition}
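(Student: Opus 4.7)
The plan is to establish the equality $G_N = \mathcal{H}_\mathcal{A}$ by showing both inclusions, and then verify that the bound $N \leq |Q|^{|Q|}\cdot 2^{2\cdot|Q|}$ from the algorithm is actually reached.

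For soundness ($G_N \subseteq \mathcal{H}_\mathcal{A}$), I would induct on $i$, proving $G_i \subseteq \mathcal{H}_\mathcal{A}$. The base case is immediate: an edge $(A,I,B) \in G_0$ comes with a word $\rho$ such that $\delta(A,\rho)=B$ and $\Comp(\LG(A,\rho,\mathcal{A}))=I$, so the empty chain of borders together with $i=|\rho|$ witnesses $A \stackrel{\#-I}{\rightarrow} B$. For the inductive step, an edge added at stage $i$ arises from a path $\mathcal{J}=(H_1,\ldots,H_n)$ in $G_{i-1}$ and a border $b$ of $\mathcal{J}$. By the induction hypothesis, each $H_j$ is witnessed by some word $\rho_j$ and chain of borders $B_j$ (with a prefix cutoff). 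Concatenating the $\rho_j$ into $\rho = \rho_1\cdots\rho_n$ and taking the union of the chains $B_j$ with positions shifted by $|\rho_1|+\cdots+|\rho_{j-1}|$ produces a chain of borders $B^\star$ on $\LG(A,\rho,\mathcal{A})$ whose action reproduces $\mathcal{J}$ as a concatenation; appending $b$ (with the corresponding shift) to $B^\star$ then witnesses the new edge inside $\mathcal{H}_\mathcal{A}$.

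For completeness ($\mathcal{H}_\mathcal{A} \subseteq G_N$), I would induct on the length $k$ of the chain of borders witnessing an edge $(A,I,B)$. The case $k=0$ is exactly the definition of $G_0$. For the inductive step, let $B = (b_1,\ldots,b_k)$ be a witnessing chain on $\mathcal{I} = \LG(A,\rho,\mathcal{A})$ and let $\mathcal{I}' = \LG((b_1,\ldots,b_{k-1}),\mathcal{I})$. I decompose $\mathcal{I}'$ into consecutive blocks separated at every position $n_1^j-1$ and $n_2^j$ for $j \in \{1,\ldots,k-1\}$: between two cut points, a block is either untouched by any earlier border (and thus already an edge of $G_0$) or it coincides with the action of a strictly nested sub-chain on a sub-word, which by the induction hypothesis (applied to a shorter chain) yields an edge of $G_{k-1}$. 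This realizes $\mathcal{I}'$ as a path in $G_{k-1}$; applying $b_k$ (shifted to the corresponding position in this path) and taking the prefix up to the required position then places $(A,I,B)$ in $G_k$, and hence in $G_N$. Lemma~\ref{l-reduction} guarantees that this path can be chosen of length at most $2^{|Q|}$, matching the EXPSPACE search performed by the algorithm.

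The main obstacle is precisely the decomposition step in the completeness direction: one must show that the nested structure of an arbitrary chain of borders can be flattened level by level so as to agree with the algorithm's iteration, where each stage appends exactly one border on top of a path already assembled from edges of the previous stage. This requires carefully tracking how the action of a border is localized (only positions $\geq n_1-1$ are modified, and strictly, only the recurrence relation at position $n_1-1$ is nontrivial) and verifying that sub-chains nested inside the region $[n_1,n_2]$ of a later border can be re-indexed to apply to the sub-word without altering their effect. Once this decomposition is in place, the stabilization argument recalled in the statement of the algorithm finishes the proof.
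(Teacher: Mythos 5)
Your proposal is correct and follows essentially the same route as the paper: soundness by induction on the algorithm stage $i$ (an edge added at stage $i$ comes from a path in $G_{i-1}$ plus one border, which by the induction hypothesis and the definition of $\mathcal{H}_\mathcal{A}$ is again an edge of $\mathcal{H}_\mathcal{A}$), and completeness by induction on the number of borders in a witnessing chain, executing the borders one per stage. Your write-up is in fact more explicit than the paper's about the two points it glosses over — shifting and merging the witnessing chains when concatenating words in the soundness step, and flattening the nested borders into a path of $G_{k-1}$ in the completeness step — which is exactly where the real work lies.
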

\begin{proof}
 First, we prove by induction that for all $i\in0,..,N$, all $A,B\subseteq Q$ and $I\subseteq Q\times Q$, if $(A,I,B)$ is an edge of $G_i$, then $(A,I,B)$ is also an edge of $\mathcal{H}_\mathcal{A}$.
The case $i=0$ is trivial. Let $i\geq1$, let $A,B\subseteq Q$ and $I\subseteq Q\times Q$, and suppose that $(A,I,B)$ is an edge of $G_i$. If $(A,I,B)$ is an edge of $G_{i-1}$ we are 
done by induction hypothesis. If not, then there exists a path $\mathcal{J}=(H_1,H_2,...,H_n)$ in $G_{i-1}$ and a border $b$ of $\mathcal{J}$ 
	  such that $A=\org(\mathcal{J})$, and if $\LG(b,\mathcal{J})=(G_1,...,G_n)$ there exists $i\in1,...,n$ such that $B=\dest(G_1,...,G_i)$, and $\Comp(G_1,...,G_i)=I$. Now, by induction hypothesis, 
$\mathcal{J}$ is also a path in $\mathcal{H}_\mathcal{A}$, and by definition of $\mathcal{H}_\mathcal{A}$, the linked graph $\LG(b,\mathcal{J})$ is also in $\mathcal{H}_\mathcal{A}$, which implies 
that $(A,I,B)$ is an edge of $\mathcal{H}_\mathcal{A}$.

Conversely, suppose that $(C,I,D)$ is an edge of $\mathcal{H}_\mathcal{A}$. Let $\rho\in\Sigma^*$, $i\in1,...,|\rho|$ and let $B$ be the chain of borders on $\mathcal{I}=\LG(D,\rho,\mathcal{A})$ such that 
$A_i(\LG(B,\mathcal{I}))=D$ and if $\LG(B,\mathcal{I})=(G_1,...,G_{|\rho|})$ we have $\Comp(G_1,...,G_i)=I$. Then we can execute the border action on the set of graphs $G_0,...,G_k$ where $k$ is 
the number of borders in $B$, to get the edge $(C,I,D)$ in $G_k$. This proves the result.
\end{proof}

\subsection{Details of Sub-Section \ref{ss-SA-struct} - Continued}

\smallskip\noindent{\bf Details of Lemma \ref{l-ult pos impl simple}.} 
We prove Lemma \ref{l-ult pos impl simple}.

\begin{proof}
For all $n\in\mathbb{N}$, we let $A_n=\mathrm{Supp}(\delta(\alpha,w[1..n]))$ and $B_n=Q\setminus A_n$. 
By hypothesis, for all all $n\geq N$ and all $q\in A_n$, we have $\mu_n^w(q)=\delta(\alpha,w[1..n])(q)>\gamma$. 
Moreover, for all $n$ and all $q\in B_n$ we have $\mu^w_n(q)=0$. This shows that the process is simple.
\end{proof}

\smallskip\noindent{\bf Details of Lemma \ref{l-exec trees impl simple}.} 
We prove Lemma \ref{l-exec trees impl simple}.

\begin{proof}
Let $\rho\in\Sigma^\omega$, and let $\lbrace\mu^\rho_n\rbrace_{n\in\mathbb{N}}$ be the process induced on $Q$ by $\rho$. 
By hypothesis, let $(\alpha,\rho_1),\rho_1',(\alpha_2,\rho_2),\rho_2',...(\alpha_k,\rho_k)$ be a sub-sequence of recurrent execution 
trees of $(\alpha,\rho)$ of length at most $K$. That is, we have $\sum_{i=1}^{k-1}|\rho_i'|\leq K$. 
By definition, for all $i\in\set{1,\ldots k-1}$ we have $\rho_i\in\Sigma^*$ and $\rho_i'\in\Sigma^*$, and $\rho_k\in\Sigma^\omega$.
For all $i\in\set{1,\ldots,k-1}$, let $\alpha_i'=\delta(\alpha_i,\rho_i)$. We are in the following situation:
\[\alpha\stackrel{\rho_1}{\rightarrow}\alpha_1'\stackrel{\rho_1'}{\rightarrow}\alpha_2\stackrel{\rho_2}{\rightarrow}\alpha_2'\stackrel{\rho_2'}{\rightarrow}\alpha_3\ldots\stackrel{\rho_{k-1}'}{\rightarrow}\alpha_{k}\stackrel{\rho_k}{\rightarrow}\]
We know that: 
\begin{compactitem}
 \item For all $i\in\set{1,\ldots,k-2}$, the execution tree $(\alpha_i,\rho_i)$ is chain recurrent
 \item $(\alpha_{k},\rho_k)$ is chain recurrent

\end{compactitem}

We show that the process $\lbrace\mu^\rho_n\rbrace_{n\in\mathbb{N}}$ satisfies the hypothesis of Lemma \ref{l-ult pos impl simple}. 
As before, let $\epsilon=\epsilon(\mathcal{A})$ be the minimal non zero probability which appears among the values $\delta(q,a)(q')$ 
when $q,q'\in Q$ and $a\in\Sigma$. Let $\lambda=\epsilon^{2^{2\cdot|Q|}}$. By Lemma \ref{l-borne mots rec}, for all $q\in Q$, all 
$\rho'\in\CRec(q)$ and all $q'\in\mathrm{Supp}(\delta(q,\rho'))$, we have $\delta(q,\rho')(q')\geq\lambda$. We claim that for all 
$i\in\set{1,\ldots,k-1}$ and all $q\in\mathrm{Supp}(\alpha_i')$, we have $\alpha_i'(q)\geq(\mathrm{Min}_{q\in\mathrm{Supp}(\alpha)}\alpha(q))\cdot\lambda^i\cdot\epsilon^{K\cdot i}$. 
We prove this result by induction on $i$:
\begin{compactitem}
 \item The case $i=1$ follows from the use of Lemma \ref{l-borne mots rec} on the chain recurrent execution tree $(\alpha_1,\rho_1)$.
 \item Suppose the proposition true until $i\in\set{1,\ldots,k-2}$. Let $q'\in\mathrm{Supp}(\alpha_{i+1}')$. then there exists $q\in\mathrm{Supp}(\alpha_{i}')$ such that $\delta(q,\rho_i\cdot\rho_i')(q')>0$. Let $q''\in Q$ be such that $\delta(q,\rho_i)(q'')>0$, and $\delta(q'',\rho_i')(q')>0$. By the use of Lemma \ref{l-borne mots rec} on the chain recurrent execution tree $(\alpha_i,\rho_i)$, we know that $\delta(q,\rho_i)(q'')>\lambda$. By definition of $\epsilon$ and $K$, we have that $\delta(q'',\rho_i')(q')\geq\epsilon^{|\rho_i'|}$, hence $\delta(q'',\rho_i')(q')\geq\epsilon^{K}$. We have $\alpha_{i+1}'(q')\geq\alpha_i(q)\cdot\delta(q,\rho_i\cdot\rho_i')(q')$. Since by induction hypothesis we have that $\alpha_i(q)\geq(\mathrm{Min}_{q\in\mathrm{Supp}(\alpha)}\alpha(q))\cdot\lambda^i\cdot\epsilon^{K\cdot i}$, we get that $\alpha_{i+1}'(q')\geq(\mathrm{Min}_{q\in\mathrm{Supp}(\alpha)}\alpha(q))\cdot\lambda^{i+1}\cdot\epsilon^{K\cdot {i+1}}$, hence the result.
\end{compactitem}

Now, let $N=\sum_{i=1}^{k-1}(|\rho_i|+|\rho_i'|)$, and let $n\geq N$. Since $(\alpha_{k},\rho_k)$ is chain recurrent, 
we can apply the same method for the chain recurrent execution tree $(\alpha_k,\rho_k)$. 
As a conclusion, we see that the process $\lbrace\mu^\rho_n\rbrace_{n\in\mathbb{N}}$ satisfies the hypothesis of Lemma \ref{l-ult pos impl simple} 
with the parameters $N=\sum_{i=1}^{k-1}(|\rho_i|+|\rho_i'|)$ and $\gamma=(\mathrm{Min}_{q\in\mathrm{Supp}(\alpha)}\alpha(q))\cdot\lambda^{K}\cdot\epsilon^{K\cdot {K}}$. 
This proves the result.
\end{proof}

\begin{comment}
\smallskip\noindent{\bf Details of Lemma \ref{l-ssimple sharp reach chain rec}.} 
We prove Lemma \ref{l-ssimple sharp reach chain rec}.

\begin{proof}
Let $\rho=\rho_1\cdot\rho_2\cdot\ldots\cdot\rho_{2\cdot k-1}$ be the decomposition of $\rho$ into sub-words such that
\[A\stackrel{\rho_1}{\longrightarrow}\stackrel{\rho_2^\#}{\longrightarrow}\stackrel{\rho_3}{\longrightarrow}\stackrel{\rho_4^\#}{\longrightarrow}\ldots\stackrel{\rho_{2\cdot k-1}}{\longrightarrow}B\]
By Lemma \ref{l-cont cycle}, this path contains an elementary cycle:
\[C_1\stackrel{\rho_1}{\longrightarrow}C_2\stackrel{\rho_2^\#}{\longrightarrow}C_3\stackrel{\rho_3}{\longrightarrow}C_4\stackrel{\rho_4^\#}{\longrightarrow}\ldots\stackrel{\rho_{2\cdot k}^\#}{\longrightarrow}C_1\]
Since $\mathcal{A}$ is structurally simple, the cycle does not contain any $\#$-reduction, hence for all $i\in\set{1,\ldots,2\cdot k}$ which is even, we have $C_i=C_i\cdot\rho_i^\#$, and then also $C_i=C_i\cdot\rho_i$. By Lemma \ref{l-elem cycl chain rec}, this implies that $(C_1,\rho)$ is chain recurrent.
\end{proof}
\end{comment}

\smallskip\noindent{\bf Details of Lemma \ref{l-s simple impl exec trees}.} 
We prove Lemma \ref{l-s simple impl exec trees}.

\begin{proof}
We build iteratively the following sequences $\set{A_i}_{i\in\mathbb{N}}$, $\set{A_i'}_{i\in\mathbb{N}}$, $\set{B_i}_{i\in\mathbb{N}}$, $\set{\rho_i}_{i\in\mathbb{N}}$, 
$\set{\rho_i'}_{i\in\mathbb{N}}$, $\set{a_i}_{i\in\mathbb{N}}$, $\set{w_i}_{i\in\mathbb{N}}$:
\begin{compactitem}
 \item Let $A_1=\mathrm{Supp}(\alpha)$, and $B_1=\emptyset$.
 \item $\rho_1$ is the longest prefix of $w$ such that there exists $A_1''\subseteq Q$ such that 
$A_1\stackrel{\#\text{-}\rho_1}{\longrightarrow}A_1''$ and $A_1''\subseteq A_1$, and $A_1'$ is minimal among the set of $A_1''\subseteq Q$ such that 
$A_1\stackrel{\#\text{-}\rho_1}{\longrightarrow}A_1''$ and $A_1''\subseteq A_1$ (minimal for the inclusion. If there exists several possibilities for $A_1'$ we just pick one of them). 
If the set of valid words is not bounded, then we let $\rho_1=w$ and the construction stops. If there exists no $\rho_1$ prefix of $w$ such that there exists $A_1''\subseteq Q$ such that 
$A_1\stackrel{\#\text{-}\rho_1}{\longrightarrow}A_1''$ and $A_1''\subseteq A_1$, then we let $\rho_1=\epsilon$ be the empty word and we continue the construction.
 \item Let $w_1$ be such that $\rho_1\cdot w_1=w$. 
 \item If the execution tree $(A_1',w_1)$ is chain recurrent, we stop the construction. If not, let $\rho_1'$ of maximal length be such that $(A_1',\rho_1')$ is chain recurrent.
 \item Let $a_1\in\Sigma$ be the letter which follows $\rho_1\cdot\rho_1'$ in $w$. By construction, $(A_1',\rho_1'\cdot a_1)$ is not chain recurrent, hence we can decompose $\rho_1'\cdot a_1$ as $\rho_1'\cdot a_1=\rho_1''\cdot\rho_1'''$ and find $U_1,V_1$ a partition of $\delta(A_1',\rho_1'')$ such that $(U_1,V_1,\rho_1''')$ is a
 $\#$-reduction and $U_1\cup V_1=\delta(A_1',\rho_1'')=\delta(A_1',\rho_1'\cdot a_1)$. We let $A_2=V_1$. Remark that $A_1'\stackrel{\#\text{-}\rho_1'\cdot a_1}{\longrightarrow}A_2$. Since $A_1\stackrel{\#\text{-}\rho_1}{\longrightarrow}A_1'$, this implies that $A_1\stackrel{\#\text{-}\rho_1\cdot\rho_1'\cdot a_1}{\longrightarrow}A_2$. By definition of $A_1'$, we have $A_2\not=A_1$.
 \item Let $B_2=\delta(A_1,\rho_1\cdot\rho_1'\cdot a_1)\setminus A_2$.

 \item Let $i\geq1$. Suppose that we have constructed the sets $A_1,B_1,A_1',\ldots A_{i+1}, B_{i+1}$, and the sequence of finite words $\rho_1,w_1,\rho_1',a_1,\rho_1'',\ldots\rho_i,w_i,\rho_i',a_i,\rho_i''$. We continue the construction as follows:

\begin{compactitem}
	\item $\rho_{i+1}$ is the longest prefix of $w_i$ such that there exists $A_{i+1}''\subseteq Q$ such that 
$A_{i+1}\stackrel{\#\text{-}\rho_{i+1}}{\longrightarrow}A_{i+1}''$ and $A_{i+1}''\subseteq A_{i+1}$, and $A_{i+1}'$ is minimal among the set of $A_{i+1}''\subseteq Q$ such that 
$A_{i+1}\stackrel{\#\text{-}\rho_{i+1}}{\longrightarrow}A_{i+1}''$ and $A_{i+1}''\subseteq A_{i+1}$ (minimal for the inclusion. If there exists several possibilities for $A_{i+1}'$ we just pick one of them). 
If the set of available words is not bounded, then we let $\rho_{i+1}=w_i$ and the construction stops. If there exists no $\rho_{i+1}$ prefix of $w_i$ such that there exists $A_{i+1}''\subseteq Q$ such that 
$A_{i+1}\stackrel{\#\text{-}\rho_1}{\longrightarrow}A_{i+1}''$ and $A_{i+1}''\subseteq A_{i+1}$, then we let $\rho_{i+1}=\epsilon$ be the empty word and we continue the construction.
	\item Let $w_{i+1}$ be such that $\rho_1\cdot\rho_1'\cdot\rho_1''\cdot...\cdot\rho_{i+1}\cdot w_{i+1}=w$.
        \item If the execution tree $(A_{i+1}',w_{i+1})$ is chain recurrent, we stop the construction. If not, let $\rho_{i+1}'$ of maximal length be such that $(A_{i+1}',\rho_{i+1}')$ is chain recurrent.
	 \item Let $a_{i+1}\in\Sigma$ be the letter which follows $\rho_1\cdot\rho_1'\cdot\ldots\cdot\rho_{i+1}\cdot\rho_{i+1}'$ in $w$. By construction, $(A_{i+1}',\rho_{i+1}'\cdot a_{i+1})$ is not chain recurrent, hence we can decompose $\rho_{i+1}'\cdot a_{i+1}$ as $\rho_{i+1}'\cdot a_{i+1}=\rho_{i+1}''\cdot\rho_{i+1}'''\cdot\rho_{i+1}''''$ and find $U_{i+1},V_{i+1}$ a partition of $\delta(A_{i+1}',\rho_{i+1}'')$ such that $(U_{i+1},V_{i+1},\rho_{i+1}''')$ 
is a $\#$-reduction and $U_{i+1}\cup V_{i+1}=\delta(A_{i+1}',\rho_{i+1}'')=\delta(A_{i+1}',\rho_{i+1}'\cdot a_{i+1})$. We let $A_{i+2}=V_{i+1}$. Remark that $A_{i+1}'\stackrel{\#\text{-}\rho_{i+1}'\cdot a_{i+1}}{\longrightarrow}A_{i+2}$. Since by induction we have $A_1\stackrel{\#\text{-}\rho_1}{\longrightarrow}A_{i+1}$ and since by hypothesis $A_{i+1}\stackrel{\#\text{-}\rho_{i+1}}{\longrightarrow}A_{i+1}'$ , this implies that $A_1\stackrel{\#\text{-}\rho_1\cdot\rho_1'\cdot a_1\cdot\ldots\cdot\rho_{i+1}'\cdot a_{i+1}}{\longrightarrow}A_{i+2}$. By definition of $A_{i+1}'$, we have $A_{i+2}\not=A_{i+1}$, and by induction we have $A_{i+2}\not= A_j$ for all $j\leq i+1$.
	\item Let $B_{i+2}=\delta(A_1,\rho_1\cdot\rho_1'\cdot a_1\cdot\ldots\cdot\rho_{i+1}\cdot\rho_{i+1}'\cdot a_{i+1})\setminus A_{i+1}$.

\end{compactitem}

\end{compactitem}

Since there exists at most $2^{|Q|}$ different subsets $A_i$ of $Q$, the construction stops after at most $2^{|Q|}$ steps. We get a sequence:
\[A_1 \stackrel{\rho_1}{\longrightarrow} \stackrel{\rho_1'}{\longrightarrow} \stackrel{a_1}{\rightarrow} \ldots \stackrel{\rho_i}{\longrightarrow}
    \stackrel{\rho_i'}{\longrightarrow} \stackrel{a_i}{\rightarrow} A_{i+1}\stackrel{\rho_{i+1}}{\longrightarrow} \]
Where $\rho_{i+1}\in\Sigma^\omega$. Moreover, we now by construction that $\rho_{i+1}=w_i$, since by hypothesis the set of prefixes $\rho_{i+1}$ of $w_i$ such that there exists $A_{i+1}'\subseteq Q$ such that 
$A_{i+1}\stackrel{\#\text{-}\rho_{i+1}}{\longrightarrow}A_{i+1}'$ and $A_{i+1}'\subseteq A_{i+1}$ is not bounded. We can use the fact that $\mathcal{A}$ is structurally simple iteratively to show that this imply that there 
exists $C\subseteq A_{i+1}$ such that $(C,\rho_{i+1})$ is chain recurrent. Indeed, since $\mathcal{A}$ is structurally simple, to any finite length prefix $pref$ of 
$w_i$ such that there exists $A_{i+1}'\subseteq Q$ such that 
$A_{i+1}\stackrel{\#\text{-}pref}{\longrightarrow}A_{i+1}'$ and $A_{i+1}'\subseteq A_{i+1}$, we can associate $C_{pref}\subseteq A_{i+1}$ such that $(A_{i+1},pref)$ is chain recurrent. 
Taking $C\subseteq A_{i+1}$ which appears infinitely often among the $C_{pref}$ concludes the point.

For all $i$, $A_i\stackrel{\rho_i}{\rightarrow}A_i'$ is such that we can find $B_i\subseteq A_i$ such that 
$(B_i,\rho_i)$ is chain recurrent. Since for all $i$ we have that $A_i'\stackrel{\rho_i'}{\rightarrow}A_{i+1}$ is chain recurrent by construction, 
we get a sub-sequence of recurrent execution trees of $(\alpha,w)$ of length at most $2^{|Q|}$ (only the sub-sequences which correspond to arrows 
$\stackrel{a_i}{\rightarrow}$ may not contain a chain recurrent sub-sequence).
\end{proof}

\subsection{Details of Sub-Section \ref{ss-dec-pbs-ssPA}}

\smallskip\noindent{\bf Details of Proposition \ref{p-simple-equiv-reach}.} 
We prove Proposition \ref{p-simple-equiv-reach}.

\begin{proof}
We first show that \textbf{(1)}$\Rightarrow$\textbf{(2)}. 

 Let $\mathcal{A}$ be a PA, let $\rho\in\Sigma^*$, let $C\subseteq Q$, and let $\mathcal{H}_\mathcal{A}$ be the extended support graph of $\mathcal{A}$. 
Suppose $C\stackrel{\#-\rho}{\rightarrow}D$.
Let $B=(b_1,...,b_n)$ be a chain of borders and $i\in1,...,|\rho|$ be such that given $\LG(B,\LG(C,\rho,\mathcal{A}))=(G_1,...,G_n)$ we have $D=\dest(G_1,...,G_i)$. Given $j\in1,...,n$, let 
$\LG(b_1,...,b_j,\LG(C,\rho,\mathcal{A}))=(G_1^j,...,G_n^j)$. By induction we can show that for all $j\in1,...,n$ and all $i\in1,...,|\rho|$ we have that $\dest(G_1^j,...,G_i^j)$ 
is limit reachable from $C$.

We now show that \textbf{(2)}$\Rightarrow$\textbf{(1)}. 
Suppose that $F$ is limit reachable from $\alpha$ in $\mathcal{A}$. First, if there exists $\rho\in\Sigma^*$ such that $\mathrm{Supp}(\delta(\alpha,\rho))\subseteq F$, 
then by definition $F$ is reachable from $\mathrm{Supp}(\alpha)$ in $\mathcal{H}_\mathcal{A}$. 

Suppose now that $F$ is limit reachable from $\alpha$ in $\mathcal{A}$, but that for all $\rho\in\Sigma^*$ we have $\mathrm{Supp}(\delta(\alpha,\rho))\not\subseteq F$. 
We define the following probabilistic automaton $\mathcal{B}$ with state space $Q'$, alphabet $\Sigma'$ and transition function as follows:
\begin{compactitem}
 \item $Q'=Q\cup\lbrace \perp\rbrace$ where $\perp$ is a new state.
 \item $\Sigma'=\Sigma\cup\lbrace e\rbrace$ where $e$ is a new symbol.
 \item We keep the same transitions on $\mathcal{B}$ as in $\mathcal{A}$ when the labels are in $\Sigma$. 
 \item Given $q\in F$, we add an extra transition with label $e$ which leads to a state $q\in Q$ with probability $\alpha(q)$.
 \item Given $q\in Q\setminus F$, we add an extra transition with label $e$ which leads to state $\perp$ with probability one.
 \item From state $\perp$, given any $a\in\Sigma'$ we loop with probability one on $\perp$.
\end{compactitem}
We show that the automaton $\mathcal{B}$ is not simple. Since $F$ is limit reachable from $\alpha$, we can let $\lbrace\rho_n\rbrace_{n\in\mathbb{N}}$ 
be a sequence of finite words such that for all $n\in\mathbb{N}$ we have $\delta(\alpha,\rho_n)(F)>1-\dfrac{1}{2^n}$. We define $w\in\Sigma^\omega$ as:
\[w=\rho_1\cdot e\cdot\rho_2\cdot e\cdot \rho_3...\]
We claim that the process induced on the state space $Q'$ of $\mathcal{B}$ by $w$ is not simple. 
First, notice that at any time, if the current distribution of the process is $\beta\in\Delta(Q')$ and the letter $e$ is taken as input, 
then the probability to be in a state $q\in Q$ at the next step is equal to $\alpha(q)*\beta(F)$. As a consequence, the automaton $\mathcal{A}$ is not structurally simple.

Given $k\in\mathbb{N}$, let $\beta_k\in\Delta(Q')$ be the distribution on $Q'$ that we get after having read $\rho_1\cdot e\cdot\rho_2\cdot e\ldots \rho_k\cdot e$. 
By the choice of the $\rho_n$, for all $k$ we have $\beta_k(Q)>1/2$. 
By hypothesis, there exists $q,q'\in Q$ such that $q\in\mathrm{Supp}(\alpha)$ and such that for an infinite number of 
$n\in\mathbb{N}$ we have $q'\in\mathrm{Supp}(\delta(q,\rho_n))$ and $q'\not\in F$. Let $\gamma=\alpha(q)$. We have found a couple $q,q'\in Q$ such that:
\begin{compactitem}
 \item For all $k$ we have $\beta_k(q)>\gamma/2$
 \item Infinitely often, $\delta(q,\rho_{k})(q')>0$ and $\delta(\alpha,\rho_1\cdot e\ldots \cdot e\cdot\rho_{k})(q')<\dfrac{1}{2^k}$
\end{compactitem}

Such a couple $q,q'$ invalidates the Proposition \ref{p-separation simple jets} which holds for simple process. 
Indeed, by Proposition \ref{p-separation simple jets}, if infinitely often we have $\mu^w_n(q)>\gamma$, then there exists $N\in\mathbb{N}$ and $\gamma'>0$ 
such that for all $n_2>n_1\geq N$, if $\mu^w_{n_1}(q)>\gamma$ and $\delta(q,w^{n_2}_{n_1+1})(q')>0$, then $\mu^w_{n_2}(q')>\gamma'$. 
Thus, $\mathcal{B}$ is not simple. 

By definition, this implies that there exist $C\subseteq Q$, $D\subseteq D$ and $\rho\in\Sigma^*$ such that $C\stackrel{\#-\rho}{\rightarrow}D$ and $D$ is minimal, and $(D,\rho)$ is not chain recurrent. 
Since the automaton $\mathcal{A}$ is supposed to be structurally simple, $\rho$ must contain the letter $e$. Since $(D,\rho)$ is not chain recurrent, $D\not=\lbrace\perp\rbrace$, and in fact, since $D$ is minimal, we 
must have that $\perp\not\in D$. Let $B=(b_1,...,b_n)$ be the chain of borders such that $\dest(\LG(B,\LG(\rho,C,\mathcal{A})))=D$, let $\mathcal{I}_0=\LG(\rho,C,\mathcal{A})$, and for all 
$i\in1,...,n$ let $\mathcal{I}_i=\LG(b_i(\LG(b_{i-1}(...\LG(b_1(\LG(C,\rho,\mathcal{A})))...))))$. Let $k=|\rho|$, and for all $i\in1,...,n$ we write $\mathcal{I}_i=(G_0,...,G_k)$. 
Let $\rho=a_1,...,a_k$, and let $j_0$ be the largest integer in $1,...,k$ such that $a_{j_0}=e$. Then we have $A_{j_0}\subseteq F$. As a consequence, $F$ is reachable from $\Supp(\alpha)$ in $\mathcal{H}_\mathcal{A}$.

We get that in he new automaton, $\Supp(\alpha)\stackrel{\#}{\rightarrow}\Supp(\alpha)$, and more precisely that there exists a word $\rho\in\Sigma^*$ which contains the 
letter $e$ and which is such that $\Supp(\alpha)\stackrel{\#-\rho}{\rightarrow}\Supp(\alpha)$. This implies that $F$ is reachable from $\Supp(\alpha)$ in $\mathcal{H}_\mathcal{A}$.

\end{proof}

\smallskip\noindent{\bf Details of Theorem \ref{prop:simple-dec}.} 
We prove Theorem \ref{prop:simple-dec}.
\begin{proof}
 We can prove the following, using the same kind of arguments as in the previous proofs: 
a structurally simple PA $\mathcal{A}$ satisfies the qualitative limit parity problem iff there exists a set of states $A\subseteq Q$ such that:
\begin{compactitem}
 \item $A$ is limit reachable from $\mathrm{Supp}(\alpha)$
 \item There exists $\rho\in\Sigma^*$ of length at most $2^{|Q|}$ such that $A\cdot\rho\subseteq A$ and the parity condition is satisfied on the Markov chain induced by $A,\rho$.
\end{compactitem}
This condition can be checked in EXPSPACE, using Theorem \ref{t-limit-reach-PSPACE}.
\end{proof}

\smallskip\noindent{\bf Details of Theorem \ref{t-decid-ssimple}.} 
We prove Theorem \ref{t-decid-ssimple}.

We first need a preliminary Lemma:

\begin{lemma}\label{l-sous-reach}
Les $\mathcal{A}$ be a probabilistic automaton. Let $C,D\subseteq Q$, and suppose that $C\stackrel{\#}{\rightarrow}D$. Let $E\subseteq C$. Then there exists $F\subseteq D$ such that $E\stackrel{\#}{\rightarrow}F$.
\end{lemma}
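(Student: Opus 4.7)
The plan is to proceed by induction on $k$, the length of a chain of borders $B=(b_1,\ldots,b_k)$ on $\mathcal{I}=\LG(C,\rho,\mathcal{A})$ witnessing $C\stackrel{\#-\rho}{\rightarrow}D$. For the base case $k=0$ I simply have $D=\delta(C,\rho)$, and taking $F:=\delta(E,\rho)\subseteq D$ gives $E\stackrel{\#-\rho}{\rightarrow}F$ via the empty chain of borders.

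For the inductive step, I would focus on the first border $b_1=(n_1,n_2)$ and set $\sigma=\rho[n_1..n_2]$, $A=\delta(C,\rho[1..n_1-1])$. Validity of $b_1$ yields $\delta(A,\sigma)\subseteq A$, and the action of $b_1$ projects onto the reachable recurrent states $D_1:=\Rec(A,\LG(A,\sigma,\mathcal{A}))\subseteq A$; the residual borders $(b_2,\ldots,b_k)$ will then witness $D_1\stackrel{\#}{\rightarrow}D$ via a chain of length $k-1$ (after a normalization that processes any $b_i$ acting entirely on positions below $n_1-1$ independently of $b_1$). My plan is then (i) to show that $E$ can $\#$-reach some $F'\subseteq D_1$, (ii) to invoke the induction hypothesis to obtain $F\subseteq D$ with $F'\stackrel{\#}{\rightarrow}F$, and (iii) to compose the two $\#$-reachabilities by concatenating their words and chains of borders, with indices appropriately shifted.

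The heart of the argument is step (i). The set $A':=\delta(E,\rho[1..n_1-1])\subseteq A$ will typically not be $\sigma$-closed, but the sequence $\lbrace\delta(A',\sigma^n)\rbrace_{n\in\mathbb{N}}$ lies in the finite set $2^A$, so by pigeonhole I can find $i<j$ with $\delta(A',\sigma^i)=\delta(A',\sigma^j)$. Setting $A'':=\delta(A',\sigma^i)$ and $p:=j-i$ gives $\delta(A'',\sigma^p)=A''$, so I may apply a single border on the word $\sigma^p$ and obtain the set $F'$ of reachable recurrent states of $A''$ under $\sigma^p$-iteration; this witnesses $E\stackrel{\#}{\rightarrow}F'$ via the word $\rho[1..n_1-1]\cdot\sigma^{i+p}$. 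The crucial claim to verify is $F'\subseteq D_1$: for $q\in F'$ lying in a terminal SCC $S$ of the $\sigma^p$-graph on $A''$, and for any $q'\in\delta(q,\sigma^l)$, I pick $m$ so that $l+m$ is a multiple of $p$; then $\delta(q',\sigma^m)\subseteq\delta(q,\sigma^{l+m})\subseteq S$ (using $\sigma^p$-closure of $A''$ and terminality of $S$), and strong connectivity of $S$ under $\sigma^p$-steps gives a $\sigma$-path back to $q$, placing $q$ in a terminal SCC of the $\sigma$-graph on $A$ and hence in $D_1$.

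The hard part will be the bookkeeping: verifying cleanly that the residual chain of length $k-1$ genuinely witnesses $D_1\stackrel{\#}{\rightarrow}D$ (which requires the normalization ensuring $b_1$ is effectively the outermost border processed), and checking that the two $\#$-reachabilities $E\stackrel{\#}{\rightarrow}F'$ and $F'\stackrel{\#}{\rightarrow}F$ combine into a single chain of borders on a common word witnessing $E\stackrel{\#}{\rightarrow}F$.
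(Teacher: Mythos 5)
Your proposal diverges substantially from the paper's argument, which is a two-line monotonicity observation: take the \emph{same} word $\rho$ and the \emph{same} chain of borders $B$ that witness $C\stackrel{\#-\rho}{\rightarrow}D$, apply them to $\LG(E,\rho,\mathcal{A})$, and note that every set $\dest(G_1',\ldots,G_i')$ computed along the way is contained in the corresponding set computed for $C$. You instead induct on the length of the chain of borders and repair the loss of $\sigma$-closure for the restricted set by a pigeonhole/periodicity argument. To your credit, this confronts a point the paper glosses over (for $E\subseteq C$ the border condition $B_{n_2}(\mathcal{I})\subseteq A_{n_1}(\mathcal{I})$ need not survive restriction of the origin set), and your step~(i), producing $F'$ inside the recurrent part $D_1$ via a border on $\sigma^p$, looks sound.

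The gap is in the inductive decomposition itself. A chain of borders is applied \emph{sequentially}, each $b_{i+1}$ being a border of the already-modified linked graph, and a later border $b_j=(m_1,m_2)$ may satisfy $m_1\leq n_1-1<n_2\leq m_2$, i.e.\ be nested around $b_1$. Such a border acts on the portion of the linked graph below position $n_1$, which is built from $C$ (resp.\ from $E$), not from $D_1$ (resp.\ from $F'$); hence the residual borders $(b_2,\ldots,b_k)$ do not in general form a chain of length $k-1$ witnessing $D_1\stackrel{\#}{\rightarrow}D$ on a suffix word, and the induction hypothesis cannot be invoked. Your proposed ``normalization'' only covers borders lying entirely below $n_1-1$, not the nested ones, and you explicitly defer exactly this bookkeeping. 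Until the nested case is handled (or the induction is reorganized, e.g.\ by establishing directly, as the paper does, that the whole construction $\LG(B,\LG(\cdot,\rho,\mathcal{A}))$ is monotone in the origin set, after checking that each border remains applicable), the argument is incomplete. Note also that your step~(iii) silently uses transitivity of $\#$-reachability (concatenating words and shifting border indices); this is true and used implicitly elsewhere in the paper, but in your scheme it is an additional lemma that must be stated and proved rather than assumed.
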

\begin{proof}
 Let $\rho\in\Sigma^*$ and $B$ be a chain of reduction such that given $\LG(B,\LG(C,\rho,\mathcal{A}))=(G_1,...,G_n)$ there exists $i\in1,...,n$ such that $\dest(G_1,...,G_i)=D$. 
Then by  applying the same chain of border on $\LG(E,\rho,\mathcal{A})$ we get, if $\LG(B,\LG(E,\rho,\mathcal{A}))=(G_1',...,G_n')$, that $\dest(G_1',...,G_i')\subseteq D$. 
\end{proof}

We now give the proof of Theorem \ref{t-decid-ssimple}.
\begin{proof}
Let $\mathcal{A}$ be a probabilistic automaton. We say that a subset $C$ of $Q$ is \emph{minimal} if there exists no $D\subsetneq C$ such that $C\stackrel{\#}{\rightarrow}D$. Given $C\subseteq Q$ minimal, let
\[S(C)=\lbrace A\subseteq Q\ |\ \exists\rho\in\Sigma^*\ s.t.\ A\stackrel{\#-\rho}{\rightarrow}C\ and\ \Supp(\delta(A,\rho))\not=C\]
We claim that the following properties are equivalent:
\begin{enumerate}
 \item $\mathcal{A}$ is simple
 \item Given $C\subseteq Q$ which is minimal, for all $\rho\in\Sigma^*$, possibly empty, we have $\Supp(\delta(C,\rho))\not\in S(C)$
\end{enumerate}
This would prove the result since given $C$ we can decide in EXPSPACE whether $C$ is minimal, and we can compute $S(C)$ in EXPTIME using Algorithm \ref{a-comp_SG}.

We prove the claim. Suppose first $(1)$, and let $C\subseteq Q$ be minimal, and suppose that there exists $\rho\in\Sigma^*$ such that $D=\Supp(\delta(C,\rho))\in S(C)$. Then there would exist $\rho'\in\Sigma^*$ such 
that $D\stackrel{\#-\rho'}{\rightarrow}C\ and\ \Supp(\delta(D,\rho))\not=C$. We get that 
\[C\stackrel{\rho}{\rightarrow} D\stackrel{\#-\rho'}{\rightarrow}C\ and\ \Supp(\delta(D,\rho))\not=C\]
Since $C$ is minimal, by definition of a structurally simple automaton, we get that $C,\rho\cdot\rho'$ is chain recurrent. As a consequence, we should have $D,\rho'$ chain recurrent, 
which is a contradiction. Hence $(1)$ implies $(2)$.

Conversely, suppose $(2)$, and let $\rho\in\Sigma^*$, $C\subseteq Q$, and $D\subseteq C$ be minimal among the $D\subseteq Q$ such that $C\stackrel{\#-\rho}{\rightarrow}D$. Suppose that $D,\rho$ is 
not chain recurrent. By Lemma \ref{l-sous-reach}, and since $D$ is minimal, 
there exists $\rho\in\Sigma^*$ such that $D\stackrel{\#-\rho}{\rightarrow}D$. Thus, if $D,\rho$ is not chain recurrent we get $D\in S(D)$. This contradicts $(2)$. Hence we get 
that $(2)$ implies $(1)$.
\end{proof}

\subsection{Details of Sub-Section \ref{ss-clos-props}}

\smallskip\noindent{\bf Details of the product construction and Proposition \ref{p-product simple}.} 

Given $\mathcal{A}_1=(S_1,\Sigma,\delta_1,\alpha_1)$ and $\mathcal{A}_2=(S_2,\Sigma,\delta_2,\alpha_2)$ two structurally simple automata on the same alphabet $\Sigma$, the construction of the product automaton 
$\mathcal{A}_1\Join\mathcal{A}_2=(S,\Sigma,\delta,\alpha)$ is as follows:
\begin{compactitem}
 \item $S$ is the Cartesian product of $S_1$ and $S_2$: $S=S_1\times S_2$.
 \item Given $(s_1,s_2),(s_1',s_2')\in S$ and $a\in\Sigma$, $\delta((s_1,s_2),a)((s_1',s_2'))=\delta_1(s_1,a)(s_1')\cdot\delta_2(s_2,a)(s_2')$.
 \item Given $(s_1,s_2)\in S$, $\alpha((s_1,s_2))=\alpha_1(s_1)\cdot\alpha_2(s_2)$.
\end{compactitem}

Given $s=(s_1,s_2)\in S$, let $p_1(s)=s_1$ and $p_2(s)=s_2$ be the respective projections of $s$ on the state spaces of $\mathcal{A}_1$ and $\mathcal{A}_2$.

We now prove Proposition \ref{p-product simple}.

\begin{proof}

For this we just have to remark that given $\rho\in\Sigma^*$, given $A\subseteq S$, if we let $A_1=\lbrace s\in S_1\ |\ \exists t\in S_2\ s.t.\ (s,t)\in S\rbrace$ and 
$A_2=\lbrace t\in S_2\ |\ \exists s\in S_s\ s.t.\ (s,t)\in S\rbrace$, then we get $A\cdot\rho=(A_1\cdot\rho)\times(A_2\cdot\rho)$. Moreover, if $A\cdot\rho=A$, then 
we have that $A\cdot\rho^\#=(A_1\cdot\rho^\#)\times(A_2\cdot\rho^\#)$. This implies that the structural simplicity condition is satisfied on $\mathcal{A}_1\Join\mathcal{A}_2$ iff 
it is satisfied both on $\mathcal{A}_1$ and $\mathcal{A}_2$.

\end{proof}

\smallskip\noindent{\bf Details of Theorem \ref{t-class-robust}.} 
We prove Theorem \ref{t-class-robust}.

\begin{proof}
First, remark that the stability of the class of languages recognized by structurally simple parity automata under the positive semantics is trivial: 
we just consider a ``union automaton'' whose structure is the union of the structures of the two given automata, and whose initial distribution is a mix of the two given automata initial distributions.

We consider now the stability of this class of language under the intersection operator. Let $\mathcal{A}$ be a structurally simple parity automaton. 
By defining a relevant set of accepting sets, we can transform its accepting condition to transform it to a positive Street PA which recognizes the same language. 
Since we do not change the structure of the automaton nor its transition function, the new automaton is still structurally simple. 
Now, given two Street PA with the positive semantics, using a classical product construction, we can construct a Street PA which, under the positive semantics, 
accept a language which is the intersection of the languages of the two Street automata. 
By Proposition \ref{p-product simple}, this Street PA is still structurally simple. 
Finally, using a construction a la Safra, we can construct a parity PA which, under the positive semantics, recognizes the same language as the last Street PA. 
We can show that the construction a la Safra keeps the automaton structurally simple, since it can be seen as a product construction which does 
not add any probabilistic transition. We get the stability of the languages of positive parity PA under union and intersection.

Remark next that PAs with positive parity semantics and PAs with almost parity semantics are dual of each others: 
given a PA $\mathcal{A}$ with positive parity semantics, by inverting the parity condition (taking a new parity function $p'=p-1$), 
we get a new PA $\mathcal{A}'$ whose language is the complementary of $\mathcal{L}^{>0}(\mathcal{A})$:  $\mathcal{L}^{=1}(\mathcal{A}')=\mathcal{L}^{>0}(\mathcal{A})^c$. 
As a consequence, if the class of languages recognized by positive parity PAs is stable under intersection and complementation, so is the class of languages recognized by almost parity PAs.
\end{proof}

\subsection{Details of Sub-Section \ref{s-subclasses}}

\smallskip\noindent{\bf Details of Proposition \ref{p-extends_acycl}.} 
We prove Proposition \ref{p-extends_acycl}.

\begin{proof}

By contraposition, let $\mathcal{A}$ be an automaton which is not structurally simple. Let $C\subseteq Q$, $\rho\in\Sigma^*$, and $D\subseteq Q$ minimal such that 
$C\stackrel{\#-\rho}{\rightarrow}D$, and such that $D,\rho$ is not chain recurrent. Let $E=\cup_{i\geq0}\Supp(\delta(D,\rho^i))$. We have $E\stackrel{\rho}{\rightarrow}E$.
Let $\rho=\rho_1\cdot\rho_2\cdot\rho_3$ be a decomposition such that $(\Supp(\delta(D,\rho_1)),\rho_2)$ is a $\#$-reduction. Then we have that 
$(\Supp(\delta(E,\rho_1),\rho_2)$ is also a $\#$-reduction. As a consequence, we can build a cycle in the support graph of $\mathcal{A}$, and thus $\mathcal{A}$ is not $\#$-acyclic.

Notice that every deterministic automaton is structurally simple. This implies that some structurally simple automata (for instance deterministic automata whose 
support graph contains a cycle) are not $\#$-acyclic.


\end{proof}

\smallskip\noindent{\bf Details of Proposition \ref{p-extends_hierch}.} 
We prove Proposition \ref{p-extends_hierch}.

\begin{proof}
Let $\mathcal{A}$ be a $k$-hierarchical automata. Let $C=B_1\rightarrow B_2\rightarrow...\rightarrow B_l=B_1$ be an elementary cycle in the support graph of $\mathcal{A}$, i.e. 
a cycle which does not contain any sub-cycle. Let $i,j\in\set{1,\ldots,l}$, let $q\in B_i$, and let $q'\in B_j$. Then 
there exists $\rho\in\Sigma^*$ such that $\delta(q,\rho)(q')>0$, and there exists $\rho'\in\Sigma^*$ such that $\delta(q,',\rho')(q)>0$. 
This implies that $\mathrm{rk}(q)=\mathrm{rk}(q')$. This implies that there is not probabilistic transition in the cycle. 
As a consequence, $C$ can not contain a $\#$-reduction. This proves that $\mathcal{A}$ is structurally simple. The automaton of Example \ref{e-simple-autom} is structurally simple, 
but not hierarchical, which completes the proof.
\end{proof}

\end{document}